\newtheorem{thm}{Theorem}[section]
\newtheorem{lem}[thm]{Lemma}
\newtheorem{prp}[thm]{Proposition}
\newtheorem{dfn}[thm]{Definition}
\theoremstyle{definition}
\newtheorem{example}[thm]{Example}
\newtheorem{remark}[thm]{Remark}
\newtheorem{problem}[thm]{Problem}
\theoremstyle{plain}
\newcommand{\rem}[1]{}
\newcommand{\gap}[1]{{\color{blue} #1}}
\newcommand{\F}{\mathbb{F}}
\newcommand{\N}{\mathbb{N}}
\newcommand{\veps}{\varepsilon}
\newcommand{\vphi}{\varphi}
\newcommand{\suchthat}{\,:\,}
\newcommand{\where}{\,|\,}
\newcommand{\Circs}[1]{\left( #1 \right)}
\newcommand{\Squares}[1]{\left[ #1 \right]}
\newcommand{\ceil}[1]{\lceil {#1} \rceil}
\newcommand{\Ceil}[1]{\left\lceil {#1} \right\rceil}
\DeclareMathOperator{\Hom}{Hom} %
\DeclareMathOperator{\im}{im} %
\numberwithin{equation}{section} 
\DeclareMathOperator{\dist}{\delta} 
\DeclareMathOperator{\maj}{maj} 
\newcommand{\Prob}{{\mathrm{Pr}}}
\title[Good Locally Testable Codes]{Good Locally Testable Codes with Small Alphabet and Small Query Size}
\author{Uriya A.\ First$^*$}
\address{$^*$University of Haifa}
\email{uriya.first@gmail.com}
\author{Stav Lazarovici$^*$}
\email{stavlazar2002@gmail.com}
\begin{document}


\begin{abstract}
Ben-Sasson, Goldreich and Sudan 
\cite{BenSasson_2003_bounds_2query_codes} showed that a binary error correcting code  admitting a $2$-query tester cannot be good, i.e., it cannot have both linear distance and constant rate.
They also showed that there are no good codes if 
the alphabet is a finite field $\F$, the code is $\F$-linear, and the $2$-query tester is $\F$-linear. We show that those are essentially the only limitations on the existence of good locally testable codes (LTCs). That is,
there are good $2$-query LTCs on any alphabet with more than $2$ letters,
and  good $3$-query LTCs with a binary alphabet.
Similarly, there
are good $3$-query $\F$-linear LTCs, and
for every    $\F$-vector space $V$ of dimension greater than $1$,
there are good $2$-query LTCs with alphabet $V$ whose  tester is $\F$-linear.
This completely solves, for every $q\geq 2$ 
and alphabet (resp.\ $\F$-vector space) $\Sigma$, the question of whether there is a good $q$-query
LTC (resp.\ $\F$-LTC) with  alphabet $\Sigma$.
Our proof builds on      the recent   good   $2$-query $\F$-LTCs 
of the first author and Kaufman \cite{First_2024_cosyst_exp_posets_stoc}, 
by establishing a general method for reducing the alphabet size of a good low-query LTC.
\end{abstract}

\maketitle

\section{Introduction}

Throughout, $\Sigma$ is a (finite) alphabet and $\F$ is a finite field.
We let $C\subseteq \Sigma^n$ be an error correcting code,
which we think of as ranging in an infinite family of codes with block length tending
to infinity. The normalized Hamming distance in $\Sigma^n$ is denoted $\delta(\cdot,\cdot)$.
We recall other relevant coding theory terminology in Section~\ref{sec:prelim}.

Informally, the code $C\subseteq \Sigma^n$
is said to be a locally testable code (LTC)
if it
admits a randomized algorithm --- called a \emph{tester} --- that can estimate with high probability 
whether a word $w\in \Sigma^n$ is far from $C$ or not by reading only a constant number
of letters from $w$.
Formally, we require the tester for $C$ to read at most $q$ letters from $w$, 
accept all words in $C$,
and reject every $w\in \Sigma^n-C$ with probability at least $\mu\cdot \dist(w,C)$ for some $\mu>0$.
Here, $q$ and $\mu$ are  independent of $n$ and   $w$.
When we wish to specify the implicit constants $q$ and $\mu$,
we will    say that $C\subseteq \Sigma^n$
is a $q$-query LTC with soundness $\mu$.
Also, if not indicated otherwise, we   assume that any tester is \emph{non-adaptive}, i.e.,
it   decides which   positions  to read from the input word before   reading them.

The notion of an LTC arose in  the 1990s from 
the many works on the developing theories
of   \emph{property
testing} and \emph{probabilistically checkable proofs} (PCPs).
It first appeared in print in 
\cite[Dfn.~9]{Friedl_1995_total_degree_tests}
in a more lax version, in which the requirement
that $T$ rejects every $w\in \Sigma^n$ with probability
at least $\mu\cdot \dist(w,C)$ was imposed only for
words that are $\Omega(1)$-far from $C$. Such a codes 
are known as \emph{weak LTCs}.
The definition of LTCs which we use here, known as \emph{strong LTCs},  originates from \cite[Dfn.~2.2]{Goldreich_2006_LTCs_and_PCPs_of_almost_linear_length}, which
intitiated the systematic study of LTCs as  objects of independent interest.
See \cite{Goldreich_2011_locally_testable_codes_proofs} 
for a more detailed history of how LTCs
emerged.

Since their introduction, it was not clear whether there
are LTCs that are also \emph{good codes}, i.e., LTCs whose relative distance and rate
is bounded away from $0$; see \cite[Open Problem~2]{Goldreich_2011_locally_testable_codes_proofs}, for example. Indeed, Ben-Sasson, Goldreich and Sudan
\cite{BenSasson_2003_bounds_2query_codes} showed
that there are no good $2$-query LTCs on a binary alphabet,
and no $\F$-linear $2$-query LTCs.
(Here, an LTC is said to be $\F$-linear if its alphabet is
$\F$ and its tester checks $\F$-linear constraints.)
Further limitations on   $2$-query LTCs and
$3$-query LTCs appeared
in  \cite{BenSasson_2012_lower_bounds_on_weak_LTCs, Kol_2016_bounds_two_query_LTCs}. See also
\cite{BenSasson_2010_LTCs_red_Testers}.
Roughly at the same time, a series of works including
\cite{Polishchuk1994_nearly_lin_size_proofs,Dinur_2007_PCP_theorem_gap_amplification, BenSasson_2006_LTCs_prod_codes,
Viderman_2013_strong_LTCs,
Kopparty2017_high_rate_LTCs_subpoly_query, Gopi2018_LTCs_appr_GV_bound}
established the existence
of `almost' good LTCs (both weak and strong), e.g., 
LTCs with rate $\mathrm{poly}(\log n)^{-1}$ and constant relative distance,
or good codes admitting a tester reading $(\log n)^{O(\log \log n)}$ letters.
The existence of good LTCs was finally settled 
by Dinur--Evra--Livne--Lubotzky--Mozes \cite{Dinur_2022_ltcs_const_rate} and Panteleev--Kalachev \cite{Panteleev_2022_good_quantum_codes} (independently),
who showed that  good $\F$-linear LTCs (with large query size) do exist.
Shortly after,   the first author and Kaufman \cite{First_2024_cosyst_exp_posets_stoc}
(see also \cite{First_2024_cosyst_exp_posets_preprint}) showed
that there are also good $2$-query LTCs (with a large alphabet).
The latter codes are $\F$-LTCs, meaning that
their alphabet is an $\F$-vector space, and their tester checks   $\F$-linear constraints.

\subsection*{Main Results}

In this work, we completely resolve the question of what
are the integers
$q\geq 2$ and alphabets $\Sigma$ for which
there exists a good $q$-query LTC (resp.\ $\F$-LTC) with alphabet $\Sigma$.

\begin{thm}\label{TH:main_LTCs}
	Let $q\geq 2$ be an integer and let $\Sigma$ be an alphabet.
	\begin{enumerate}[label=(\roman*)]
		\item There exists a good $q$-query LTC with alphabet $\Sigma$
		if and only if $(q,|\Sigma|)\neq (2,2)$.
		\item Suppose $\Sigma$ is an $\F$-vector space.
		Then there exists a good $q$-query $\F$-LTC with alphabet
		$\Sigma$ if and only if $(q,\dim \Sigma)\neq (2,1)$.
	\end{enumerate}
\end{thm}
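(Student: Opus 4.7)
The ``only if'' direction is covered by the prior lower bounds of Ben-Sasson--Goldreich--Sudan: there is no good $2$-query LTC with a $2$-letter alphabet, and no good $2$-query $\F$-LTC with a $1$-dimensional alphabet. So the task is to construct good LTCs for every admissible pair $(q,\Sigma)$.

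The plan is to reduce the task via two monotonicity principles and then handle a few critical cases. If there is a good $q$-query LTC with alphabet $\Sigma_0$, then for any $q' \geq q$ there is also a good $q'$-query LTC with alphabet $\Sigma_0$ (pad the tester with dummy queries that are never decisive), and for any $\Sigma \supseteq \Sigma_0$ there is a good $q$-query LTC with alphabet $\Sigma$ (extend the alphabet without using the new letters); analogous statements hold for $\F$-LTCs with $\dim \Sigma \geq \dim \Sigma_0$. Combined with the freedom to choose the field $\F$ arbitrarily, these reduce the ``if'' direction to three constructions: (a) a good $3$-query $\F$-LTC with alphabet $\F$ for every finite field $\F$; (b) a good $2$-query $\F$-LTC with alphabet $\F^2$; and (c) a good $2$-query LTC with a $3$-letter alphabet. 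Instance (a) with $\F = \F_2$ simultaneously handles the binary $q = 3$ case of part (i), and (b) with $\F = \F_2$ yields an alphabet of size $4$, which covers $|\Sigma| \geq 4$ in part (i).

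For (a), I would start with the First--Kaufman good $2$-query $\F$-LTC $C \subseteq (\F^d)^n$, split each $\F^d$-position into $d$ $\F$-positions, and introduce auxiliary positions storing running partial sums of the $\F$-linear relations checked by the tester. A single $\F$-linear relation between $x_i,x_j \in \F^d$ is an equation in $2d$ $\F$-variables; by introducing $O(d)$ partial-sum positions per tested pair, this equation decomposes into a chain of $3$-variable identities of the form $s_{k+1} = s_k + a_k x_k$, each verifiable by a $3$-query probe. Since the original tester has constant degree, the number of auxiliary positions per original position is $O(1)$, so rate and distance remain linear; soundness is inherited because each original violation forces at least one link of the chain to be wrong. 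For (b), I would apply a refined version of the same idea that only partially decomposes $V = \F^d$ into $\F^2$-blocks and places the partial-sum positions in $\F^2$ so that each consistency check is a genuine $2$-query probe.

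The main obstacle is (c): an alphabet size that is not of the form $|\F|^d$ cannot be reached by a direct embedding, and shrinking the alphabet of a $2$-query LTC by a genuine factor is delicate because $2$-query tests are very rigid. I would attempt (c) by extending the alphabet-reduction method to non-linear LTCs: take the $2$-query $\F_2$-LTC produced in (b) over $\F_2^2$, encode each position via an injection $\F_2^2 \hookrightarrow \{0,1,2\}^2$, and enlarge the tester with auxiliary ternary ``witness'' positions that let each original $2$-query check on a pair of $\F_2^2$-symbols be simulated, in distribution, by a $2$-query probe on the encoded code. The hardest part here is preserving both soundness and the $2$-query bound simultaneously, since a single corrupted ternary symbol may now affect both the encoded value of an original symbol and the witness constraints tied to it. Once (a), (b) and (c) are established, the theorem follows by the monotonicity principles.
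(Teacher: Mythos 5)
Your high-level reduction is sound: the nonexistence direction is indeed exactly Ben-Sasson--Goldreich--Sudan, and the two monotonicity lemmas (padding queries, enlarging the alphabet) are correct and cheap, so it is legitimate to reduce to your base cases (a), (b), (c). The problem is in the base cases themselves, where your approach diverges from the paper's and where the gaps lie.

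For (a) and (b), you split $\F^d$ into $d$ copies of $\F$ and then ``unroll'' each $2$-query $\F$-linear constraint into a chain of partial sums, each verifiable by a $3$-query (or packed $2$-query) check. This is a degree-reduction/arity-reduction argument, and it has two unverified dependencies on the \emph{structure of the tester}, not just on the code. First, the number of auxiliary positions you add is roughly $d\cdot|I|$, where $I$ indexes the tester's random choices; to keep the rate bounded away from $0$ you need $|I|=O(n)$, which is a nontrivial structural property of the First--Kaufman tester that you assert (``constant degree'') but do not establish. Second, the soundness argument must handle a word $w=(c,w_{\mathrm{aux}})$ where $c$ is a genuine codeword and only the auxiliary partial sums are corrupted: the distance of $w$ to the new code is measured uniformly over all positions, but the tester reads auxiliary positions with weight proportional to the $p_i$'s, so if the distribution $p$ is nonuniform over $I$, distance and rejection probability decouple and soundness can fail. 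The paper explicitly flags that a general query-size reduction of this kind is only known to preserve \emph{weak} local testability (in the Friedl--Sudan sense), not the strong notion used here; your sketch does not explain what in your setting avoids that obstacle.

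For (c), the alphabet-$3$ case, your sketch is the vaguest precisely where the paper's contribution is sharpest. Encoding $\F_2^2\hookrightarrow\{0,1,2\}^2$ and ``enlarging the tester with auxiliary ternary witness positions'' is not a construction; the difficulty is that a $2$-query probe on the encoded symbols sees strictly less than a $2$-query probe on the original symbols, so you must build an inner code over $\{0,1,2\}$ that is itself defined by $2$-letter constraints \emph{and} rich enough to simulate the outer tester. The paper's route is structurally different: it concatenates the outer LTC with a fixed, constraint-independent inner code --- the generalized long code $L(\Sigma,\{0,1,2\})$ (resp.\ the generalized Hadamard code $H(\Sigma,\Delta)$ in the linear case) --- and the entire weight of the argument falls on proving that these inner codes are defined by $2$-letter constraints when the inner alphabet has at least $3$ letters (resp.\ dimension at least $2$), which is Theorem~\ref{TH:gen_long_code_2_test} via Lemma~\ref{LM:critical_lemma} (resp.\ Theorem~\ref{TH:gen_Hadamard_2_test}). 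The remaining pieces, $\Delta$-separability and Theorem~\ref{TH:comp_of_codes_w_testers}, make the concatenated tester work without any assumption on $|I|$ or on the shape of $p$, because the inner code and its tester do not depend on the outer tester's constraint set at all. Your proposal has no analogue of this combinatorial fact about the long code, and that is the genuine missing idea.
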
 

Note that the  non-existence statements of Theorem~\ref{TH:main_LTCs} were
established by Ben-Sasson, Goldreich and Sudan \cite{BenSasson_2003_bounds_2query_codes}.
Theorem~\ref{TH:main_LTCs}
therefore says that the restrictions on the existence of good LTCs (resp.\ $\F$-LTCs) from 
\cite{BenSasson_2003_bounds_2query_codes}
are actually the only restrictions.

We prove the existence part of Theorem~\ref{TH:main_LTCs}
by introducing an  alphabet reduction method for LTCs and applying
it to the good $2$-query $\F$-LTCs of 
\cite[Thms.~9.7, 9.8]{First_2024_cosyst_exp_posets_preprint}
(see also \cite[Cor.~7.2]{First_2024_cosyst_exp_posets_stoc} where
the   construction is given for $\F=\F_2$).
This is 
the content of the following two theorems, which together
with \cite[Thms.~9.7, 9.8]{First_2024_cosyst_exp_posets_preprint} and \cite{BenSasson_2003_bounds_2query_codes}
imply Theorem~\ref{TH:main_LTCs}.

\begin{thm}\label{TH:main_non_lin}
	If there exists a good $2$-query LTC with a possibly adaptive tester
	on some alphabet $\Sigma$,
	then there exist
	\begin{enumerate}[label=(\roman*)]
		\item a good $2$-query LTC with alphabet $\Delta$ for every alphabet $\Delta$
	having more than $2$ letters, and
		\item a good $3$-query LTC with a binary alphabet.
	\end{enumerate}
\end{thm}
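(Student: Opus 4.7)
The plan is \emph{alphabet reduction via code concatenation}. Given the hypothesized good $2$-query LTC $C \subseteq \Sigma^n$, I would construct
\[
C' \;=\; \{(E(c_1), E(c_2), \ldots, E(c_n)) \suchthat c \in C\} \;\subseteq\; \Lambda^{n\ell},
\]
where $E \colon \Sigma \embeds \Lambda^\ell$ is an inner encoding into the target alphabet $\Lambda$ ($\Lambda = \Delta$ with $|\Delta| \geq 3$ for (i), $\Lambda = \{0,1\}$ for (ii)). Since $|\Sigma|$ is a fixed constant, both $\ell$ and the structure of $E$ are free parameters chosen once and for all. For part (ii), $E$ can be taken to be a Hadamard-style encoding $\Sigma \embeds \F_2^k \to \{0,1\}^{2^k}$ (with $k = \lceil \log_2 |\Sigma| \rceil$), whose image is caught by the $3$-query BLR linearity test. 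For part (i), $E$ can be taken as a small pair-constrained code over $\Delta$ -- e.g.\ a direct product of $|\Delta|$-ary repetition codes after padding $|\Sigma|$ to a power of $|\Delta|$ -- whose image is caught by a $2$-query pair-equality test. In both cases $E$ has constant rate and constant relative distance, so standard concatenation bounds give that $C'$ is again a good code.

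The tester for $C'$ would mix two kinds of checks. An \emph{inner check} picks a uniformly random block $i \in [n]$ and runs $E$'s own tester inside block $i$, catching blocks that stray from $E(\Sigma)$ (using $2$ or $3$ queries as appropriate). An \emph{outer check} picks a random outer constraint $(i, j, R_{ij})$ from the tester $T_C$ of $C$ and queries a few symbols of blocks $i$ and $j$ to verify that the ``decoded'' pair lies in $R_{ij}$. The implementation is cleanest when the outer constraints are $\F$-linear, which is the case of the LTCs of \cite{First_2024_cosyst_exp_posets_preprint} to which we apply the reduction: under a Hadamard inner $E$, an outer $\F_2$-linear equation $a \cdot c_i + b \cdot c_j = 0$ collapses to the $2$-query parity check $w_{(i, a)} + w_{(j, b)} = 0$; under a pair-constrained product inner $E$, the analogous outer linear constraint becomes a $2$-query relation between one coordinate of block $i$ and one coordinate of block $j$. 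Mixing the inner and outer checks uniformly then gives a $3$-query tester for (ii) and a $2$-query tester for (i).

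The main obstacle will be the soundness analysis. Writing $\hat c_i$ for the nearest $E$-codeword to $w$ restricted to block $i$, and $\hat c = (\hat c_1, \ldots, \hat c_n) \in \Sigma^n$, one has, up to constants,
\[
\dist(w, C') \;\leq\; \sum_{i=1}^n \dist(w|_{\text{block }i},\, E(\Sigma)) \;+\; \ell \cdot \dist(\hat c, C).
\]
The inner check caps the first summand via $E$'s own soundness. The delicate step is the outer check, which reads only a handful of symbols of blocks $i, j$ rather than the whole blocks, and hence must still inherit the soundness of $T_C$ applied to $\hat c$. The resolution is a robust, decoupled argument: one separates blocks close to $E(\Sigma)$ (where a random symbol read from block $i$ agrees with $E(\hat c_i)$ with constant probability, so $T_C$'s soundness transfers to the outer check up to a constant factor) from blocks far from $E(\Sigma)$ (where the inner check is already rejecting at a high rate). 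Combining the two contributions yields rejection probability $\Omega(\dist(w, C') / N)$ with $N = n\ell$, as required.
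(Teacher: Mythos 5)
Your high-level template---concatenation with a small inner code, a tester mixing an inner block-validity check with an outer ``decoded'' check---matches the paper's strategy (Theorem~\ref{TH:comp_of_codes_w_testers}). But the proposal has two genuine gaps, and the first is fatal for part~(i).

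\textbf{The inner code for part (i) cannot be a repetition-style code.} You propose embedding $\Sigma \hookrightarrow \Delta^m$ and repeating coordinates, then catching $E(\Sigma)$ with a $2$-query pair-equality test. That does make the inner code $2$-testable, but it destroys the outer check: with such an encoding, reading one coordinate of block $i$ and one of block $j$ reveals only one $\Delta$-symbol of information about each of $c_i, c_j$, which in general cannot certify an arbitrary $2$-letter outer constraint $R_{ij} \subseteq \Sigma^2$ --- nor even an arbitrary $\F$-linear constraint $\alpha(c_i) = \beta(c_j)$, since $\alpha, \beta$ need not be coordinate projections. What the paper uses instead is the \emph{generalized long code} $L(\Sigma,\Delta)$, whose coordinates range over \emph{all} functions $\Sigma\to\Delta$, so that any suitably ``separated'' outer constraint can be read off from one coordinate per block. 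The real work is then Theorem~\ref{TH:gen_long_code_2_test}: proving that $L(\Sigma,\Delta)$ is defined by $2$-letter constraints when $|\Delta|>2$ (and Remark~\ref{RM:long_code_not_2_test} shows it fails when $|\Delta|=2$, which is exactly why part~(ii) needs a third query). This nontrivial lemma is the heart of the matter, and it is absent from the proposal.

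\textbf{The outer-constraint ``collapse'' needs the separability reduction, and the theorem is not restricted to linear testers.} You say the outer constraint ``collapses'' to a $2$-query check under the inner encoding, and when pressed you retreat to the case where $T_C$ is $\F_2$-linear (the LTCs of \cite{First_2024_cosyst_exp_posets_preprint}). Theorem~\ref{TH:main_non_lin} is stated for an arbitrary good $2$-query LTC, so you cannot assume linearity; and even in the linear case a constraint $T^{(i)}$ checking membership in a general codimension-$\geq 2$ subspace $V_i\subseteq\Sigma^2$ does not split into a single one-coordinate-per-block test. The paper addresses this with $\Delta$-separability (Definition in \S\ref{sec:sep}, and Propositions~\ref{PR:sep_nec_and_suff}, \ref{PR:sep_nec_and_suff_linear}) and, crucially, with Theorems~\ref{TH:reduction_to_sep} and~\ref{TH:reduction_to_sep_lin}, which replace an arbitrary tester with a $\Delta$-separable one at bounded cost in soundness. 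Without this reduction step, the claim that the outer check can be implemented with one coordinate per block is unjustified. As a smaller point, your closing claim of rejection probability $\Omega(\dist(w,C')/N)$ with $N=n\ell$ would be worthless as stated if $\dist$ is normalized Hamming distance, since soundness must be bounded away from $0$ uniformly in $n$; the bound one actually wants, and which the paper's Theorem~\ref{TH:comp_of_codes_w_testers} supplies, depends only on $|\Sigma|$, $q$, $\mu_C$ and the soundness $\nu$ of the inner tester.
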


\begin{thm}\label{TH:main_lin}
	If there exists a good $2$-query $\F$-LTC, then there exist
	\begin{enumerate}[label=(\roman*)]
		\item a good $2$-query $\F$-LTC with alphabet $\Delta$ 
		for every $\F$-vector space $\Delta$ of dimension $>1$, and
		\item a good $3$-query $\F$-linear LTC.
	\end{enumerate}
\end{thm}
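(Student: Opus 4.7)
The plan is to follow the same alphabet-reduction strategy as in the proof of Theorem~\ref{TH:main_non_lin}, but verify at every step that the ingredients can be chosen $\F$-linearly so that the resulting code is again an $\F$-LTC (or, in part (ii), an $\F$-linear LTC). By hypothesis fix a good $2$-query $\F$-LTC $C \subseteq \Sigma^n$; the impossibility theorem of \cite{BenSasson_2003_bounds_2query_codes} forces $\Sigma \cong \F^k$ with $k \geq 2$. The goal is to produce (i) a good $2$-query $\F$-LTC on any prescribed $\F$-vector space $\Delta$ with $\dim_\F \Delta \geq 2$, and (ii) a good $3$-query $\F$-linear LTC on alphabet $\F$. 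For (i) it suffices to build such a code for $\Delta$ of some particular dimension $\geq 2$, since larger target alphabets are reached by grouping positions and smaller by concatenating with a good $\F$-linear code.

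The construction is a concatenation of $C$ with a good $\F$-linear inner code $E \colon \Sigma \to \Delta^{\ell}$, augmented by a small block of auxiliary positions attached to each outer test. Such an $E$ exists over any $\F$-vector space $\Delta$ of dimension $\geq 2$ by standard $\F$-linear constructions, and over $\Delta = \F$ via Justesen-type or Gilbert--Varshamov codes. For each outer test of $C$, which queries two $\Sigma$-symbols $\sigma_i,\sigma_j$ and checks an $\F$-linear constraint $L(\sigma_i,\sigma_j)=0$, I would append auxiliary positions holding $\F$-linear projections of $E(\sigma_i)$ and $E(\sigma_j)$, together with $\F$-linear consistency data between the projections and the inner blocks. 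The new tester has two branches: an \emph{inner branch} that samples a uniformly random outer position $i$ and runs an $\F$-linear local test on the block at $i$ to certify closeness to $\im(E)$, and an \emph{outer branch} that samples a random test $\tau$ of $C$ and checks $L(\sigma_i,\sigma_j)=0$ by reading only the auxiliary positions associated with $\tau$ (and possibly a single inner symbol). Each branch reads at most $2$ $\Delta$-symbols for part (i) and at most $3$ $\F$-symbols for part (ii), and every acceptance predicate is $\F$-linear. Rate and distance follow from standard concatenation estimates; soundness is by the usual two-level argument: a word far from the concatenated code is either rejected by the inner branch (if a noticeable fraction of inner blocks is far from $\im(E)$) or decodes to a word of $\Sigma^n$ which is far from $C$, and is then rejected by the outer branch.

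For part (ii) I would apply the same reduction with $\Delta = \F$. A single $\F$-valued query cannot enforce any nontrivial $\F$-linear constraint, so some query budget must be spent reading a companion field element; the one extra query permitted in the statement exactly covers this loss, yielding a $3$-query $\F$-linear LTC. The hard part throughout is the design in the previous paragraph of the auxiliary positions and their $\F$-linear consistency tests, so that each outer $\F$-linear constraint of $C$ can be verified by an $\F$-linear $2$-query (resp.\ $3$-query) test with $\Omega(1)$ soundness. In the non-linear proof of Theorem~\ref{TH:main_non_lin} this local verifier may be an arbitrary predicate, but here every auxiliary symbol must be an $\F$-linear function of the input and every acceptance predicate must be an $\F$-linear equation; threading this linearity constraint through the reduction while controlling rate, distance, and soundness simultaneously is the substantive content of the paper's $\F$-linear alphabet-reduction method.
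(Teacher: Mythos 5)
Your proposal correctly identifies concatenation as the overall strategy, but it has a genuine gap at exactly the point you call ``the hard part,'' and the difficulty is not merely technical --- your chosen inner code and auxiliary-position architecture cannot work as sketched.

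The decisive issue is your inner branch. You want to ``run an $\F$-linear local test on the block at $i$ to certify closeness to $\im(E)$'' reading only $2$ (resp.\ $3$) symbols, where $E:\Sigma\to\Delta^\ell$ is a \emph{good} $\F$-linear inner code such as a Justesen or GV code. But a good $\F$-linear code is not defined by $2$-letter constraints --- this is precisely the Ben-Sasson--Goldreich--Sudan obstruction you invoke earlier, and it already bites at the fixed-code level once $\ell$ exceeds $(\text{rate}\cdot\text{distance})^{-1}$. So either your inner branch has zero soundness, or it must read unboundedly many symbols, destroying locality. The insight you are missing is that the inner code need \emph{not} be good: since $\Sigma$ is fixed, the inner code's rate and distance are absolute constants regardless of how small, and Proposition~\ref{PR:concat_rate_dist} still yields a good concatenated family. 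The paper therefore chooses the generalized Hadamard code $H(\Sigma,\Delta')$ as the inner code despite its exponentially small rate, \emph{because} it is provably defined by $2$-letter constraints when $\dim\Delta'\geq 2$ (Theorem~\ref{TH:gen_Hadamard_2_test}) and by $3$-letter constraints when $\dim\Delta'=1$; that local testability is exactly what your inner branch requires and your good $E$ lacks.

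Your outer branch has a second, independent gap. You place auxiliary $\F$-linear data associated to each outer test and propose to verify the outer constraint from those auxiliary positions, but you never arrange for the auxiliary positions to be \emph{consistent} with the inner blocks they describe. Without an extra consistency test (costing further queries, and itself needing a soundness proof) an adversary fills the auxiliary positions with a valid-looking but incorrect witness. The paper sidesteps this entirely: it adds no auxiliary positions. Instead, it first replaces the outer tester by a \emph{linearly $\Delta'$-separable} one (Theorem~\ref{TH:reduction_to_sep_lin}), meaning each local predicate $T^{(i)}$ factors through one linear map $\Sigma\to\Delta'$ per queried coordinate. Since the coordinates of $H(\Sigma,\Delta')$ enumerate \emph{all} linear maps $\Sigma\to\Delta'$, this separated tester is automatically $f$-compatible with the Hadamard encoding (Proposition~\ref{PR:sep_nec_and_suff_linear}), so the outer test can be read off from $q$ coordinates of the inner codewords themselves, and no extra positions or consistency machinery are needed. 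The soundness accounting is then Theorem~\ref{TH:comp_of_codes_w_testers} (which adds one more branch: a random inner test at a random queried outer position), and finally Proposition~\ref{PR:alphabet_increase} enlarges $\Delta'$ to $\Delta$. Your sketch gestures toward the same two-level soundness argument, but the two structural choices --- low-rate locally testable inner code, and separability reduction in place of auxiliary positions --- are exactly the missing content, and they cannot be recovered by perturbing your construction.

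Finally, your remark that ``it suffices to build such a code for $\Delta$ of some particular dimension $\geq 2$, since larger target alphabets are reached by grouping positions and smaller by concatenating with a good $\F$-linear code'' is not sound either: grouping positions in blocks of $s$ only reaches dimensions that are multiples of the base dimension, and ``concatenating with a good $\F$-linear code'' to decrease the alphabet is circular --- it is the theorem being proved. The paper's Theorem~\ref{TH:linear_alphabet_reduction} handles all target dimensions $\geq 2$ (and, via the $3$-query variant, dimension $1$) uniformly by choosing a $c$-dimensional subspace $\Delta'\subseteq\Delta$ and then applying Proposition~\ref{PR:alphabet_increase}.
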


In both Theorem~\ref{TH:main_non_lin}
and Theorem~\ref{TH:main_lin}, the relative distance, rate and soundness
of the LTCs promised by the theorem
are at least proportional
to the corresponding parameters of the given LTC.
In Theorem~\ref{TH:main_lin},
all the parameters are diminished by a factor that is at most polynomial
in $|\Sigma|$, while in Theorem~\ref{TH:main_non_lin},
that factor is at most exponential in $|\Sigma|$; 
see Theorems~\ref{TH:linear_alphabet_reduction}
and~\ref{TH:alphabet_red}.
However, if the LTC given  in Theorem~\ref{TH:main_non_lin} is  
an $\F_2$-LTC, then the relative distance, rate and soundness
are diminished only by a factor polynomial in $|\Sigma|$;
see Theorem~\ref{TH:alphabet_red_semilin}. 
If the randomness complexity of the given LTC's tester
is $r(n)$ (a function of the block length $n$), then
the randomness complexity of \gap{}the promised LTC's tester is $\max\{r(n),\log_2 n\}+O(1)$;
see Remark~\ref{RM:randomness_comp}.

Analogous statements hold for $q$-query LTCs. Again, see Theorems~\ref{TH:linear_alphabet_reduction}
and~\ref{TH:alphabet_red}.

We hope that our alphabet reduction results will find other uses.

\begin{remark}
Some  methods for reducing the number of queries of a good LTC are also known.
Specifically, by \cite[Thm.~3.11, Prop.~3.6]{First_2024_cosyst_exp_posets_preprint}, 
if there exists a good $q$-query LTC  satisfying some assumptions
(which are satisfied by the good LTCs of \cite{Dinur_2022_ltcs_const_rate}) on alphabet $\Sigma$,
then there exists a good $2$-query LTC with alphabet $\Sigma'\subseteq \Sigma^q$. 
A different construction of this kind, which also needs
assumptions on the LTC, appears in \cite[Cor.~3.4]{Viderman_2013_strong_LTCs}.
It is further known that  
the existence of a good \emph{weak} $\F$-linear LTC implies the existence
of a good weak $3$-query $\F$-linear LTC \cite[Thm.~A.1]{BenSasson_2012_lower_bounds_on_weak_LTCs}.

We do not know if there is a query complexity reduction method that works
for a general good LTC.
\end{remark}

\subsection*{Proof Idea}

Our alphabet reduction method for LTCs (Theorems~\ref{TH:main_non_lin} and~\ref{TH:main_lin})
is simple in nature and consists
of   concatenating\footnote{
	Concatenation of codes is recalled in Section~\ref{sec:concat_codes}.
} the given $2$-query LTC $C\subseteq \Sigma^n$ with a special code 
$D\subseteq \Delta^k$
(which remains fixed as $n$ grows)
and showing that the concatenated code
$C\circ D\subseteq \Delta^{nk}$ is also a good $2$-query LTC.
While concatenation is a well-known method for reducing a good code's
alphabet size, it usually fails to preserve the property of being a $2$-query LTC.
The novelty of our approach is in showing that  the inner code $D\subseteq \Delta^k$ (and the identification of $\Sigma$ with $D$)
can be chosen so that the property of being a $2$-query LTC will persist.

Let us now explain in more detail how our concatenation approach
works in the context of Theorem~\ref{TH:main_non_lin}(i).
To begin, 
let us first see why $C\circ D\subseteq \Delta^{nk}$ is in general not a $2$-query
LTC. 
Let $T$ denote the $2$-query tester of $C$,
and choose  some encoding function $f:\Sigma\to D\subseteq \Delta^k$. Then each codeword
of $C\circ D$ 
has the form
\[
f(w_1)f(w_2)\cdots f(w_n)\in \Delta^{nk}
\]
for some $w\in C$. 
Let   $u\in \Delta^{nk}$ and write $u=u_1u_2\cdots u_n$ with $u_1,\dots,u_n\in \Delta^k$.
A natural way to test whether $u$ is in $C\circ D$ would be to try one of the following:
\begin{enumerate}[label=(\arabic*)]
	\item Check that $u_i\in D$ for some $i\in [n]$;
	\item Emulate the tester $T$ of $C$: run   $T$  to get two coordinates $i,j\in[n]$
	to read,
	find the letters $a,b\in \Sigma$ satisfying $f(a)=u_i$, $f(b)= u_j$
	(or reject if there are no such $a,b$), and return
	what $T$ would have returned upon reading $a$ and $b$
	in positions $i$ and $j$.
\end{enumerate}
A naive implementation of (1) and (2) requires making  $k$ or $2k$ 
queries to $u$, so the  $2$-query testability is seemingly lost. 

We   solve the large number of queries in (1) by
requiring that $D\subseteq \Delta^k$ is defined by $2$-letter constraints.
Then, instead of checking whether $u_i \in D$, we  
check whether $u_i$ satisfies a random $2$-letter constraint.
(Note that since $D\subseteq \Delta^k$ remains fixed as $n$
grows,   we do not need the stronger assumption that $D\subseteq \Delta^k$
is included in a family of codes that is a $2$-query LTC.)

Reducing the number of queries in (2) is  possible under special assumptions
on $T$ and $f$. 
Given a  random seed $s$ for  $T$,
denote by $i_{s}, j_{s}\in [n]$ the coordinates
queried by $T$, and by $T^s(w)$ the output of $T$
on $w\in\Sigma^n$ when given the seed $s$.
What we need from $T$ and $f$ is that for
every $s$,  there are coordinates $i',j'\in [k]$ (depending on $s$)
such that one can determine $T^s(w)$ --- which
depends only on $w_{i_s}$ and $w_{j_{s}}$ ---
from the $i'$-th letter of $f(w_{i_s})$
and the $j'$-th letter of $f(w_{j_{s}})$.
In other words, writing $f_{i'}:\Sigma\to \Delta$
for the $i'$-th coordinate of $f$ (so that $f(a)=(f_1(a),\dots,f_k(a ))$),
we want to be able to determine $T^s(w)$
from $f_{i'}(w_{i_s})$ and $f_{j'}(w_{j_s})$ for some $i',j'$ depending on $s$.
When this holds for every seed $s$, we   say that $T$ is \emph{$f$-compatible};
see Section~\ref{sec:concat_testers} for details.
Provided that $T$ is $f$-compatible, we can emulate the action of $T^s$
on $u=u_1\cdots u_n $ by reading just two letters from $u$,
namely, $(u_{i_s})_{i'}$ and $(u_{j_{s}})_{j'}$, and thus perform (2) using 
only $2$  queries.

We show in Theorem~\ref{TH:comp_of_codes_w_testers} that
if $D\subseteq \Delta^k$ is defined by $2$-letter constraints
and $T$ is $f$-compatible, then the concatenated code
$C\circ D$ is indeed a $2$-query LTC. 
With this at hand, we need to find   a code $D\subseteq \Delta^k$
and an encoding function $f:\Sigma \to D $
such that
(i) $T$ is $f$-compatible, and
(ii) $D\subseteq \Sigma^k$ is defined by
$2$-letter constraints.

The code $D\subseteq \Delta^k$ that we choose for the task
is a generalization of the \emph{long code} of \cite[\S3]{Bellare_1998_free_bits_PCPs}.
Formally, letting $f_1,\dots,f_k$ denote \emph{all} the functions from $\Sigma$
to $\Delta$ (so that $k=|\Delta|^{|\Sigma|}$), we choose $D$ to be the code $L(\Sigma,\Delta):=\{(f_1(a),\dots,f_k(a))\where a\in \Sigma\}\subseteq \Delta^k$
and take $f:\Sigma\to D$ to be the obvious encoding
$f(a)=(f_1(a),\dots,f_k(a))$. We call $L(\Sigma,\Delta)$ a \emph{generalized long code}; 
the
usual long code is the special case where $\Sigma=\{0,1\}^t$ and $\Delta=\{0,1\}$. 
The rationale behind this choice of $D$ is that having any
function $g:\Sigma\to \Delta$
in the collection  $\{f_i\}_{i=1,\dots,k}$
helps in securing
that $T$ is $f$-compatible. (This is still not enough to guarantee
$f$-compatibility, and we solve this below.)
The drawback of choosing $D$ to be  $L(\Sigma,\Delta)$ is that it
{\it a priori} it does not
fulfill the requirement   of being defined by $2$-letter constraints.
We prove this nontrivial statement 
in Theorem~\ref{TH:gen_long_code_2_test} under the assumption $|\Delta|\geq 3$
(the hardest case is $|\Delta|=3$).
By contrast, the `usual' long code $L(\{0,1\}^t,\{0,1\})$ is \emph{not} defined
by $2$-letter constraints; see Remark~\ref{RM:long_code_not_2_test}.

In order to finish, 
it   remains to check that  the tester $T$ is $f$-compatible.
We show in Proposition~\ref{PR:sep_nec_and_suff} that the $f$-compatibility
of $T$ is equivalent to another condition called \emph{$\Delta$-separability}.
While this condition fails in general, 
 we show in 
Theorem~\ref{TH:reduction_to_sep} that $T$ can be replaced with
a $\Delta$-separable $2$-query tester  with soundness that is proportional to
that of $T$. Conveniently, this replacement is also the usual trick
to turn an adaptive tester into a nonadaptive tester, so we can also treat
non-adaptive $2$-query testers.
After that replacement, we can finally conclude that the concatenated code $C\circ D\subseteq \Delta^{nk}$ is a $2$-query LTC.

The proof of Theorem~\ref{TH:main_lin}(i) follows
a similar path except that
we define $D\subseteq \Delta^k$ by choosing $f_1,\dots,f_k$
to be all the $\F$-linear functions from $\Sigma$ to $\Delta$.
This gives rise to what we call a \emph{generalized Hadamard code};
see \S\ref{subsec:gen_Hadamard}. 
Similarly to the generalized
long code, it can be defined by $2$-letter constraints
if and only if $\dim \Delta \geq 2$ (Theorem~\ref{TH:gen_Hadamard_2_test}).

\begin{remark}\label{RM:soundness_loss}
Passing from $C\subseteq \Sigma^n$ to $C\circ D\subseteq \Delta^{nk}$ as described
above
scales down the soundness by a constant factor depending on $\Sigma$ and $D$.
There are three sources for this scale-down. 

The first   is the soundness of the $2$-query tester for $D$ (cf.\ (1) above) --- we use
the naive lower bound $\frac{1}{R}$ where $R$ is the number of $2$-letter constraints
defining $D$;
for the generalized long code this factor is $|\Delta|^{-O(|\Sigma|)}$.
This can probably be improved significantly. That is, when $|\Delta|\geq 3$, we expect that 
the generalized long code $L(\Sigma,\Delta)\subseteq \Delta^{|\Delta|^{|\Sigma|}}$
should have a $2$-query tester with soundness that is \emph{independent of $|\Sigma|$}.
Similarly, in the linear case, we expect the generalized Hadamard code (\S\ref{subsec:gen_Hadamard})
to have the same property when $\dim \Delta\geq 2$. We pose this as a Problem~\ref{PB:long_code_LTC} below.

The  second source for loosing on the soundness is the block length $k$ of $D\subseteq \Delta^k$.
This $\frac{1}{k}$ factor comes up in the analysis of the tester of $C\circ D$,
and is forced in situations in which the coordinates $i',j'\in [k]$
from the definition of $f$-compatibility (which are determined by   the random seed of $T$) are very unevenly distributed.
See Remark~\ref{RM:block_len_loss} for details and possible ways one might overcome it (which 
were not pursued in this work).

The final cause for loosing on the soundness is the replacement of $T$
by a $\Delta$-separable tester. In more detail, in the non-linear case, 
the trick is to first guess what are the two letters 
that will be read by $T$, and   apply $T$ only if these letter are actually
encountered (otherwise accept the word). 
This multiplies the soundness by a factor of $|\Sigma|^{-2}$. We do not know how
to avoid this factor in general. However, in the linear case, $T$ is morally
close to being $\Delta$-separable (it is always $\Delta^m$-separable for $m=\ceil{\frac{2\dim \Sigma}{\dim \Delta}}$), and one can use a different trick 
which reduces the soundness by a factor
of $O(\log |\Sigma|)$.
\end{remark}

\begin{problem}\label{PB:long_code_LTC}
	(i) Let $\Delta$ be an alphabet of $3$ or more letters. Is the family
	of generalized long codes $\{L(\{1,\dots,k\},\Delta)\subseteq \Delta^{|\Delta|^k}\}_{k\geq 1}$
	a $2$-query LTC?
	
	(ii) Let $\Delta$ be an $\F$-vector space of dimension $2$ or more.
	Is the family of generalized Hadamard codes $\{\Hom_{\F}(\F^k,\Delta)\subseteq \Delta^{|\Delta|^{|\F^k|}}\}_{k\geq 1}$ (see \S\ref{subsec:gen_Hadamard}) a $2$-query LTC?
\end{problem}

\subsection*{Relation to Other Works on Alphabet Reduction}

	Concatenation with the `ordinary' long code
	and Hadamard code was used in the 
	literature
	for alphabet
	reduction in the construction of PCPs and $2$-query LTCs.
	For example, in \cite[\S\S7--8]{Dinur_2007_PCP_theorem_gap_amplification},
	resp.\ \cite[\S6.4]{Meir_2009_comb_const_of_LTCs},      this technique
	is used to construct   non-linear, resp.\ linear, weak $2$-query LTCs  having linear distance and inverse-poly\-logarith\-mic rate. The same method was pushed further
	in \cite{Viderman_2013_strong_LTCs} to construct (strong) $\F_2$-linear $3$-query LTCs (with a binary
	alphabet) and $\F_2$-LTCs with an $8$-letter alphabet  having    similar distance and rate.

	In a nutshell, the idea of these alphabet reductions is 
	to encode both  the  {letters} 
	and  the   {constraints} of the code; this is different
	from our approach which encodes only the  {letters}. 
	To explain this, let us describe an oversimplified version
	of the alphabet reduction of \cite{Dinur_2007_PCP_theorem_gap_amplification}:
	We start with a code $C\subseteq \Sigma^n$ having a $2$-query tester $T$.
	Choose some embedding of $\Sigma$ in $\{0,1\}^t$
	and use it to view words in $\Sigma^n$ as strings of $tn$ bits.
	Think of $T$ as checking at random one of (say) $R$
	$2$-letter constraints which define $C\subseteq \Sigma^n$.
	Since $\Sigma\subseteq\{0,1\}^t$,
	every such constraint is a circuit on $2t$ bits.
	Now construct a new code $C'\subseteq \{0,1\}^{N}$ by modifying
	the codewords $w\in C$ as follows. First, view $w$ as a $tn$-bit string using that $\Sigma\subseteq \{0,1\}^t$.
	Then, for each of the $R$ constraints defining $C$, attach to $w$ the long-code 
	encoding
	of the $2t$ bits participating in that constraint.\footnote{
		This is a little different from \cite[\S7]{Dinur_2007_PCP_theorem_gap_amplification},
		where a certain puncturing of the long code is used. 
	} 
	This results in a string of $N=tn+2^{2^{2t}}R$ bits, 
	and we take $C'$
	to be the collection of all such strings. 
	Since the long-code encoding of $2t$ bits
	includes the evaluation of any circuit   on those bits,
	if $w'\in \{0,1\}^N$ is obtained from some $w\in \Sigma^n$ by the above procedure, 
	then we can emulate
	the action of $T$ on $w$ by reading just $1$ bit from $w'$.
	The long code\footnote{
		More precisely, its puncturing that is used in \cite[\S7]{Dinur_2007_PCP_theorem_gap_amplification}.
	} has two additional relevant properties.
	First, it has a $3$-query tester. Second, it has a $2$-local decodability
	property --- if $y\in\{0,1\}^{2^{2^{2t}}}$ is close to the encoding of some $x\in \{0,1\}^{2t}$, then
	there is a randomized
	algorithm which can determine $x_i$ with high probability by reading just $2$ letters from $y$.%
	\footnote{
		The algorithm: choose uniformly at random
		two functions $f,g:\{0,1\}^{2t}\to \{0,1\}$ such that $f +g$
		is the $i$-th coordinate function $e_i:\{0,1\}^{2t}\to\{0,1\}$,
		and return $y_f+y_g$. See \cite[Thm.~7.1]{Dinur_2007_PCP_theorem_gap_amplification} for
		a precise statement. See \cite[Lem.~6.21]{Meir_2009_comb_const_of_LTCs} for a corresponding
		statement for the Hadamard code.}
	Using these properties and other mild assumptions, 
	it can be shown that $C'\subseteq \{0,1\}^N$
	has a $3$-query tester   with soundness that is proportional to that of $T$.
	Briefly, the idea is that given $u\in\{0,1\}^N$, 
	we can detect if $u$ is very corrupted in its last $2^{2^{2t}}R$ bits (the long-code encoding
	of the constraints)
	using the   $3$-query tester
	of the long code. If that is not the case, then we can use the $2$-local decodability property
	of the long code to check if the last $2^{2^{2t}}R$ bits of $u$
	are consistent with the first $tn$ bits (this requires querying $2+1=3$ bits from $u$). When they are almost consistent, meaning that
	$u$ is very close to  a word coming from some $w\in \Sigma^n$, we can emulate $T$ on $w$.
	
	There are two reasons why we cannot use this alphabet reduction approach
	to prove our main results. First, the above construction
	increases
	the block length by $O(R)$, where $\log_2 R$ is the randomness complexity
	of $T$, while leaving the message length the same. Therefore, applying
	it to a good LTC would 
	result in a good LTC only if $R=O(n)$, equiv.\ the randomness complexity of $T$ is $\log n+O(1)$.
	By contrast, our alphabet reduction method 
	(Theorems~\ref{TH:main_non_lin} and~\ref{TH:main_lin})	
	makes no assumptions on the randomness complexity.	
	Second,   the alphabet reduction just described (loosely) relies on the fact
	that the long code (resp.\ Hadamard code)  is $2$-locally decodable, 
	which is one reason why the resulting tester
	needs to query $2+1=3$ letters (the other reason is that the long code's tester queries $3$ letters). 
	If we were to use this approach to get a $2$-query LTC,
	we would need to replace the long code with a   $1$-locally decodable code,
	and there are no such useful codes.
	That said, it seems likely that applying the alphabet reduction
	of  \cite{Dinur_2007_PCP_theorem_gap_amplification}, \cite{Meir_2009_comb_const_of_LTCs}, \cite{Viderman_2013_strong_LTCs}
	to a good $2$-query LTC (or even any good LTC)
	whose tester has randomness complexity $\log_2(n)+O(1)$,
	e.g., those of
	\cite{First_2024_cosyst_exp_posets_stoc}
	(or \cite{Dinur_2022_ltcs_const_rate}, \cite{Panteleev_2022_good_quantum_codes}),
	would result in a good $3$-query LTC on a binary alphabet.
	
	Finally, in terms of soundness, the alphabet
	reduction of   \cite{Dinur_2007_PCP_theorem_gap_amplification}, etc.\
	is more economical than ours as it decreases the soundness by a 
	constant factor (independent of $|\Sigma|$); see Remark~\ref{RM:soundness_loss} for what causes the soundness loss in our alphabet reduction.
	As for rate and distance, our alphabet reduction reduces the rate
	by a smaller factor than that of {\it op.\ cit.},
	and both alphabet reductions reduce the distance by a constant factor independent of $|\Sigma|$.

\subsection*{Organization}

The paper is organized as follows:
Section~\ref{sec:prelim} is preliminary and recalls
necessary definitions and facts about   codes and testers.
In Section~\ref{sec:concat_codes}, we recall
concatenation of codes.
In Section~\ref{sec:concat_testers},
we prove a theorem giving sufficient conditions for the concatentation
of two codes to be a $q$-query LTC.
The inner codes to which this result will
be applied --- the generalized long code and the generalized
Hadamard code --- are presented in 
Section~\ref{sec:gen}. In that section, it is also shown that
these codes are defined by $2$-letter constraints.
Section~\ref{sec:sep} is concerned with showing that one can replace
the tester of any LTC by another tester to which
our result about concatenation of LTCs can be applied.
Finally, in Section~\ref{sec:alpha_red}, all previous results are combined
to prove Theorems~\ref{TH:main_non_lin} and~\ref{TH:main_lin}.

\subsection*{Acknowledgements.}
We are grateful to the anonymous referees for many useful suggestions.
This
research was supported by ISF grant no.\ 721/2024.

\section{Preliminaries}
\label{sec:prelim}

\subsection{General Conventions}

Throughout this paper, $\F$ is a finite field, and $[n]$ denotes the set $\{1,\dots,n\}$. 
Vector spaces are over $\F$ and are assumed to be finite-dimensional.
(Unlike some other texts, the letter $q$ will be reserved for the number of queries performed by a tester
and  will \emph{not} denote  the size of $\F$.)

An alphabet is a finite set with at least two elements.

Recall that a distribution on a countable set $X$
is a collection of non-negative real numbers $p=(p_x)_{x\in X}$
adding up to $1$; the number $p_x$ is the probability of drawing
$x$ when sampling an element from $X$ according to $p$. 
We write   $x\sim p$ to indicate that $x\in X$
is chosen at random accroding to the distribution $p$.
For a finite set $X$, we write $x\sim X$ to denote
that $x$ is chosen from $X$ uniformly at random.

\subsection{Error Correcting Codes}

Let $\Sigma$ be an alphabet and $n\in\N$.
We write $\Sigma^n$ for the set of $n$-letter
words in the alphabet $\Sigma$, and unless indicated otherwise,
write $w_i$ for the $i$-th letter of a word $w\in \Sigma^n$.
As usual, the normalized Hamming distance   on $\Sigma^n$, denoted $\dist(\cdot,\cdot)$,
is given by $\dist(u,v)=\frac{1}{n}\cdot \#\{i\in\{1,\dots,n\}\suchthat u_i\neq v_i\}$ for all $u,v\in\Sigma^n$.

In this work, an \emph{error correcting code},
or a  \emph{code} for short, with alphabet $\Sigma$ and block length $n$ is a nonempty
subset $C\subseteq \Sigma^n$. 
Recall that the   \emph{relative distance} of   $C\subseteq \Sigma^n$ is\label{symdef:delta} 
\[\delta(C):=
\min\{\dist(u,v)\where u,v\in C,\, u\neq v\}\]
and its \emph{rate} is\label{symdef:rate}
\[
r(C) := \frac{\log_{|\Sigma|}|C|}{n}.
\]
We say that $C$ has relative distance $\delta$ (resp.\ rate $r$)
when $\delta(C)\geq \delta$ (resp.\ $r(C)\geq r$)
and add the word ``exactly'' to indicate that equality holds.

It is common to think of a code $C\subseteq \Sigma^n$ as ranging in  a \emph{family of codes}, i.e.,
a sequence of codes $\{ C_i\subseteq \Sigma^{n_i}\}_{i\in \N}$ with
$n_i$ tending to $\infty$. 
Recall that the family  $\{ C_i\subseteq \Sigma^{n_i}\}_{i\in \N}$ 
is said to be a \emph{good code}  if there are $r,\delta>0$ such that
$r(C_i)\geq  r$ and     $\delta(C_i)\geq \delta$ for all $i$.
In this case, we also say that the family $\{ C_i\subseteq \Sigma^{n_i}\}_{i\in \N}$
has relative distance $\delta$ and rate $r$.

\subsection{Testers}
\label{subsec:testers}

Let $q\in\N$. 
Recall that
a \emph{$q$-query tester}, or a \emph{$q$-tester} for short,
for a code $C\subseteq \Sigma^n$ is a randomized algorithm $T$ which, given oracle access to some 
$w\in\Sigma^n$, reads at most $q$ letters from $w$ and returns $1$ --- meaning `accept' --- or $0$ --- meaning `reject' ---
subject to the requirement that any $w\in C$ is accepted.
The tester $T$ is said to have soundness $\mu$ ($\mu\geq 0$) if 
\[
\Prob(T(w)=0)\geq \mu\dist(w,C)
\qquad\forall w\in \Sigma^n.
\]
The tester $T$ is called  \emph{non-adaptive}
if it does not   use information from previous queries to $w$
to determine which position to read next.
\emph{Unless explicitly indicated, we always 
assume that  testers
are  non-adaptive.}

Observe that in order to specify a (non-adaptive)  $q$-tester
for $C\subseteq \Sigma^n$, it is enough to give   the following data:
a finite set $I$, a probability distribution $p=(p_i)_{i\in I}$ on $I$,
a $q$-tuple  $(a_1^{(i)},\dots,a_q^{(i)})\in [n]^q$ for each $i\in I$,
and a function $T^{(i)}:\Sigma^q\to \{0,1\}$ for every $i\in I$.
The corresponding tester $T$ then works as follows:
Given $w\in \Sigma^n$, choose $i\in I$ according to the distribution 
$p$, read the letters in positions
$a_1^{(i)},\dots,a_q^{(i)}$ from $w$ and return
$T^{(i)}(w_{a_1^{(i)}},\dots,w_{a_q^{(i)}})$.
We write this simply
as
\[
T=(T^{(i)},(a_1^{(i)},\dots,a_q^{(i)}),p_i)_{i\in I}.
\]

A code $C\subseteq\Sigma^n$ is said to be \emph{defined by $q$-letter
constraints} if it has a $q$-query tester with positive soundness.
This is equivalent to saying that there is a list
of constraints on words in $\Sigma^n$,
each   involving $q$ or less letters, such that
$C$ is the set of words in $\Sigma^n$ satisfying all those constraints.

A family of codes $\{C_i\subseteq\Sigma^{n_i}\}_{i\in\N}$ is
called a \emph{$q$-query locally testable code} ($q$-query LTC)
if there is $\mu>0$ such that each code in the family admits a $q$-tester
with soundness $\mu$. In this case, we also say that the $q$-query LTC $\{C_i\subseteq\Sigma^{n_i}\}_{i\in\N}$
has soundness $\mu$.
A \emph{locally testable code} (LTC) is a family of codes that is a $q$-query LTC
for some $q\in\N$.

\subsection{Linear Codes}
\label{subsec:lin_codes}

Let $\F$ be a finite field,
and suppose that the alphabet $\Sigma$ is also an $\F$-vector space
(hence $\Sigma\neq 0$).
A code $C\subseteq \Sigma^n$ is said to be an \emph{$\F$-code}
if $C$ is a subspace of $\Sigma^n$.
An \emph{$\F$-linear code} is an $\F$-code with alphabet $\F$.

Let $C$ be an $\F$-code.
We say that a $q$-tester $T=(T^{(i)},(a_1^{(i)},\dots,a_q^{(i)}),p_i)_{i\in I}$
is \emph{$\F$-linear} if
for each $i\in I$, there is a linear subspace $V_i\subseteq \Sigma^q$
such that
\[
T^{(i)}(u)=\left\{
\begin{array}{ll}
1 & u\in V_i \\
0 & u\notin V_i
\end{array}
\right.
\]
for all $u\in \Sigma^q$. For example, when $\Sigma=\F$ and  $V_i$ has codimension $1$ in $\Sigma^q$,
the test $T^{(i)}$ merely checks that the input $w\in\F^n$ 
satisfies a single homogeneous linear equation involving at most $q$ of the coordinates.

Any $q$-tester for an $\F$-code can be replaced
by an $\F$-linear $q$-query tester having soundness greater than or equal to the soundness of the original tester.

A $q$-query \emph{$\F$-LTC} (resp.\ \emph{$\F$-linear LTC}) is a family of $\F$-codes (resp.\ $\F$-linear codes)
for which there is $\mu>0$ such that each code in the family
admits an $\F$-linear $q$-tester with soundness $\mu$.

\section{Concatenation of Codes}
\label{sec:concat_codes}

We proceed by recalling concatenation of codes, setting notation along the way.
Let $C\subseteq \Sigma^n$ be a code with alphabet $\Sigma$,
let $\Delta$ be another alphabet, and let $f:\Sigma\to \Delta^k$ be an injective function
with image $D$. We think of $D$ as a code inside $\Delta^k$ and of $f:\Sigma\to D\subseteq \Delta^k$
as its encoding function.
Recall that the \emph{concatenated code} $C\circ D\subseteq \Delta^{kn}$ is obtained by
encoding the letters of each codeword in $C$ using the code $D$.
Formally,
\[
C\circ D:=\{f(w_1)\cdots f(w_n)\where w\in C\}\subseteq \Delta^{kn}.
\]
The codes  $C$ and $D$
are often called the \emph{outer code} and \emph{inner code},
respectively.
When the encoding map $f$ is not clear from the context, we shall
write $C\circ_f D$ for $C\circ D$.  For more details, see \cite[Chp.~12]{Roth_2006_coding_thy}
or \cite[\S10.1]{Guruswami_2025_ess_coding_thy_preprint}, for instance.

Note that if $C$ is an $\F$-code, $\Delta$ is an $\F$-vector space
and $f:\Sigma\to \Delta^k$ is $\F$-linear, then $C\circ D$
is an $\F$-code. When $\Delta=\F$, the code $C\circ D$
is moreover $\F$-linear.

The behavior of the relative distance and rate of $C\circ D$
is well-known and summarized in the following proposition.

\begin{prp}\label{PR:concat_rate_dist}
With notation as above, we have
\[
\delta(C\circ D)\geq \delta(C)\delta(D)
\qquad\text{and}\qquad 
r(C\circ D)=r(C)r(D).
\]
\end{prp}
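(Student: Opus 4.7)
The plan is to handle the rate and the distance separately; both follow straight from unpacking definitions and using that the encoding map $f:\Sigma\to\Delta^k$ is injective.

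For the rate, I would first observe that the map $C\to C\circ D$ sending $w\mapsto f(w_1)\cdots f(w_n)$ is a bijection (injectivity of $f$ lifts coordinate-wise), so $|C\circ D|=|C|$, and similarly $|D|=|\Sigma|$. Then I would just compute, using change of base for logarithms:
\[
r(C\circ D)=\log_{|\Delta|^{kn}}|C|=\frac{\log|C|}{nk\log|\Delta|}=\frac{\log|C|}{n\log|\Sigma|}\cdot\frac{\log|\Sigma|}{k\log|\Delta|}=r(C)\cdot r(D).
\]

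For the distance, I would take two distinct codewords $u,v\in C$ and compare their encodings $f(u_1)\cdots f(u_n)$ and $f(v_1)\cdots f(v_n)$ in $\Delta^{kn}$. Write $S=\{i\in[n]\suchthat u_i\neq v_i\}$, so $|S|\geq \delta(C)\cdot n$. For each $i\in S$, the injectivity of $f$ gives $f(u_i)\neq f(v_i)$, hence these are two distinct codewords of $D\subseteq\Delta^k$ and therefore disagree in at least $\delta(D)\cdot k$ coordinates. Summing over $i\in S$ yields at least $\delta(C)n\cdot\delta(D)k$ positions of disagreement out of $nk$, so $\dist(f(u_1)\cdots f(u_n),f(v_1)\cdots f(v_n))\geq \delta(C)\delta(D)$. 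Taking the minimum over all such pairs gives the claim.

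There is no real obstacle here: both statements reduce to a counting argument plus the injectivity of $f$. The only mild subtlety is that the rate identity is an equality (not an inequality), which I would emphasize follows because $|C\circ D|=|C|$ and $|D|=|\Sigma|$ exactly, with no loss.
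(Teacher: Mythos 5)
Your proof is correct and is the standard argument; the paper itself states this proposition without proof, citing it as well-known. Both the rate identity (via $|C\circ D|=|C|$, $|D|=|\Sigma|$, and change of base) and the distance bound (counting disagreeing blocks and then coordinates within each) are handled cleanly, and there is nothing to add.
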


In  applications of concatenation in this paper, the outer code $C$ will  
range over a family of codes $\{C_i\subseteq \Sigma^{n_i}\}_{i\in\N}$, while the inner code $D\subseteq \Delta^k$ will  remain  fixed.
Then, by Proposition~\ref{PR:concat_rate_dist}, the family
$\{C_i\circ_f D\subseteq \Sigma^{kn_i}\}_{i\in \N}$
will be a good code as long as  $\{C_i\subseteq \Sigma^{n_i}\}_{i\in\N}$
is a good code; this holds no matter
how poor the relative distance and rate of $D$
are.

\section{Concatenation of Codes with Testers}
\label{sec:concat_testers}

In general, the concatenation of codes admitting  $q$-testers
with positive soundness
does not share the same property.
However, in this section, we will show that under certain
assumptions, this is indeed the case.

\begin{dfn}\label{DF:compatibility}
Let $C\subseteq \Sigma^n$ be a code, let $T=(T^{(i)},(a^{(i)}_1,\dots,a^{(i)}_{q}),p_i)_{i\in I}$
be a $q$-tester for $C$ (see \S\ref{subsec:testers}),
and 
let $f:\Sigma\to \Delta^k$ be an injective function.
We denote the $j$-coordinate of $f$ as $f_j:\Sigma\to \Delta$.
We say that $T$ is \emph{$f$-compatible} if for every $i\in I$, there are $b_1  ,\dots,b_q  \in [k]$ and a function $g :\Delta^q\to \{0,1\}$
such that for every $w_1,\dots,w_q\in \Sigma$, 
\[
T^{(i)}(w_1,\dots,w_q)=g (f_{b_1}(w_1),\dots,f_{b_q}(w_q)).
\]
In the other words, we can compute $T^{(i)}(w_1,\dots,w_q)$
by reading   $q$ letters from
the word $f(w_1)\cdots f(w_q)$ in $\Delta^{qk}$.
\end{dfn}


\begin{thm}\label{TH:comp_of_codes_w_testers}
Let $C\subseteq \Sigma^n$ be a code admitting a $q_C$-tester 
$T_C$
with soundness $\mu_C>0$.
Let $f:\Sigma\to \Delta^k$ be an injective function such that the code $D:=\im(f)\subseteq \Delta^k$
has a $q_D$-tester $T_D$ with soundness $\mu_D>0$.
Suppose further that $T_C$ is $f$-compatible. Then the   concatenated code $E:=C\circ_f D\subseteq \Delta^{nk}$ admits a $\max\{q_C,q_D\}$-tester $T_E$ with soundness 
\[
\frac{\mu_C\mu_D  }{(q k+1)\mu_C  +\mu_D }.
\]
\end{thm}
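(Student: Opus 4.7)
The plan is to define $T_E$ as a convex combination of three sub-tests and analyze its soundness by comparing an input $v\in\Delta^{nk}$ to its block-wise decoding. Write $v=(v^{(1)},\ldots,v^{(n)})$ with $v^{(j)}\in\Delta^k$, and set $\epsilon_j:=\dist(v^{(j)},D)$ and $\epsilon:=\frac{1}{n}\sum_j\epsilon_j$. For each $j$, pick $w_j\in\Sigma$ so that $f(w_j)\in D$ is a nearest neighbor of $v^{(j)}$ (ties broken arbitrarily), form $w=(w_1,\ldots,w_n)\in\Sigma^n$, and let $v':=f(w_1)\cdots f(w_n)\in D^n\subseteq\Delta^{nk}$. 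By construction $\dist(v,v')=\epsilon$; since $v'\in E$ exactly when $w\in C$, the triangle inequality yields the key geometric bound $\dist(v,E)\le\epsilon+\dist(w,C)$, from which in particular $\dist(v,E)\le\epsilon+\dist(w,C)$ splits the total "error" into a block-level part and an outer-level part.

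The three sub-tests comprising $T_E$ are: (i) an \emph{inner-uniform} test that picks $j\sim[n]$ uniformly and runs $T_D$ on $v^{(j)}$; (ii) an \emph{inner-weighted} test that samples $i\sim p_i$ from $T_C$'s distribution and $\ell\sim[q_C]$ uniformly, then runs $T_D$ on block $a^{(i)}_\ell$; and (iii) an \emph{outer} test that samples $i\sim p_i$ and, invoking the $f$-compatibility of $T_C^{(i)}$, reads the $q_C$ letters of $v$ prescribed by Definition~\ref{DF:compatibility} and returns the corresponding $g^{(i)}$-value. Each uses at most $\max\{q_C,q_D\}$ queries. Writing $\tilde\kappa_j$ for the expected number of outer queries landing in block $j$ and $X:=\sum_j\tilde\kappa_j\epsilon_j$, a direct computation shows the three sub-tests reject $v$ with probabilities at least $\mu_D\epsilon$, $\mu_D X/q_C$, and $\mu_C\dist(w,C)-\Prob(\text{outer query lands in } S)$ respectively, where $S:=\{(j,b):v^{(j)}_b\ne v'^{(j)}_b\}$ is the disagreement set of $v$ and $v'$; the last estimate uses that the outer sub-test coincides with $T_C$ applied to $w$ whenever no queried position lies in $S$.

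The main obstacle is controlling $\Prob(\text{outer query lands in } S)$. Every position in $S$ lies in a block $j$ with $\epsilon_j>0$, and on that set $\epsilon_j\ge 1/k$; a union bound therefore yields
\[
\Prob(\text{outer query lands in } S)\le\sum_{j:\,\epsilon_j>0}\tilde\kappa_j\le k\sum_j\tilde\kappa_j\epsilon_j=kX.
\]
Crucially, this $X$ is, up to the factor $q_C/\mu_D$, the exact quantity estimated by the inner-weighted sub-test; its presence is what makes the analysis close. Assigning weights $\alpha_1,\alpha_2,\beta$ summing to $1$ to sub-tests (i)--(iii), the total rejection probability is at least
\[
\alpha_1\mu_D\epsilon+\Bigl(\tfrac{\alpha_2\mu_D}{q_C}-\beta k\Bigr)X+\beta\mu_C\dist(w,C).
\]
Choosing $\alpha_2=\beta q_Ck/\mu_D$ eliminates the $X$-term, leaving $\alpha_1\mu_D\epsilon+\beta\mu_C\dist(w,C)\ge\min(\alpha_1\mu_D,\beta\mu_C)\cdot\bigl(\epsilon+\dist(w,C)\bigr)\ge\min(\alpha_1\mu_D,\beta\mu_C)\cdot\dist(v,E)$ by the geometric bound of the first paragraph. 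Finally, balancing $\alpha_1\mu_D=\beta\mu_C$ and saturating the probability constraint $\alpha_1+\alpha_2+\beta=1$ yields the claimed soundness.
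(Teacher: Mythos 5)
Your $T_E$ is the same three-routine mixture as in the paper, and most of the analysis is sound, but the final algebra does not yield the claimed constant. Tracking your own estimates: sub-test (i) gives $\mu_D\epsilon$, sub-test (ii) gives $\mu_D X/q_C$, and sub-test (iii) gives $\mu_C\dist(w,C)-kX$. With $\alpha_2=\beta q_C k/\mu_D$ to cancel $X$, and then $\alpha_1\mu_D=\beta\mu_C$ and $\alpha_1+\alpha_2+\beta=1$, you get
\[
\beta=\frac{\mu_D}{\mu_C+q_Ck+\mu_D},\qquad
\text{soundness}=\beta\mu_C=\frac{\mu_C\mu_D}{\mu_C+q_Ck+\mu_D},
\]
whereas the theorem asserts $\frac{\mu_C\mu_D}{(q_Ck+1)\mu_C+\mu_D}=\frac{\mu_C\mu_D}{q_Ck\mu_C+\mu_C+\mu_D}$. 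These differ whenever $\mu_C\neq 1$, and since $\mu_C<1$ in practice, your bound is strictly weaker (the denominator has $q_Ck$ where the theorem has $q_Ck\mu_C$). Saying this "yields the claimed soundness" is therefore incorrect.

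The source of the loss is how the outer sub-test is handled. You lower-bound its rejection probability by the subtraction $\mu_C\dist(w,C)-\Prob(\text{query hits }S)$ and then spend sub-test (ii) entirely on cancelling the error term $kX$. The paper instead introduces $R=\Prob_{i\sim p}[\text{all queried blocks lie in }D]$ and writes routine (2)'s rejection probability as $\geq R\mu_C\dist(u',u'')$ and routine (3)'s as $\geq(1-R)\tfrac{\mu_D}{q_Ck}\geq(1-R)\dist(u',u'')\tfrac{\mu_D}{q_Ck}$. This splits $\dist(u',u'')$ into $R\dist(u',u'')+(1-R)\dist(u',u'')$ covered by two routines rather than forcing one routine to both cover $\dist(w,C)$ and absorb $-kX$. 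The conditional splitting avoids the additive loss incurred by the union bound and gets the extra factor of $\mu_C$ on the $q_Ck$ term. Your argument does prove that $E$ admits a $\max\{q_C,q_D\}$-tester with \emph{some} positive soundness bounded below by an explicit expression in $\mu_C,\mu_D,q_C,k$ --- which suffices for all downstream applications in the paper --- but to match the stated constant you would need to replace the subtraction/cancellation step with the paper's $R$-conditioning decomposition.
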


\begin{proof}
We abbreviate $q_C$ to $q$  and write $T=T_C=(T^{(i)},(a^{(i)}_1,\dots,a^{(i)}_{q}),p_i)_{i\in I}$.
Given $i\in I$, denote the $b_1  ,\dots,b_q  \in [k]$ and the  function $g :\Delta^q\to \{0,1\}$
promised by Definition~\ref{DF:compatibility} by 
$b_1^{(i)}  ,\dots,b_q^{(i)}  \in [k]$ and   $g^{(i)} :\Delta^q\to \{0,1\}$.

Fix some distribution $\rho=(\rho_1,\rho_2,\rho_3)$ on the set $\{1,2,3\}$;
we will specify $\rho$ at the end.
Using $\rho$,
we define a tester $T_E$ for $E\subseteq \Delta^{nk}$ as follows:
Given $u\in \Delta^{nk}$, write $u=u_1\cdots u_n$ with $u_1,\dots,u_n\in \Delta^k$.
Then choose $x\in\{1,2,3\}$ according to the distribution $\rho$
and perform routine $(x)$ from the following list.
\begin{enumerate}[label=(\arabic*)]
	\item Choose $\ell\in [n]$ uniformly at random and return $T_D(u_\ell)$
	($q_D$ letters are read from $u$).
	\item  Choose $i\in I$ according to   $(p_i)_{i\in I}$
	and return $g^{(i)}\Circs{(u_{a_1^{(i)}})_{b_1^{(i)}},\dots,(u_{a_q^{(i)}})_{b_q^{(i)}}}$
	($q_C$ letters are read from $u$).
	\item  Choose $i\in I$ according to the distribution $(p_i)_{i\in I}$,
	choose $\ell\in [q_C]$ uniformly at random 
	and return $T_D(u_{a_\ell^{(i)}})$
	($q_D$ letters are read from $u$).
\end{enumerate}
The rationale behind (1) and (2) was explained in the introduction. 
Routine (3) is included in order to deal with situations where   $a_1^{(i)},\dots,a_q^{(i)}$ are very unevenly distributed; otherwise it is subsumed by routine (1).

It is straightforward to see that $T_E$ accepts every word in $E$. It remains to prove that
it has the claimed soundness.

Let $u\in \Delta^{nk}$ and write $\veps:=\dist(u,E)$. 
Then
\[
\Prob(T(u)=0)=\sum_{x=1}^3 \rho_x\cdot \Pr(T(u)=0\where \text{($x$) is executed}).
\]
We will bound the three summands on the right hand side from below.
To that end, we think of  $D^n$ as a subset of $\Delta^{nk}$
and  define the following words in $\Delta^{nk}$:
\begin{itemize}
	\item $u'$ is an element of $D^n$ with minimal distance from $u$.
	\item $u''$ is an element of $E$ with minimal distance from $u'$.
\end{itemize}
We further let $w' \in \Sigma^n$ denote the word  satisfying $f^n(w')=u'$. 
Obseve that
\begin{equation}\label{EQ:dist_sum}
\veps=\dist(u,E) \leq \dist(u,u'')\leq \dist(u,u')+\dist(u',u'').
\end{equation}

Now, since $T_D$ has soundness $\mu_D$, we have
\begin{align*}
\Prob(T_E(u)=0\where \text{(1) exec.})
& = \Prob_{\ell\sim [n]}\Circs{T_D(u_\ell)=0}
\geq \frac{1}{n}\sum_{\ell=1}^n\mu_D\dist_{\Delta^k}(u_\ell,D)
\\
&= \mu_D \dist_{\Delta^{nk}}(u,D^n)
= \mu_D\dist(u,u').
\end{align*}
Next, 
denote by $R$ the probability
that $u_{a_t^{(i)}} \in D$ for all $t=1,\dots,q$
(equivalently, $u_{a_t^{(i)}}=u'_{a_t^{(i)}}$ for $t=1,\dots,q$)
when $i\in I$ is chosen according to the distribution $p$.
Then
\begin{align*}
\Prob(T_E(u)= & \,0\where \text{(2) is exec.})
=
\Prob_{i\sim p}
\Squares{
g^{(i)}\Circs{(u_{a_1^{(i)}})_{b_1^{(i)}},\dots,(u_{a_q^{(i)}})_{b_q^{(i)}}}=0
}
\\
& \geq 
R\cdot \Prob_{i\sim p}
\Squares{
\left.
g^{(i)}\Circs{(u_{a_1^{(i)}})_{b_1^{(i)}},\dots,(u_{a_q^{(i)}})_{b_q^{(i)}}}=0\,
\right|
u_{a_t^{(i)}}=u'_{a_t^{(i)}}~\forall\, t\in[q]
}  
\\
&=
R\cdot \Prob_{i\sim p}
\Squares{
g^{(i)}\Circs{(u'_{a_1^{(i)}})_{b_1^{(i)}},\dots,(u'_{a_q^{(i)}})_{b_q^{(i)}}}=0
}
\\
&=
R\cdot \Prob_{i\sim p}\Squares{T^{(i)}(w'_{a_1^{(i)}},\dots,w'_{a_q^{(i)}})=0}
\\
& \geq R\mu_C \dist(w',C)
\\
& \geq R\mu_C\dist(u',E)=R\mu_C\dist(u',u''),
\end{align*}
while
\begin{align*}
\Prob(T_E(u)= ~& 0\where \text{(3) is exec.})
=
\Prob_{i\sim p,  \ell\sim [q]}
(T_D(u_{a_\ell^{(i)}})=0)
\\
&
\geq (1-R) \cdot \Prob_{i\sim p,  \ell\sim [q]}
\Squares{
\left.
T_D(u_{a_\ell^{(i)}})=0\,\right|\, 
u_{a_\ell^{(i)}}\notin D~
\text{for some $\ell$}
}
\\
&\geq 
(1-R)\cdot \frac{1}{q}\cdot \frac{\mu_D}{k}
\geq (1-R)\delta(u',u'')\cdot \frac{ \mu_D}{q k}. 
\end{align*}
Here, the second-to-last inequality holds since 
$u_{a_\ell^{(i)}}\notin D$
implies  $\dist_{\Delta^k}(u_{a^{(i)}_\ell},D)\geq \frac{1}{k}$.

Putting everything together gives
\[
\Prob(T_E(u)=0)\geq 
\delta(u,u')\cdot \rho_1\mu_D
+R\delta(u',u'')\cdot \rho_2 \mu_C
+(1-R)\delta(u',u'')\cdot \frac{\rho_3 \mu_D}{q k}
=:(\star)
\]
By \eqref{EQ:dist_sum}, we have
$\delta(u,u')+R\delta(u',u'')+(1-R)\delta(u',u'')\geq \veps$,
so
\[
(\star)\geq \veps \cdot \min\left\{\rho_1\mu_D,\rho_2\mu_C,\frac{\rho_3\mu_D}{q k}\right\}.
\]
Writing $S=\mu_C\mu_Dq^{-1}k^{-1}+\mu_D^2q^{-1}k^{-1}+\mu_D\mu_C$,
the expression on the right is minimized when 
\[
\rho_1 = \frac{\mu_C\mu_Dq^{-1}k^{-1}}{S},
\quad
\rho_2 = \frac{\mu_D^2q^{-1}k^{-1}}{S},
\quad
\rho_3 = \frac{\mu_C\mu_D}{S},
\]
in which case it evaluates to
\[
\frac{\mu_C\mu_D^2q^{-1}k^{-1}}{S}\cdot\veps=
\frac{\mu_C\mu_D  }{(q k+1)\mu_C  +\mu_D }\cdot\delta(u,E).
\]
This is exactly what we want.
\end{proof}

\begin{remark}\label{RM:block_len_loss}
Without additional assumptions in Theorem~\ref{TH:comp_of_codes_w_testers},
the  tester $T_E$ described in its proof
cannot have   soundness larger than $\Omega(\frac{1}{k})$.
The reason is that we make no assumption on the distribution
of the coordinates  $b_1^{(i)},\dots,b_q^{(i)}\in [k]$ from Definition~\ref{DF:compatibility} as $i\in I$ distributes according to $p$.
For example, consider an extreme situation in which $b_1^{(i)}=\dots=b_q^{(i)}=1$ for all $i\in I$
and $\delta(D)>\frac{1}{k}$.
Start with a word $w'\in\Sigma^n$ that is $\delta$-far from $C$, let
$w\in C$ be the closest word to $w'$, and let $u=f(w_i)\cdots f(w_n)$ and
$u'_i=f(w'_i)\cdots f(w'_n)$. Then,
for every $i\in [n]$ such that $w_i\neq w'_i$,
replace the $(ki+1)$-letter of   $u'$ with the corresponding letter 
in $u$; denote the resulting word by $u''$. 
Now, the word $u''$ satisfies    $\delta(u'',C\circ D)\geq \delta(u',C\circ D)-\delta(u'',u')
\geq \delta\cdot (\delta(D)-\frac{1}{k})=\Omega(\delta)$. 
However,  
when we attempt to emulate  $T_C$ on $u''$ (routine (2) above), we cannot distinguish between
$u$ and $u''$. Thus, the only way we could detect that $u''$ is not in $C\circ D$
is by applying the tester of $D$ to one of the $k$-letter blocks of $u''$ (routines (1) or (3)).
Regardless of how this block is chosen,  it differs by only one letter
from a word in $D$, and so we are only guaranteed to succeed with probability
$\mu_D\cdot\frac{1}{k}$ or less. 
It follows that     $T_E$   rejects $u''$
with probability 
$O(\frac{1}{k})$, while  $u''$ is $\Omega(\delta)$-far from $C\circ D$.

One can overcome the $\Omega(\frac{1}{k})$ limitation  by requiring that for every
$t\in [q]$, the position $ka_t^{(i)}+b_t^{(i)}$  distributes nearly uniformly in $[nk]$ when $i\sim p$.
This requires imposing additional assumptions on $T$, $f$, $D$.
We omit the details because   these assumptions are not guaranteed
in the situations in which we apply Theorem~\ref{TH:comp_of_codes_w_testers}.

Another approach to bypass this limitation is to take into advantage the fact that in routine (3) above, we only need to check that the block $u_{a^{(i)}_t}$ was not corrupted \emph{in position $b^{(i)}_t$}. 
Calling $T_D(u_{a^{(i)}_t})$ is wasteful since it does not   use   this information.
To use it, however, we need  $D$ to be equipped with a ``local tester'' able to detect with high probability whether a given word is   far from $D$
\emph{or  corrupted in a given position}.
Such local testers were not considered in the literature, and are out of the scope of this work.
\end{remark}

Theorem~\ref{TH:comp_of_codes_w_testers} 
implies in particular that 
if $C\subseteq\Sigma^n$
admits a $q$-tester with soundness
$\mu>0$ and if $\Delta$
is an alphabet containing $\Sigma$,
then the code $C\subseteq \Delta^n$
(i.e., $C$ thought of as a code in $\Delta^n$)
has a $q$-tester with soundness $\frac{\mu}{(q+1)\mu+1}$
(take $f:\Sigma\to \Delta^1$ to be the embedding of $\Sigma$
in $\Delta$).
In this case, however, there is a tester
with larger soundness.

\begin{prp}\label{PR:alphabet_increase}
Let $C\subseteq \Sigma^n$ be a code
admitting a $q$-tester $T$ with soundness $\mu>0$, and let $\Delta$
be an alphabet containing $\Sigma$.
Then $C\subseteq \Delta^n$ admits a tester $T'$
with soundness $\frac{\mu}{\mu+1}$.
\end{prp}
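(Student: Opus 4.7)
My plan is to build $T'$ as a probabilistic mixture of two sub-testers that together control the two ways in which $w\in\Delta^n$ can be far from $C$: either many letters of $w$ lie outside $\Sigma$ entirely, or $w$ is essentially inside $\Sigma^n$ but is already far from $C$ there. The one-query sub-test $T_1$ picks a uniform random position $i\in[n]$, reads $w_i$, and rejects iff $w_i\notin\Sigma$. The $q$-query sub-test $T_2$ is the natural extension of $T$ to $\Delta^n$: it queries the $q$ letters prescribed by $T$, rejects outright if any of them lies outside $\Sigma$, and otherwise applies $T$'s decision rule. Since $C\subseteq\Sigma^n$, both sub-testers accept every codeword, so does any mixture, and the query complexity remains $q$.

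For the soundness analysis, fix any $\sigma_0\in\Sigma$, and for $w\in\Delta^n$ let $B:=\{i\in[n]\suchthat w_i\notin\Sigma\}$ and let $w'\in\Sigma^n$ be obtained from $w$ by replacing every letter indexed by $B$ with $\sigma_0$. Write $b:=|B|/n$ and $\veps:=\dist(w,C)$. Clearly $\Prob(T_1(w)=0)=b$. For $T_2$, the key observation is a coupling: on any random choice $i\in I$ of $T$, either some queried position lies in $B$ (automatic rejection by $T_2$), or the $q$ queried letters of $w$ and $w'$ agree and the outcome of $T_2$ on $w$ equals the outcome of $T$ on $w'$. In either case $T_2(w)=0$ whenever $T(w')=0$, so $\Prob(T_2(w)=0)\geq \Prob(T(w')=0)\geq \mu\cdot\dist(w',C)$. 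The triangle inequality $\dist(w,C)\leq\dist(w,w')+\dist(w',C)=b+\dist(w',C)$ then gives
\[
\Prob(T_2(w)=0)\;\geq\;\mu\cdot\max(\veps-b,0).
\]

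Running $T_1$ with probability $\alpha$ and $T_2$ with probability $1-\alpha$ yields
\[
\Prob(T'(w)=0)\;\geq\;\alpha b+(1-\alpha)\mu\max(\veps-b,0).
\]
Splitting into the two cases $b\geq\veps$ and $b\leq\veps$, the right-hand side is bounded below by $\min(\alpha,(1-\alpha)\mu)\cdot\veps$, so the optimal balance $\alpha=\mu/(\mu+1)$ yields soundness $\mu/(\mu+1)$, as required. I do not anticipate any genuine obstacle; the only conceptual point worth flagging is that one cannot have $T'$ itself ``repair'' $w$ by substituting $\sigma_0$ for bad letters, since that would require global access to $w$. This is precisely why $T_2$ must simply reject upon seeing any out-of-alphabet letter, and the coupling argument above is what shows that this local rejection suffices.
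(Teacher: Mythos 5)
Your proof is correct and matches the paper's own argument: the same two‐routine tester (a uniform‐position alphabet check and the natural extension of $T$ that rejects on any out‐of‐$\Sigma$ query), the same projection of $w$ to a nearest word in $\Sigma^n$, and the same triangle‐inequality bookkeeping with the optimal mixing weight $\alpha=\mu/(\mu+1)$. The only cosmetic difference is that you realize the nearest word concretely by substituting a fixed $\sigma_0$, and phrase the final bound via $\max(\veps-b,0)$ rather than the paper's direct split $\dist(u,u')+\dist(u',u'')\ge\veps$; the content is identical.
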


\begin{proof}
Write $T=(T^{(i)},(a^{(i)}_1,\dots,a^{(i)}_{q}),p_i)_{i\in I}$.
Fix $\rho_1,\rho_2>0$ with $\rho_1+\rho_2=1$
and consider the following tester $T'$ for $C\subseteq \Delta^n$:
Given $u\in \Delta^n$, choose $x\in\{1,2\}$ according to the distribution
$(\rho_1,\rho_2)$ and perform routine ($x$) from the following:
\begin{enumerate}
	\item[(1)] Choose $\ell\in [n]$ uniformly at random and accept
	$u$ if and only if $u_\ell\in \Sigma$.
	\item[(2)] Choose $i\in I$ according to $p$.
	Read $u_{a_1^{(i)}},\dots,u_{a_q^{(i)}}$.
	If one of these letters is not in $\Sigma$,
	then reject $u$.
	Otherwise, return $T^{(i)}(u_{a_1^{(i)}},\dots,u_{a_q^{(i)}})$.
\end{enumerate}

As in the proof of Theorem~\ref{TH:comp_of_codes_w_testers},
to show that $T'$ has the claimed soundness,
let $u'\in\Sigma^n$ be a word with minimal distance
from $u$, and let $u''\in C$ be a word with minimal distance
from $u'$.
Then
\[
\dist(u,u')+\dist(u',u'')\geq \dist(u,C)=:\veps.
\]
Moreover,
$\Prob(T'(u)=0\where \text{(1) exec.})=\dist(u,u')$.
Next, writing $R$ for the probability
that $u_{a_1^{(i)}},\dots,u_{a_q^{(i)}}\in \Sigma$
when $i\in I$ distributes according to $p$, 
we have
\begin{align*}
\Prob(T'(u)=0\where \text{(2) is exec.})
& = 
(1-R) \\
& \quad +R\,\Prob_{i\sim p}(T^{(i)}(u_{a_1^{(i)}},\dots,u_{a_q^{(i)}})=0\where
u_{a_1^{(i)}},\dots,u_{a_q^{(i)}}\in \Sigma)
\\
&\geq  (1-R)+R\mu\dist(u',C)\geq \mu\dist(u',u'').
\end{align*}
It follows
that
\[
\Prob(T'(u)=0)\geq \rho_1 \dist(u,u')+\rho_2 \mu \dist(u,u'')
\geq \min\{\rho_1,\rho_2\mu\}\veps.
\]
Taking $\rho_1=\frac{\mu}{\mu+1}$ and $\rho_2=\frac{1}{\mu+1}$
completes the proof of proposition.
\end{proof}

\begin{remark}\label{RM:linearity_of_testers}
	In Theorem~\ref{TH:comp_of_codes_w_testers}, if
	$\Sigma$ and $\Delta$ are $\F$-vector spaces,
	$f:\Sigma\to \Delta$ is $\F$-linear,
	and the code $C$ is an $\F$-code,
	then $E=C\circ D\subseteq \Delta^{nk}$ is also an $\F$-code.
	Moreover, it is clear from the proof 
	that if the testers $T_C$ and $T_D$ are $\F$-linear,
	then so is the tester $T_E$ of $E$.
	
	Similarly, in Proposition~\ref{PR:alphabet_increase}, if
	$\Sigma$ and $\Delta$
	are $\F$-vector spaces such that $\Sigma$
	is a subspace of $\Delta$
	and $T$ is   $\F$-linear, then so is the tester $T'$ for $C\subseteq \Delta^n$.
\end{remark}

\section{The Generalized Hadamard Code and the Generalized Long Code}
\label{sec:gen}

In this section, we introduce two types of 
codes generalizing the (linear) Hadamard code
and the (non-linear) long code, respectively.
They will  ultimately serve as the inner code  when we apply Theorem~\ref{TH:comp_of_codes_w_testers},
and to that end, we will show that
they  can be defined using
$2$-letter constraints.
This stands in contrast to the original
Hadamard code and long code which, as we also show, 
cannot be defined by $2$-letter constraints.

\subsection{A General Construction}
\label{subsec:general_setting}

We begin with a general construction which
is natural in the context of Theorem~\ref{TH:comp_of_codes_w_testers}. 
Let $S$ be a finite set, let $\Delta$
be an alphabet and let $f_1,\dots,f_n:S\to\Delta$
be some functions.
We associate with $S$, $\Delta$ and the collection $\{f_i\}_{i=1}^n$
the code
\[
D(\{f_i\}_{i=1}^n):=\{(f_1(s),\dots,f_n(s))\where s\in S\}\subseteq \Delta^n.
\]
If we further define
$f:S\to \Delta^n$ by  $f(s)=(f_1(s),\dots,f_n(s))$ and assume that $f$
is injective, then this recovers the setting
of Theorem~\ref{TH:comp_of_codes_w_testers}.
In fact, every code $D\subseteq \Delta^n$ equipped with an encoding function $f:S\to D$
is of the form $ D(\{f_i\}_{i=1}^n)$, but the point of our construction
is to build a code  in $\Delta^n$ from   functions $f_1,\dots,f_n:S\to \Delta$ 
rather than doing the opposite.

The construction  recovers
two well-known codes.
When $S=\{0,1\}^k$, $\Delta=\{0,1\}$
and $f_1,\dots,f_n$ consist of all the functions from $S$ to $\{0,1\}$
(so that $n=2^{2^k}$),
the code $C$ is known as the \emph{long code}, e.g., see \cite[\S3]{Bellare_1998_free_bits_PCPs}.
Next, when $S=V$ with $V$  an $\F$-vector space, $\Delta=\F$
and $f_1,\dots,f_n$ are all the $\F$-linear functions
from $V$ to $\F$ (so that $n=|V|$), the code
$C$ is  (up to equivalence)  the $|\F|$-ary \emph{Hadamard code}
associated to the vector space $V$, e.g., see \cite[Exercise~2.22]{Guruswami_2025_ess_coding_thy_preprint} (we also recall the definition in \S\ref{subsec:gen_Hadamard}).

We now introduce a $q$-tester for
$ D(\{f_i\}_{i=1}^n)$.
To that end, we say that
a collection of functions $f_1,\dots,f_q:S\to \Delta$
is \emph{dependent}
if
\[
\im(f_1\times \dots\times f_q)=\{(f_1(s),\dots,f_q(s))\where s\in S\}\subsetneq \Delta^q.
\]
Informally, this means that knowing (say) the values of
$f_1,\dots,f_{q-1}$ on some input $s\in S$  gives some
information on the value of $f_q$ on that input, even without knowing
what $s$ is.
(Note that it is possible for a single function $f_1$ to be dependent, namely,
if its image is strictly smaller than $\Delta$.)
With this   at hand, we introduce:

\begin{dfn}[$q$-dependence tester]
	\label{DF:dep_tester}
	With notation as a above, given a word $w\in \Delta^n$,
	the \emph{$q$-dependence tester}
	for $ D(\{f_i\}_{i=1}^n)$ chooses uniformly at random  
	a tuple
	$(i_1,\dots,i_q)\in [n]^q$  such that $f_{i_1},\dots,f_{i_q}$
	are dependent, and accepts $w$
	if and only if
	$(w_{i_1},\dots,w_{i_q})\in \im(f_{i_1}\times \dots\times f_{i_q})$.
\end{dfn}

Clearly, if the $q$-dependence tester has positive soundness,
then this soundness is at least $n^{-q}$.

\begin{example}\label{EX:dep_tester}
To demonstrate how the $q$-dependence tester works, suppose for simplicity
that $\Delta=\F$, and let $w\in \Delta^k$
be a word that is always accepted by the $3$-dependence tester.
We claim that if $f_i,f_j,f_k$ satisfy $f_i+f_j=f_k $,
then we must have $w_i+w_j=w_k$.
Indeed, this is forced by the fact that $\im(f_i\times f_j\times f_k)\subseteq\{(\alpha,\beta,\alpha+\beta)\where
\alpha,\beta\in \F\}$.
Similarly, if $f_i=\alpha f_j$ for some $\alpha\in \F$,
then $w_i=\alpha w_j$, because for any $k\in [n]$,
$\im(f_i,f_j,f_k)\subseteq\{(\beta,\alpha\beta,\gamma)\where
\beta,\gamma\in \F\}$ ($f_k$ was added artificially to $f_i$ and $f_j$
to make a dependent triple of functions).
More generally, for a general $\Delta$,
if there is a function $g:\Delta\times \Delta\to \Delta$
such that $g\circ (f_i\times f_j)=f_k$ (i.e.\ $g(f_i(s),f_j(s))=f_k(s)$
for all $s\in S$),
then $g(w_i,w_j)=w_k$.
\end{example}

The following useful property of the $q$-dependence tester
will be used freely in the sequel.

\begin{prp}\label{PR:univ_depen_tester}
With notation as above, the code $D(\{f_i\}_{i=1}^n)\subseteq \Delta^n$
is defined by $q$-letter constraints (i.e., it has a $q$-tester with positive soundness)
if and only if its $q$-dependence tester has positive soundness.
\end{prp}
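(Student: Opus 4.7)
The plan is to prove the equivalence by showing that the $q$-dependence tester is essentially the \emph{universal} $q$-tester for $D := D(\{f_i\}_{i=1}^n)$: every $q$-letter constraint accepting all of $D$ is already implied by a ``dependence constraint''. Once this is observed, both implications follow almost formally.

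The ``if'' direction is immediate: any $w = (f_1(s),\dots,f_n(s)) \in D$ tautologically satisfies $(w_{i_1},\dots,w_{i_q}) \in \im(f_{i_1}\times\dots\times f_{i_q})$, so the $q$-dependence tester accepts every codeword. Consequently, if its soundness is positive it is by definition a $q$-tester witnessing that $D$ is defined by $q$-letter constraints.

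For the ``only if'' direction, suppose $D$ admits a $q$-tester $T=(T^{(i)},(a^{(i)}_1,\dots,a^{(i)}_{q}),p_i)_{i\in I}$ with positive soundness. For each $i\in I$, set $V_i := \{u\in \Delta^q \suchthat T^{(i)}(u)=1\}$. Since $T$ accepts all of $D$, each $V_i$ must contain $\im(f_{a^{(i)}_1}\times\dots\times f_{a^{(i)}_q})$. Now fix any $w \in \Delta^n \setminus D$: positive soundness guarantees there exists some $i\in I$ with $T^{(i)}(w_{a^{(i)}_1},\dots,w_{a^{(i)}_q})=0$, i.e., $(w_{a^{(i)}_1},\dots,w_{a^{(i)}_q}) \notin V_i$. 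Hence $V_i \subsetneq \Delta^q$, which forces $\im(f_{a^{(i)}_1}\times\dots\times f_{a^{(i)}_q}) \subsetneq \Delta^q$; thus the tuple $(a^{(i)}_1,\dots,a^{(i)}_q)$ is dependent, and the value $(w_{a^{(i)}_1},\dots,w_{a^{(i)}_q})$ lies outside this image. Therefore, whenever the $q$-dependence tester samples this particular dependent tuple, it rejects $w$. Since there are at most $n^q$ dependent $q$-tuples and $\dist(w,D)\leq 1$ for every $w$, the $q$-dependence tester rejects every $w \notin D$ with probability at least $n^{-q} \geq n^{-q}\cdot \dist(w,D)$, so it has positive soundness.

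I do not expect any serious obstacle here: the argument is essentially an unpacking of definitions, hinging on the observation that any non-trivial $q$-letter constraint accepting $D$ must live over a dependent $q$-tuple and cut out a subset of $\Delta^q$ containing $\im(f_{a_1}\times\dots\times f_{a_q})$. One minor point to mention is the degenerate case where no $q$-tuple is dependent: then every $q$-letter constraint accepting $D$ is trivial, which forces $D=\Delta^n$, and both sides of the equivalence become vacuous.
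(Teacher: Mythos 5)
Your proof is correct and follows essentially the same route as the paper: for any $w\notin D$, positive soundness of $T$ produces a rejecting constraint, which (since $T$ accepts all of $D$) must sit over a dependent tuple on which $w$ falls outside the image, so the $q$-dependence tester rejects $w$ with positive probability. The quantitative $n^{-q}$ bound and the degenerate-case remark are harmless additions; the paper leaves these implicit since it only needs positive soundness.
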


\begin{proof}
The ``if'' direction is clear, so we treat the ``only if'' direction.
Let $T=(T^{(i)},(a_{1}^{(i)},\dots,a_{q}^{(i)}),p_i)_{i\in I}$
be a $q$-tester for $D:=D(\{f_i\}_{i=1}^n)$ having positive soundness.
Let $w\in \Delta^n-D$. It is enough to prove that $w$ fails the $q$-dependence
test with some positive probability. Since $T$ has positive soundness,
there is $i\in I$ such that $T^{(i)}(w_{a_1^{(i)}},\dots,w_{a_q^{(i)}})=0$.
Since $T$ accepts all words in $D$, we must also
have $T^{(i)}(f_{a_1^{(i)}}(s),\dots,f_{a_q^{(i)}}(s))=1$ for all $s\in S$.
Consequently, $(w_{a_1^{(i)}},\dots,w_{a_q^{(i)}})\notin
\im(f_{a_1^{(i)}}\times \dots \times f_{a_q^{(i)}})$.
This means that $f_{a_1^{(i)}},\dots, f_{a_q^{(i)}}$ are dependent
and that $w$ will be rejected if the $q$-dependence test chooses these functions.
\end{proof}

\subsection{The Generalized Hadamard Code}
\label{subsec:gen_Hadamard}

Keeping the notation above,
suppose that both $S$
and $\Delta$
are $\F$-vector spaces.
We write $V$ in place of $S$ to denote that.
Let  $f_1,\dots,f_n$ enumerate all the $\F$-linear functions from 
$V$ to $\Delta$ (so that $n=|\F|^{\dim V\cdot \dim \Delta}=|\Delta|^{\dim V}=|V|^{\dim \Delta}$). 
The \emph{generalized Hadamard code} associated
to $V$ and $\Delta$ is defined to be
\[
H(V,\Delta):=D(\{f_i\}_{i=1}^n)=\{(f_1(v),\dots,f_n(v))\where v\in V\}\subseteq \Delta^n.
\]
It easy to see that $H(V,\Delta)\subseteq\Delta^n$
is an $\F$-code
of relative distance $1-\frac{1}{|\Delta|}$
and rate $\frac{\log_{|\Delta|}|V|}{|\Delta|^{\dim V}}=\frac{\dim V}{\dim\Delta\cdot|\Delta|^{\dim V}}=\frac{\log_{|\Delta|}n}{n\dim \Delta}$.

\begin{remark}
Recall that the Hadamard code associated to an $\F$-vector space $V$
is   
$\Hom_{\F}(V,\F)$ ---
the set of linear
functions from $V$ to $\F$ --- viewed as a subset of $\F^V$, i.e.,
all functions from $V$ to $\F$. More concretely, writing $v_1,\dots,v_n$ for the vectors in $V$,
the Hadamard code of $V$
is
\[
H'=\{(f(v_1),\dots,f(v_n))\where \text{$f:V\to \F$ is linear}\}\subseteq \F^n.
\]
This code is in fact equivalent to $H(V,\F)$,
which is the reason why we call $H(V,\Delta)$ a generalized Hadamard code.
Indeed, let $V^*$ denote the dual vector space of $V$.
Then $f_1,\dots,f_n$ are the vectors of $V^*$.
Since every linear map $g$ from $V^*$ to $\F$ admits a unique
$v\in V$
such that   $g(f)=f(v)$  for all $f\in V^*$,
the code $H'$ is just $H(V^*,\F)$. Choosing an isomorphism $V\to V^*$
then gives an equivalence between $H'$ and $H$.
\end{remark}

It is well-known that the code $H(V,\F)$ is $3$-testable.
Indeed, it is equivalent to the ordinary 
Hadamard code $\Hom_\F(V,\F)\subseteq \F^V$,
which is defined by $3$-letter constraints, namely,
\begin{align*}
\Hom_\F(V,\F)=\{f:V\to \F\suchthat & \text{$f(x+y)=f(x)+f(y)$ and} 
\\
& \text{$f(\alpha x)=\alpha f(x)$ for all $x,y\in V$, $\alpha\in \F$}\}.
\end{align*}
Tracing this back to $H(V,\F)\subseteq \F^n$ ($n=|V|$) shows that
a word $w\in \F^n$ is in $H(V,\F)$ if and only if
\begin{enumerate}[label=(\arabic*)]
		\item $w_i=w_j+w_k$ for all $i,j,k\in [n]$ such that $f_i=f_j+f_k$ and
		\item $w_i=\alpha w_j$ for all $i,j\in [n]$ and $\alpha\in \F$
		such that $f_i=\alpha f_j$.
\end{enumerate}
Consequently, the $3$-dependence tester of 
$H(V,\F)$ has positive soundness  (cf.\ Example~\ref{EX:dep_tester}).
(In fact, for some fields $\F$, e.g.\ $ \F_2$,
the Hadamard code is known to have $3$-testers with soundness that
is independent of $V$; see \cite{Blum_1993_self_testing}, \cite{Bellare_1996_linearity_testing_char_two} and the many followup works.)
On the other hand, it is not difficult to see that
the $2$-dependence tester for $H(V,\F)$ has soundness $0$ if $\dim V>1$; the reason
is that a pair of $\F$-linear functions
$f,g:V\to \F$ is dependent if and only if they are
proportional. As a result,
$H(V,\F)$ is not defined by $2$-letter constraints
(Proposition~\ref{PR:univ_depen_tester}).
By contrast, we  now show that  $H(V,\Delta)$ 
is defined by $2$-letter constraints when $\dim \Delta\geq 2$.

\begin{thm}\label{TH:gen_Hadamard_2_test}
	Let $\Delta$ be a vector space of dimension $>1$
	and let $V$ be any vector space.
	Then the $2$-dependence tester (Definition~\ref{DF:dep_tester})
	for the code $H(V,\Delta)\subseteq \Delta^n$
	has positive soundness (depending on $n$).
\end{thm}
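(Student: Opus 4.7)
By Proposition~\ref{PR:univ_depen_tester}, it suffices to show that any $w \in \Delta^n$ satisfying every $2$-dependence constraint lies in $H(V,\Delta)$; positive soundness then follows since the number of pairs is finite. The key point is that $\dim \Delta^2 = 2\dim\Delta \geq 4$, so many pairs of linear maps $V \to \Delta$ are dependent because their pair-image factors through a subspace of $\Delta^2$ of dimension strictly less than $2\dim\Delta$. For example, for any $\phi \in V^*$ and any $\delta, \delta' \in \Delta$, the rank-one pair $(f_\phi^\delta, f_\phi^{\delta'})$ with $f_\phi^\delta : v \mapsto \phi(v)\delta$ has $1$-dimensional image in $\Delta^2$ and is therefore dependent.

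The plan is to reconstruct a vector $v \in V$ with $w_i = f_i(v)$ for every $i$, using only the $2$-dependence constraints on $w$. For each $\phi \in V^* \setminus \{0\}$ and $\delta \in \Delta$, write $w(\phi,\delta)$ for the coordinate of $w$ at the unique index $i$ with $f_i = f_\phi^\delta$. The constraint on $(f_\phi^\delta, f_\phi^{\delta'})$ for $\delta, \delta'$ linearly independent in $\Delta$ (which exist since $\dim\Delta \geq 2$) forces $w(\phi,\delta) \in \F\delta$, and comparing across $\delta$'s yields a well-defined scalar $\alpha_\phi \in \F$ with $w(\phi, \delta) = \alpha_\phi \delta$ for every $\delta \in \Delta$. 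Set $\alpha_0 := 0$. Homogeneity $\alpha_{c\phi} = c\alpha_\phi$ is immediate from $f_{c\phi}^\delta = f_\phi^{c\delta}$.

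The additivity of $\alpha$ is the heart of the argument. Given $\phi, \phi' \in V^*$ with $\phi'' := \phi + \phi'$, fix a basis $\delta_1, \delta_2$ of a $2$-dimensional subspace of $\Delta$ and consider $g := f_\phi^{\delta_1} + f_{\phi'}^{\delta_2}$. The pair-images of $(g, f_\phi^{\delta_1})$ and $(g, f_{\phi'}^{\delta_2})$ are contained in $\Delta \times \F\delta_1$ and $\Delta \times \F\delta_2$, respectively, each of dimension $\dim\Delta + 1 < 2\dim\Delta$; so both pairs are dependent, and running their constraints pins down $w_g = \alpha_\phi \delta_1 + \alpha_{\phi'}\delta_2$. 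The pair $(g, f_{\phi''}^{\delta_1})$ is dependent by the same dimension count, and its constraint produces a single $v \in V$ with $\phi(v) = \alpha_\phi$, $\phi'(v) = \alpha_{\phi'}$, and $\phi''(v) = \alpha_{\phi''}$, forcing $\alpha_{\phi''} = \phi(v) + \phi'(v) = \alpha_\phi + \alpha_{\phi'}$. Hence $\alpha : V^* \to \F$ is linear and, since $V$ is finite-dimensional, corresponds to a vector $v \in V$ with $\alpha_\phi = \phi(v)$ for all $\phi$.

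It remains to check that $w_i = f_i(v)$ for every $i$. Expand $f_i(v) = \sum_{k=1}^{d} \phi_k(v)\delta_k$ in a basis $\delta_1,\dots,\delta_d$ of $\Delta$. The pair-image of $(f_i, f_{\phi_k}^{\delta_k})$ is contained in $\Delta \times \F \delta_k$, hence is proper in $\Delta^2$, and the constraint identifies the $\delta_k$-coefficient of $w_i$ with $\phi_k(v)$; combining over $k$ yields $w_i = f_i(v)$. The main obstacle is the additivity step, and this is precisely where $\dim \Delta \geq 2$ is indispensable: without a second independent direction $\delta_2$, the map $g$ collapses to $f_{\phi + \phi'}^{\delta_1}$ and no additivity information can be extracted, matching the known failure of $2$-testability for the ordinary Hadamard code $H(V,\F)$ when $\dim V > 1$.
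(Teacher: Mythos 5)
Your proof is correct and follows essentially the same approach as the paper, which normalizes $\Delta=\F^m$ and works in coordinates but uses the identical rank-two helper function (your $g=f_\phi^{\delta_1}+f_{\phi'}^{\delta_2}$ is the paper's $f_\ell=(g_i,g_j,\vec 0)$) to establish additivity, and the same final step of recovering $w_i=f_i(v)$ coordinatewise via dependence of $f_i$ with rank-one maps. The coordinate-free phrasing through $V^*$ and the maps $f_\phi^\delta$ is a mild stylistic variant, not a different argument.
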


Let us first explain the idea of the proof for $\Delta=\F^2$.
The generalized Hadamard code
$H(V,\Delta)$ is defined by   the same kind of $3$-letter
constraints (1), (2) above. However, when $\Delta=\F^2$, one can
simulate such  $3$-letter constraints using only $2$ letters by packing pairs of scalar values into a single vector in $\F^2$.
Concretely, if $g_1, g_2, g_3 : V \to \mathbb{F}$ are some  linear functionals with
$g_1 + g_2 = g_3$, then there is some $\ell\in[n]$
such that   
$f_\ell=(g_1,g_2):V\to \Delta=\F^2$, that is,
$ f_\ell(v)= (g_1(v), g_2(v) ) $. This single function
is  dependent
with each of the functions $ (g_1, 0 ) ,   (g_2, 0 ),(g_3, 0 ):V\to \F^2$. Given a word $w\in\Delta^n$ passing the
$2$-dependence test,
the proof uses such dependencies to first recover from  $w$ a vector
$v\in V$ such that whenever $f_i:V\to \Delta$ has the form $(g ,0)$,
we have $w_i= (g (v),0 )$. Then, to show that $f_j(v)=w_j$ for general $j$,
one writes $f_j=(g_1,g_2)$ and uses the dependency between
$f_j$ and each of the functions $(g_1,0)$, $(g_2,0)$.

\begin{proof}[Proof of Theorem~\ref{TH:gen_Hadamard_2_test}]
	We may assume without loss of generality
	that $\Delta=\F^m$ for some $m>1$.
	Given functions   $g_1,\dots,g_r\in \Hom_{\F} (V,\F)$ ($r\leq m$),
	we denote by $(g_1,\dots,g_r,\vec{0})$
	the function from $V$ to $\Delta=\F^m$
	sending $v\in V$ to $(g_1(v),\dots,g_r(v)$, $0,\dots,0)\in \F^m=\Delta$.
	Let   $g_1,\dots,g_t$ enumerate all
	the linear functions from $V$ to $\F$ ($t=|V|$).
	We may assume that $f_1,\dots,f_n$
	are numbered so that $f_i=(g_i,\vec{0})$ whenever
	$1\leq i\leq t$.

	Let $w\in \Delta^n$ be a word that is always accepted
	by the $2$-dependence tester
	of $H(V,\Delta)$. We need to show that $w\in H(V,\Delta)$.

	Since $\im (f_1),\dots,\im (f_t)\subseteq \F\times\{0\}^{m-1}$
	and $w$ passes the $2$-dependence test,
	we must have $w_1,\dots,w_t\in \F \times\{0\}^{m-1}$.
	Thus, for each $i\in [t]$, we can write $w_i=(u_i,0,\dots,0)$
	with $u_i\in\F$.
	
	Let
	$u=u_1\cdots u_t$.
	We claim that  $u\in D(\{g_i\}_{i=1}^t)= H(V,\F)$.
	As we noted earlier, in order to show this, it
	is enough to check that for all $i,j,k\in [t]$
	with $g_i+g_j=g_k$ (equiv.\ $f_i+f_j=f_k$),
	we have $u_i+u_j=u_k$,
	and for all $i,j\in [t]$ and $\alpha\in \F$
	with $g_i=\alpha g_j$ (equiv.\ $f_i=\alpha f_j$),
	we have $u_i=\alpha u_j$.
	The second statement holds because $w$ 
	passes the $2$-dependence test,
	so we only need to establish the first.
	There is $\ell\in [n]$ such that $f_\ell=(g_i,g_j,\vec{0})$.
	Then $\im (f_i\times f_\ell)\subseteq \{((\alpha,0,\dots,0),(\alpha,\beta,0,\dots,0))\where \alpha,\beta\in\F\} $,
	and so our assumption on $w$ implies that $w_\ell$
	has the form $(u_i,*,0,\dots,0)$.
	Repeating this argument with $f_j$ in place of $f_i$
	gives $w_\ell=(u_i,u_j,0,\dots,0)$.
	Now, since 	 $\im (f_k\times f_\ell)\subseteq \{((\alpha+\beta,0,\dots,0),(\alpha,\beta,0,\dots,0))\where
	\alpha,\beta\in\F\}$, we must have $u_k=u_i+u_j$, which is what we want.
	We conclude that $u\in H(V,\F)$.
	
	Let $v\in V$ be a vector such that $u_i=g_i(v)$ for all $i\in [t]$.
	We finish the proof by showing that $w_i=f_i(v)$ for all $i\in [n]$.
	Let $i\in [n]$. Then there are $i_1,\dots,i_m\in[t]$
	such that $f_i=(g_{i_1},\dots,g_{i_m})$.
	Now,  for every $\ell\in [m]$, 
	the dependence between $f_i$ and $f_{i_\ell}$ implies
	that
	the $\ell$-th component of $w_i\in\F^m$ is $u_{i_\ell}=g_{i_\ell}(v)$.
	As this holds for every $\ell$,
	it follows that $w_i=(g_{i_1}(v),\dots,g_{i_m}(v))=f_i(v)$,
	as required.
\end{proof}

\subsection{The Generalized Long Code}
\label{subsec:gen_long_code}

Returning to the general setting of \S\ref{subsec:general_setting},
let $f_1,\dots,f_n$ denote all functions from $S$
to $\Delta$ (so that $n=|\Delta|^{|S|}$). We call
\[
L(S,\Delta):=D(\{f_i\}_{i=1}^n)=\{(f_1(s),\dots,f_n(s))\where s\in S\}\subseteq\Delta^n
\]
the \emph{generalized long code} associated to $S$ and $\Delta$.
This code has relative distance $1-\frac{1}{|\Delta|}$
and rate $\frac{\log_{|\Delta|}|S|}{|\Delta|^{|S|}}=\frac{\log_{|\Delta|}\log_{|\Delta|} n}{n}$.
As we noted earlier, this is a generalization of the \emph{long code},
which arises as the special case where $S=\{0,1\}^k$ and $\Delta=\{0,1\}$.

We will show in Remark~\ref{RM:long_code_not_2_test} that $L(S,\Delta)$ cannot
be defined by $2$-letter constraints when $|\Delta|=2$.
However, it turns out that it can be defined by $2$-letter
constraints when $|\Delta|>2$.

\begin{thm}\label{TH:gen_long_code_2_test}
	Let $S$ be a set and let $\Delta$ be an alphabet with more than $2$ letters.
	Then the $2$-dependence tester of the generalized long code 
	$L(S,\Delta)\subseteq \Delta^n$ has positive soundness (depending on $n$).
\end{thm}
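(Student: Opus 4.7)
My plan mirrors the strategy of the proof of Theorem~\ref{TH:gen_Hadamard_2_test}: take a word $w\in\Delta^n$ that is always accepted by the $2$-dependence tester of $L(S,\Delta)$, encode its ``self-evaluation'' data as a family of subsets of $S$, and use the extra letters of $\Delta$ to pin down a single $s^*\in S$ such that $w_i=f_i(s^*)$ for all $i\in[n]$. Fix three distinct letters $a,b,c\in\Delta$ (possible since $|\Delta|>2$), and for each $A\subseteq S$ introduce the indicator-like function $\psi_A:S\to\{a,b\}\subseteq\Delta$ sending $A$ to $a$ and $S\setminus A$ to $b$; it equals some $f_{i_A}$, and I set
\[
X:=\{A\subseteq S\suchthat w_{i_A}=a\}.
\]
The goal is to show that $X$ contains a singleton $\{s^*\}$, and then to recover $w$ as the codeword indexed by $s^*$.

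Running the $2$-dependence tester on pairs $(\psi_A,\psi_{A'})$ yields three basic properties of $X$: first, $S\in X$ and $\emptyset\notin X$, because $\psi_S$ and $\psi_\emptyset$ are constant; second, $X$ is upward closed, because $A\subseteq A'$ forces $(a,b)\notin\im(\psi_A\times\psi_{A'})$, so $w_{i_A}=a$ compels $w_{i_{A'}}=a$; third, $X$ is intersecting, because $A\cap A'=\emptyset$ forces $(a,a)\notin\im(\psi_A\times\psi_{A'})$. The crucial additional step, where the existence of a third letter $c\in\Delta\setminus\{a,b\}$ enters, is the following three-partition property: for every partition $S=A\sqcup B\sqcup C$ into three non-empty parts, \emph{exactly one} of $A,B,C$ lies in $X$. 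The ``at most one'' half follows from intersecting. For ``at least one'', I introduce the three-valued function $\phi_{A,B,C}:S\to\{a,b,c\}$ sending each part to its label; applying the tester to $(\psi_A,\phi_{A,B,C})$, $(\psi_B,\phi_{A,B,C})$, and $(\psi_C,\phi_{A,B,C})$ (whose images are $\{(a,a),(b,b),(b,c)\}$, $\{(b,a),(a,b),(b,c)\}$, and $\{(b,a),(b,b),(a,c)\}$), one sees that if all three parts are outside $X$ then $w_{i_{\phi_{A,B,C}}}$ must lie in $\{b,c\}\cap\{a,c\}\cap\{a,b\}=\emptyset$, a contradiction. This step rules out the majority-type counterexamples that are available when $|\Delta|=2$.

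A short set-theoretic argument now shows that for $|S|\geq 3$ any inclusion-minimal element $M$ of $X$ must be a singleton: if $|M|\geq 2$ and $M\neq S$, partitioning $S=\{s_1\}\sqcup(M\setminus\{s_1\})\sqcup(S\setminus M)$ and invoking the three-partition property forces $S\setminus M\in X$, contradicting intersecting with $M\in X$; the case $M=S$ is excluded by applying the three-partition property to any partition of $S$ into three non-empty parts. The small cases $|S|\leq 2$ are handled directly; for instance, when $|S|=2$ the tester on $(\psi_{\{s_1\}},\psi_{\{s_2\}})$, whose image is $\{(a,b),(b,a)\}$, forces one of the two singletons into $X$. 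Once $\{s^*\}\in X$, the pair $(\psi_{\{s^*\}},f_i)$ has image inside $\{a,b\}\times\Delta\subsetneq\Delta^2$ (a proper containment, again using $|\Delta|>2$), so is dependent; since the only element of the image with first coordinate $a$ is $(a,f_i(s^*))$, the tester forces $w_i=f_i(s^*)$ for every $i$, so $w$ is the codeword at $s^*$. The main technical obstacle is the ``at least one'' direction of the three-partition property, where $|\Delta|>2$ is genuinely needed, as Remark~\ref{RM:long_code_not_2_test} shows.
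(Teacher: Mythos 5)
Your proof is correct, and it takes a genuinely different route from the paper's. The paper reduces the problem to the binary long code $L(S,\{0,1\})$: it first shows (Proposition~\ref{PR:long_code_3_test}, via a ring-theoretic characterization of homomorphisms $\F_2^S\to\F_2$) that $L(S,\{0,1\})$ is defined by $3$-letter constraints, then invokes Lemma~\ref{LM:critical_lemma}, which introduces ternary auxiliary functions $g_{i,j},g'_{i,j}$ and shows, for an \emph{arbitrary} binary code defined by $3$-letter constraints, that the resulting ternary extension is defined by $2$-letter constraints; finally it extends the conclusion from the ternary sub-block to all of $[n]$ using the point-indicator functions $g_a$. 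Your argument instead works directly with the combinatorial structure of the accepted word: you read off the up-set $X\subseteq 2^S$ from the two-valued coordinates $\psi_A$, establish that $X$ contains $S$, excludes $\emptyset$, is upward closed and intersecting, and then use the three-valued functions $\phi_{A,B,C}$ to prove the three-partition property, which is exactly what forces $X$ to be a principal ultrafilter $\{A: s^*\in A\}$. This is precisely the step that fails for $|\Delta|=2$, and it correctly identifies where the majority counterexample of Remark~\ref{RM:long_code_not_2_test} lives (a non-principal ``quasi-ultrafilter''). What the paper's route buys is the reusable Lemma~\ref{LM:critical_lemma}, which is applied again in Theorem~\ref{TH:alphabet_red_semilin} to get a polynomial-in-$|\Sigma|$ soundness loss for $\F_2$-linear codes; what your route buys is a shorter, self-contained, purely combinatorial argument that makes the ultrafilter picture of the long code explicit and avoids the ring-theoretic detour through $\F_2^S$.

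One small point worth making explicit when writing this up: for every pair you feed to the $2$-dependence tester (e.g.\ $(\psi_A,\psi_{A'})$ with $A\subseteq A'$, or $(\psi_{\{s^*\}},f_i)$) you should note that the pair is indeed dependent, i.e.\ its joint image is a proper subset of $\Delta^2$; this is always easy (the joint image misses at least one pair such as $(a,b)$, or is contained in $\{a,b\}\times\Delta$), but the dependence tester only examines dependent tuples, so the claim needs to be recorded.
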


In order to prove the theorem, we first note that $L(S,\{0,1\})$
has a $3$-tester with positive soundness. This is well-known for $S=\{0,1\}^k$, e.g., see 
\cite[Prp.~3.2]{Bellare_1998_free_bits_PCPs},
and  we give here a proof for general $S$.

\begin{prp}\label{PR:long_code_3_test}
With notation as above, suppose that $\Delta=\F_2$ 
and the functions $f_1,\dots,f_n$ from $S$
to $\Delta$
are numbered so 
that $f_1$ is the function sending every element of $S$ to $1$.
Then a word $w\in \F_2^n$ is in $L(S,\Delta)$ if and only if
\begin{enumerate}[label=(\arabic*)]
	\item $w_i +w_j= w_k$ (in $\F_2$) for every $i,j,k\in [n]$ such that $f_i+f_j=f_k$,
	\item $w_i\cdot w_j=w_k$ for every $i,j,k\in [n]$
	such that $f_i\cdot f_j=f_k$, and
	\item $w_1=1$. 
\end{enumerate}
In particular, $L(S,\F_2)$
is defined by $3$-letter constraints.
\end{prp}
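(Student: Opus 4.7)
The ``only if'' direction is immediate: if $w_i = f_i(s)$ for a common $s \in S$, then evaluating the relevant identities among the $f_i$'s at $s$ yields (1) and (2), while (3) follows from $f_1(s) = 1$.

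For the ``if'' direction, the strategy is to use the three conditions to pin down a unique $s^* \in S$ and then recover $w$ as the evaluation at $s^*$. First I would observe that the zero function $0 \colon S \to \F_2$ appears in the list $f_1, \ldots, f_n$, say as $f_{j_0}$; applying (1) to $f_i + f_i = f_{j_0}$ gives $w_{j_0} = 0$. Next, for each $s \in S$ let $\chi_s$ denote the indicator of $\{s\}$, and let $i_s \in [n]$ be the (unique) index with $f_{i_s} = \chi_s$. Applying (2) to the identity $\chi_s \cdot \chi_t = 0 = f_{j_0}$ for $s \neq t$ yields $w_{i_s} w_{i_t} = 0$, so at most one of the values $w_{i_s}$ equals $1$. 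On the other hand $\sum_{s \in S} \chi_s = f_1$, and each partial sum of indicators again lies in the list $f_1, \ldots, f_n$ because the list enumerates \emph{all} functions $S \to \F_2$; iterating (1) and invoking (3) therefore gives $\sum_{s \in S} w_{i_s} = w_1 = 1$. Combined, these force exactly one $s^* \in S$ with $w_{i_{s^*}} = 1$ and $w_{i_s} = 0$ for $s \neq s^*$.

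To conclude, for any $i \in [n]$ I would write $f_i = \sum_{s \in f_i^{-1}(1)} \chi_s$ and iterate (1) once more to obtain
\[
w_i \;=\; \sum_{s \in f_i^{-1}(1)} w_{i_s} \;=\; f_i(s^*),
\]
since the right-hand sum equals $1$ exactly when $s^* \in f_i^{-1}(1)$. Hence $w = (f_1(s^*), \ldots, f_n(s^*)) \in L(S, \F_2)$. The ``in particular'' assertion is then immediate, since each of the constraints (1), (2), (3) involves at most three coordinates of $w$.

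\textbf{Expected main obstacle.} The only genuinely nonroutine step is the recognition that condition (2) is precisely what prevents $w$ from encoding an arbitrary $\F_2$-linear combination of the indicator basis $\{\chi_s\}_{s \in S}$ and instead forces the encoded vector to be a single standard basis vector; conditions (1) and (3) on their own characterize the $\F_2$-subspace spanned by $L(S, \F_2)$ inside $\F_2^n$, and it is (2) that cuts out the codewords themselves. Once this observation is in hand, both the identification of $s^*$ and the subsequent reconstruction of $w$ via iterated application of (1) are essentially forced.
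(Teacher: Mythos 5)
Your proof is correct, and it takes a genuinely different route from the paper's. The paper interprets $w$ as a map $\F_2^S\to\F_2$ and observes that conditions (1)--(3) say precisely that this map is a unital ring homomorphism; it then invokes the structure theory of products of rings to identify the kernel with an evaluation ideal $\{f: f(s^*)=0\}$, from which $w_i=f_i(s^*)$ follows. Your argument is more elementary and self-contained: you work directly with the indicator functions $\chi_s$, using (2) to show that the values $w_{i_s}$ form an (at most) orthogonal family, using iterated applications of (1) together with (3) to force $\sum_s w_{i_s}=1$, and then iterating (1) once more along the decomposition $f_i=\sum_{s\in f_i^{-1}(1)}\chi_s$ to reconstruct $w_i=f_i(s^*)$. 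The two arguments are of course closely related --- the orthogonal idempotents $\chi_s$ are exactly what underlies the product decomposition of $\F_2^S$ --- but yours avoids the ring-theoretic packaging, at the small cost of needing to spell out the iteration of (1) over arbitrary-length sums of basis indicators (which is routine but worth a sentence in a write-up, including the degenerate case $f_i^{-1}(1)=\emptyset$, where $f_i$ is the zero function and $w_i=0$ was established separately). One small aside: in your closing remark you say that (1) and (3) characterize the $\F_2$-subspace spanned by $L(S,\F_2)$; strictly speaking, condition (1) alone cuts out the linear span (the additive maps $\F_2^S\to\F_2$), and adding (3) restricts to an affine hyperplane inside it. This does not affect the proof, which is sound.
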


\begin{proof}
	The ``only if'' part is clear, so we turn to prove the ``if''
	part. 
	The functions $f_1,\dots,f_n:S\to \F_2$
	are the elements of the $\F_2$-algebra $\F_2^S$ of functions from $S$ to $\F_2$.
	Let us think of $w$ as a function from $\F_2^S$ to $\F_2$  mapping $f_i$ to $w_i$
	for all $i\in [n]$.
	Assumptions (1)--(3) mean that $w$ is a ring homomorphism (respecting the unity)
	from $\F_2^S$
	to $\F_2$. In particular, its kernel is a maximal ideal $M$ of $\F_2^S$.
	From the structure theory of products of rings, 
	$M=\prod_{s\in S}I_s$, where each $I_s$ is an ideal of the ring $\F_2$.
	Since $M$ is maximal, there must be some $s\in S$ such that
	$I_s=0$ and $I_{s'}=\F_2$ for all $s'\neq s$,
	or equivalently,	
	$M=\{f\in \F_2^S\suchthat f(s)=0\}$.
	Now, for every $i\in [n]$, either $f_i\in M=\ker(w)$, and then $w_i=0=f_i(s)$,
	or $f_i\notin M=\ker(w)$, and then $w_i=1=f_i(s)$. We conclude that
	$w_i=f_i(s)$ for all $s\in S$, and thus $w\in L(S,\Delta)$.
\end{proof}

We shall also need the following key lemma.

\begin{lem}\label{LM:critical_lemma}
	Let $g_1,\dots,g_t:S\to\{0,1\}$ be distinct functions.
	For every $i,j\in [t]$, let $g_{i,j}$ and $g'_{i,j}$
	denote the functions from $S$ to $\{0,1,2\}$
	defined as follows:
	\[
	g_{i,j}(s)=
	\left\{
	\begin{array}{ll}
	0 & f_i(s) = 0\\
	1 & f_i(s)=1 \wedge f_j(s)=0 \\
	2 & f_i(s)=1 \wedge f_j(s)=1
	\end{array}
	\right.
	\qquad
	g'_{i,j}(s)=
	\left\{
	\begin{array}{ll}
	1 & f_i(s) = 1\\
	0 & f_i(s)=0 \wedge f_j(s)=1 \\
	2 & f_i(s)=0 \wedge f_j(s)=0
	\end{array}
	\right.
	\]
	Let $f_1,\dots,f_n:S\to \{0,1,2\}$
	enumerate the functions in $\{g_i,g_{i,j},g'_{i,j}\where i,j\in [t]\}$.
	If the code $D':=D(\{g_i\}_{i=1}^t)\subseteq \{0,1\}^t$ is defined
	by $3$-letter constraints,
	then   $D:=D(\{f_j\}_{j=1}^n)\subseteq \{0,1,2\}^n$
	is defined by $2$-letter constraints.
\end{lem}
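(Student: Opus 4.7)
By Proposition~\ref{PR:univ_depen_tester}, it is enough to show that the $2$-dependence tester for $D=D(\{f_j\}_{j=1}^n)$ has positive soundness. Since everything in sight is finite, this reduces to the following claim: every $w\in\{0,1,2\}^n$ satisfying every $2$-letter constraint imposed by a dependent pair of coordinates must already lie in $D$. Fix such a $w$. Because each $g_i$ takes values in $\{0,1\}\subsetneq\{0,1,2\}$, every pair $(g_i,g_{i'})$ is dependent in $\{0,1,2\}^2$, which forces the coordinates of $w$ corresponding to the $g_i$'s to lie in $\{0,1\}$ and to assemble into a vector $u=(w_{g_1},\dots,w_{g_t})\in\{0,1\}^t$ with $(u_i,u_{i'})\in\im(g_i\times g_{i'})$ for all $i,i'\in[t]$.

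The first step is to show that the remaining coordinates of $w$ are fully determined by $u$. Inspection of $\im(g_i\times g_{i,i'})$ and $\im(g_{i'}\times g_{i,i'})$ pins down $w_{g_{i,i'}}$ uniquely as $0$ if $u_i=0$, as $1$ if $(u_i,u_{i'})=(1,0)$, and as $2$ if $(u_i,u_{i'})=(1,1)$; this is exactly $g_{i,i'}(s)$ for any $s\in S$ realizing $(g_i(s),g_{i'}(s))=(u_i,u_{i'})$, and such $s$ exists by the pairwise constraint on $u$ established above. A symmetric analysis of $g'_{i,i'}$ yields $w_{g'_{i,i'}}=g'_{i,i'}(s)$ for the same $s$.

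The heart of the argument is to show $u\in D'$. Since $D'$ is defined by $3$-letter constraints, Proposition~\ref{PR:univ_depen_tester} applied to $D'$ reduces this to verifying that $u$ passes every $3$-dependence test. Suppose for contradiction that there is a dependent triple $(g_i,g_j,g_k)$ with $(u_i,u_j,u_k)\notin\im(g_i\times g_j\times g_k)$. The pairs $(g_{i,j},g_k)$, $(g_{j,i},g_k)$, $(g'_{i,j},g_k)$ are all dependent in $\{0,1,2\}^2$ because $g_k$ is $\{0,1\}$-valued. Splitting by the value of $(u_i,u_j)$: if $u_i=1$, the $2$-letter constraint for $(g_{i,j},g_k)$ combined with the value $w_{g_{i,j}}\in\{1,2\}$ from the previous paragraph forces some $s\in S$ with $(g_i(s),g_j(s),g_k(s))=(1,u_j,u_k)$; if $(u_i,u_j)=(0,1)$, the analogous argument with $(g_{j,i},g_k)$ produces $s$ realizing $(0,1,u_k)$; and if $u_i=u_j=0$, the argument with $(g'_{i,j},g_k)$ and the value $w_{g'_{i,j}}=2$ produces $s$ realizing $(0,0,u_k)$. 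Each alternative contradicts the standing assumption, so $u\in D'$. Choosing $s\in S$ with $g_i(s)=u_i$ for every $i\in[t]$, the second paragraph then gives $w_j=f_j(s)$ for every $j\in[n]$, whence $w\in D$.

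The main obstacle is the case analysis in the third paragraph, which requires the three auxiliary families $g_{i,j}$, $g_{j,i}$, $g'_{i,j}$ to collectively cover all four values of $(u_i,u_j)\in\{0,1\}^2$. The asymmetric design of $g_{i,j}$ (whose value $2$ detects ``$g_i=g_j=1$'') versus $g'_{i,j}$ (whose value $2$ detects ``$g_i=g_j=0$'') is what makes this coverage possible; dropping either family would leave one of the four cases unresolvable by any $2$-letter constraint against $g_k$.
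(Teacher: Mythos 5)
Your proof is correct and follows essentially the same approach as the paper: reduce to the $2$-dependence tester via Proposition~\ref{PR:univ_depen_tester}, observe that the first $t$ coordinates lie in $\{0,1\}$ and the auxiliary coordinates are forced, then show $u\in D'$ by assuming a violated $3$-constraint and deriving a contradiction through the dependency of an auxiliary pair. The only cosmetic difference is that you split directly on all four values of $(u_i,u_j)$ using $g_{i,j}$, $g_{j,i}$, and $g'_{i,j}$, whereas the paper first applies pigeonhole to assume two of the three queried values coincide and then only needs $g_{i,j}$ (if that value is $1$) or $g'_{i,j}$ (if it is $0$); the two case analyses cover the same ground.
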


\begin{proof}
	We may assume that $f_1,\dots,f_n$
	are chosen so that $f_i=g_i$ when $i\in [t]$.
	
	Suppose $w\in \{0,1,2\}^n$
	always passes the $2$-dependence test of $D$.
	We need to show that $w\in D$.
	We begin by noting that since $\im(f_i)\subseteq \{0,1\}$
	for all $i\in [t]$, we must have $w_i\in\{0,1\}$
	for $i\in [t]$.
	
	Next, fix $i,j\in [t]$ and let $k,\ell\in [n]$
	be the numbers for which $f_k=g_{i,j}$ and $f_\ell=g'_{i,j}$.
	Observe that $f_i=g_i$ and $f_k=g_{i,j}$ are dependent,
	and $f_j=g_j$ and $f_k=g_{i,j}$
	are dependent. These dependencies and our assumption on $w$
	imply that (i)   $w_k=0$ if and only if $w_i=0$,
	(ii) if $w_k=1$, then $w_j=0$, and (iii)
	if $w_k=2$, then $w_j=1$. This means that
	\begin{equation}\label{EQ:w_k_dep}
	w_k=
	\left\{
	\begin{array}{ll}
	0 & w_i = 0\\
	1 & w_i=1 \wedge w_j=0 \\
	2 & w_i=1 \wedge w_j=1.
	\end{array}
	\right.
	\end{equation}
	Arguing similarly with $\ell$ in place of $k$ gives
	\begin{equation}\label{EQ:w_l_dep}
	w_\ell=
	\left\{
	\begin{array}{ll}
	1 & w_i = 1\\
	0 & w_i=0 \wedge w_j=1 \\
	2 & w_i=0 \wedge w_j=0.
	\end{array}
	\right.
	\end{equation}
	
	At this point, if we could show that $w':=w_{1}\cdots w_t$ is
	in $D'$, then we  could conclude that $w\in D$.
	Indeed, if it were the case that $w'\in D'$, then there would
	be an $s\in S$ such that
	$w_i = g_i(s)= f_i(s)$ for all $i\in [t]$.
	Since for every	
	$p\in [n]-[t]$, the function $f_p$ is either $g_{i,j}$ or $g'_{i,j}$
	for some $i,j\in [t]$, by  the previous paragraph, we would have
	that 
	$w_p$ is   $g_{i,j}(s) $ or $g'_{i,j}(s)$, respectively,
	and 
	consequently $w_p=f_p(s)$. 
	
	Suppose for the sake of contradiction that
	$w'\notin D'$. 	
	By assumption, $D'$ admits a $3$-tester
	$T=(T^{(u)},(a_u,b_u,c_u),p_u)_{u\in U}$ with positive
	soundness. Then there is some $u\in U$ such that
	$T^{(u)}(w_{a_u},w_{b_u},w_{c_u})=0$.
	Write $(i,j,p)=(a_u,b_u,c_u)$
	and $(x_1,x_2,x_3)=(w_{a_u},w_{b_u},w_{c_u})$.
	By the pigeonhole principle, at least two of $x_1,x_2,x_3$
	are equal. We  reorder $(a_u,b_u,c_u)$
	to have $x_1=x_2$.
	Suppose that $x_1=x_2=1$.
	The fact that $T^{(i)}(v_i,v_j,v_p)=1$
	for all $v\in D'$ while $T^{(i)}(x_1,x_2,x_3)=0$
	means that 
	\[\im(f_i\times f_j\times f_p)\subseteq \{0,1\}^3-\{(x_1,x_2,x_3)\}
	=
	\{0,1\}^3-\{(1,1,x_3)\}.
	\]
	As a result, $\im(g_{i,j}\times f_p)\subseteq \{0,1,2\}\times\{0,1\}-\{(2,x_3)\}$.
	Writing $g_{i,j}=f_k$ with $k\in [n]$,
	this means that $f_k$ and $f_p$
	are dependent (as functions from $S$ to  $\{0,1,2\}$),
	and thus $(w_{k},w_p)\in  \{0,1,2\}\times\{0,1\}-\{(2,x_3)\}$.
	However, since $w_i=x_1=1$ and $w_j=x_2=1$, we
	have  $w_k=2$ by \eqref{EQ:w_k_dep}, and since $w_p=x_3$,
	it follows that $(w_k,w_p)=(2,x_3)$, a contradiction to
	our earlier conclusion.
	Similarly, when $x_1=x_2=0$,
	we reach a contradiction using \eqref{EQ:w_l_dep}.
	As all possibilities lead to contradiction,
	our assumption $w'\notin D'$ must have been false.
	This completes the proof.
\end{proof}

We are now ready to prove Theorem~\ref{TH:gen_long_code_2_test}.

\begin{proof}[Proof of Theorem~\ref{TH:gen_long_code_2_test}]
	We may assume without loss of generality that $\{0,1,2\}\subseteq \Delta$
	and that the functions
	$f_1,\dots,f_n:S\to \Delta$
	are numbered so that $f_1,\dots,f_t$ ($t=2^{|S|}$)
	are all the functions from $S$ to $\{0,1\}$.
	
	Suppose that $w\in \Delta^n$ always passes the $2$-dependence test.
	We need to show that $w\in L(S,\Delta)$.
	
	We begin by showing that the exists $s\in S$
	such that $w_i=f_i(s)$ for all $i\in [t]$.
	To that end, let $g_i=f_i$ for $i\in [t]$,
	and for all $i,j\in [t]$,
	define   $g_{i,j}, g'_{i,j}:S\to \{0,1,2\}$
	as in Lemma~\ref{LM:critical_lemma}.
	We may assume that there is $m\in\{t,\dots,n\}$
	such that $f_1,\dots,f_m$
	enumerate all the functions in the set $\{g_i,g_{i,j},g'_{i,j}\where i,j\in [t]\}$.
	Now, since $w$ passes the $2$-dependence,
	and since $\im(f_i)\subseteq \{0,1,2\}$ for all $i\in [m]$,
	we have $w_i\in \{0,1,2\}$ for all $i\in [m]$.
	We may therefore think of $w':=w_1\cdots w_m$
	as a word in $\{0,1,2\}^m$.
	Our assumption on $w$ implies that $w'$
	also passes the $2$-dependence test
	for the code $D(\{f_i\}_{i=1}^m)$,
	so by Lemma~\ref{LM:critical_lemma} and Proposition~\ref{PR:long_code_3_test} (applied
	to $D(\{f_i\}_{i=1}^t)=L(S,\{0,1\})$),
	we have
	$w'\in D(\{f_i\}_{i=1}^m)$. This means in particular
	that there is $s\in S$ such
	that $w_i=f_i(s)$ for all $i\in [t]$ (and even for all
	$i\in [m]$).
	
	We finish by showing that $w_i=f_i(s)$ for all $i\in [n]$.
	Fix some $i\in [n]$. 
	For every $a\in \Delta$, let $g_a:\Delta\to \{0,1\}$
	be the function mapping $a$ to $1$ and all other elements
	of $\Delta$ to $0$.
	There is some $j\in [t]$ such that $f_j=g_a\circ f_i$.
	The dependency between $f_i$ and $f_j$
	and our assumption on $w$ now imply that
	$w_j=g_a(w_i)$ (cf.\ Example~\ref{EX:dep_tester}).
	Since $w_j=f_j(s)$, we get
	that $g_a(w_i)=f_j(s)=g_a(f_i(s))$.
	By the definition of $g_a$, this means
	that   $w_i=a$ if and only if $f_i(s)=a$.
	As this holds for every $a\in \Delta$,
	we conclude that $w_i=f_i(s)$.
\end{proof} 

\begin{remark}\label{RM:long_code_not_2_test}
	By contrast to
	Theorem~\ref{TH:gen_long_code_2_test},	
	the code $L(S,\{0,1\})\subseteq \{0,1\}^{n}$ ($n=2^{|S|}$)
	is not defined by $2$-letter constaints when $|S|>2$.
	To show this, it is enough to exhibit a word $w\in \{0,1\}^{n}$
	that always passes the $2$-dependence test and is not in $L(S,\{0,1\})$.
	Fix   distinct $x_1,x_2,x_3\in S$  and  
	define the word $w\in \{0,1\}^n$ by  
	\[
	w_i=\maj(f_i(x_1),f_i(x_2),f_i(x_3))\qquad\forall i\in [n],
	\]
	where $\maj(-)$ is the majority function.
	The word $w$ is not in $L(S,\{0,1\})$
	because for every $s\in S$, there is $i\in [n]$
	such that $f_i(s)\neq \maj(f_i(x_1),f_i(x_2),f_i(x_3))=w_i$.
	
	Let us now show that $w$ always passes the $2$-dependence test. 
	Suppose that $f_i,f_j:S\to\{0,1\}$ are dependent.
	We need to show that $(w_i,w_j)\neq (a,b)$ for every $(a,b)\in \{0,1\}^2-\im(f_i\times f_j)$. Fix such $(a,b)$.
	If $w_i\neq a$, then it is clear that $(w_i,w_j)\neq (a,b)$. 
	On the other hand, if $w_i=a$, then $ \maj(f_i(x_1),f_i(x_2),f_i(x_3))=a$, and so at least $2$
	of $f_i(x_1),f_i(x_2),f_i(x_3)$ equal $a$. Since $(a,b)\notin
	\im(f_i\times f_j)$, at least $2$ of  $f_j(x_1),f_j(x_2),f_j(x_3)$
	are different from $b$. This means that $w_j\neq b$ and again
	we get $(w_i,w_j)\neq (a,b)$, as required.

	When $|S|=2$, the long code $L(S,\{0,1\})$ is 
	$\{0101,0011\}\subseteq \{0,1\}^4$ (up to equivalence), and  it is easy to see	that it is defined by $2$-letter constaints.
\end{remark}

\section{Separable Testers}
\label{sec:sep}

In order to apply Theorem~\ref{TH:comp_of_codes_w_testers}
and change the alphabet of an LTC from $\Sigma$ to $\Delta$,
we need to have a suitable function $f:\Sigma\to \Delta^k$
such that the tester $T$ of our code is $f$-compatible
(Definition~\ref{DF:compatibility}).
In this section, we give a necessary and sufficient condition
on $T$ to be compatible with \emph{some} 
$f$, and moreover show that every tester can be replaced
with one that satisfies our condition and has proportional soundness.

\begin{dfn}
Let $C\subseteq \Sigma^n$ be a code and let $T=(T^{(i)},(a^{(i)}_1,\dots,a^{(i)}_q),p_i)_{i\in I}$ be a $q$-tester  for $C$.
Let $\Delta$ be another alphabet. We
say that the tester $T$ is \emph{$\Delta$-separable} if for every $i\in I$,
there are functions $g_1,\dots,g_q:\Sigma\to \Delta$ such that
$T^{(i)}:\Sigma^q\to \{0,1\}$ factors via the function
$g_1\times\dots\times g_q:\Sigma^q\to\Delta^q$,
i.e., there is $g:\Delta^q\to\{0,1\}$
such that $T^{(i)}=g\circ (g_1\times\dots\times g_q)$.

When $C$ is an $\F$-code, $T$ is $\F$-linear, and $\Delta$ is an $\F$-vector space, we say
that $T$ is \emph{linearly $\Delta$-separable} if $g_1,\dots,g_q$
can be taken to be $\F$-linear maps.
\end{dfn}

\begin{example}
	Suppose that $C\subseteq \Sigma^n$
	is an $\F$-code and 
	$T=(T^{(i)},(a^{(i)}_1,\dots,a^{(i)}_q),p_i)_{i\in I}$ is
	$q$-tester such that  
	each $T^{(i)}$ checks
	a single linear constraint involving at most $q$ letters.
	Then $T$ is linearly $\F$-separable.
	Indeed, by assumption,
	for every $i\in I$,
	there is a linear map $\vphi_i : \Sigma^q\to \F$
	such that for every $w\in \Sigma^q$,
	we have $T^{(i)}(w)=1$ if and only if $\vphi_i(w)=0$.
	For such $\vphi_i$, 
	there are linear functionals
	$g_1,\dots,g_q:\Sigma\to \F$
	such that $\vphi_i(w)=g_1(w_1)+\dots+g_q(w_q)$.
	In particular, we can determine from $g_1(w_1),\dots,g_q(w_q)$
	the value of $T^{(i)}(w)$, meaning that $T^{(i)}$
	factors via $g_1\times\dots\times g_q:\Sigma^q\to\F^q$.
\end{example}

We now show that $\Delta$-separability is a necessary and sufficient
condition for a tester $T$ to be $f$-compatible with some
$f:\Sigma\to \Delta^k$.

\begin{prp}\label{PR:sep_nec_and_suff}
	Let $C\subseteq \Sigma^n$ be a code,
	let $T$ be a $q$-tester for $C$,
	and let $\Delta$ be another alphabet.
	\begin{enumerate}[label=(\roman*)]
		\item  If there is a function
		$f:\Sigma\to \Delta^k$ such that $T$ is $f$-compatible,
		then $T$ is $\Delta$-separable.
		\item If $T$ is $\Delta$-separable, then there
		is $k\in\N$ and an injective function $f:\Sigma\to \Delta^k$
		such that $T$ is $f$-compatible.
		In fact, $f$ can be constructed as follows: Let 
		$k = |\Delta^\Sigma|$, let $f_1,\dots,f_k$
		be all the functions from $\Sigma$ to $\Delta$,
		and take $f(w)=(f_1(w),\dots,f_k(w))$.
	\end{enumerate}
\end{prp}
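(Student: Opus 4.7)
The proof plan is quite direct, since both directions amount to unpacking definitions.

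For part (i), I would begin with a tester $T = (T^{(i)}, (a_1^{(i)},\dots,a_q^{(i)}), p_i)_{i \in I}$ that is $f$-compatible for some $f : \Sigma \to \Delta^k$. By Definition~\ref{DF:compatibility}, for each $i \in I$, there are $b_1,\dots,b_q \in [k]$ and $g : \Delta^q \to \{0,1\}$ with $T^{(i)}(w_1,\dots,w_q) = g(f_{b_1}(w_1),\dots,f_{b_q}(w_q))$. The candidate separating functions simply are $g_\ell := f_{b_\ell} : \Sigma \to \Delta$, and then $T^{(i)} = g \circ (g_1 \times \dots \times g_q)$ by construction. So $T$ is $\Delta$-separable.

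For part (ii), suppose $T$ is $\Delta$-separable, so for each $i \in I$ we have functions $g_1^{(i)}, \dots, g_q^{(i)} : \Sigma \to \Delta$ and $g^{(i)} : \Delta^q \to \{0,1\}$ such that $T^{(i)} = g^{(i)} \circ (g_1^{(i)} \times \dots \times g_q^{(i)})$. Set $k = |\Delta^\Sigma|$, enumerate all functions $f_1,\dots,f_k : \Sigma \to \Delta$, and let $f(w) = (f_1(w), \dots, f_k(w))$. Since every $g_\ell^{(i)}$ appears among the $f_j$, there exists $b_\ell^{(i)} \in [k]$ with $g_\ell^{(i)} = f_{b_\ell^{(i)}}$, and then the identity $T^{(i)}(w_1,\dots,w_q) = g^{(i)}(f_{b_1^{(i)}}(w_1), \dots, f_{b_q^{(i)}}(w_q))$ is exactly the definition of $f$-compatibility.

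The one additional step is verifying injectivity of $f$, for which I need $|\Delta| \geq 2$ (which is built into the definition of an alphabet). Given distinct $w, w' \in \Sigma$, the function $f_j : \Sigma \to \Delta$ taking $w$ to one element of $\Delta$ and everything else to another will distinguish $f(w)$ from $f(w')$, so $f$ is injective.

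The argument has no real obstacle; the only subtlety is that part (ii) makes a universal choice of $f$ that does not depend on $i \in I$, which works precisely because enumerating \emph{all} functions $\Sigma \to \Delta$ guarantees every separator $g_\ell^{(i)}$ occurs as some coordinate of $f$.
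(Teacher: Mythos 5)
Your proof is correct and follows essentially the same unpacking of Definition~\ref{DF:compatibility} as the paper: in (i) you take $g_\ell = f_{b_\ell}$, and in (ii) you match each separator $g_\ell^{(i)}$ to some coordinate $f_{b_\ell^{(i)}}$ of the universal $f$. The explicit check that $f$ is injective (using $|\Delta|\geq 2$) is a small point the paper leaves implicit, and your argument for it is fine.
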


\begin{proof}
	Write $T=(T^{(i)},(a^{(i)}_1,\dots,a^{(i)}_q),p_i)_{i\in I}$.

	(i) 
	For $j\in [k]$, let $f_j:\Sigma\to \Delta$
	denote the $j$-th component of $f$.
	Let $i\in I$.
	Our assumption that $T$ is $f$-compatible
	means that there exist a function  $g :\Delta^q\to \{0,1\}$
	and $b_1,\dots,b_q\in [k]$
	such that for every $w_1,\dots,w_q\in \Sigma$, 
	we have $T^{(i)}(w_1,\dots,w_q)=g (f_{b_1}(w_1),\dots,f_{b_q}(w_q))$.
	This is equivalent to saying that
	$T^{(i)}=g\circ (f_{b_1}\times \dots\times f_{b_q})$,
	so $T^{(i)}$ factors via $f_{b_1}\times \dots\times f_{b_q}:\Sigma^q\to\Delta^q$, and we have shown that $T$ is $\Delta$-separable.
	
	(ii) Define $f$ as in the statement.
	We need to show that $T$ is $f$-compatible.
	Let $i\in I$.
	Then there are functions $g_1,\dots,g_q:\Sigma\to \Delta$
	and a function $g:\Delta^q\to \{0,1\}$
	such that $g\circ (g_1\times \dots\times g_q)=T^{(i)}$.
	By the construction of $f$,
	there are $b_1,\dots,b_q\in [k]$
	such that $g_\ell= f_{b_\ell}$ for all $\ell\in [q]$.
	Then $g\circ (f_{b_1}\times \dots\times f_{b_q})=T^{(i)}$,
	or rather,
	$T^{(i)}(w_1,\dots,w_q)=g(f_{b_1}(w_1),\dots,f_{b_q}(w_q))$
	for all $w\in\Sigma^q$. This is exactly what we need to show.
\end{proof}

\begin{prp}\label{PR:sep_nec_and_suff_linear}
	Let $C\subseteq \Sigma^n$ be an $\F$-code,
	let $T$ be an $\F$-linear  $q$-tester for $T$,
	and let $\Delta$ be a  nonzero $\F$-vector space.
	\begin{enumerate}[label=(\roman*)]
		\item  If there is an $\F$-linear function
		$f:\Sigma\to \Delta^k$ such that $T$ is $f$-compatible,
		then $T$ is linearly $\Delta$-separable.
		\item If $T$ is linearly $\Delta$-separable, then there
		is $k\in\N$ and an injective function $f:\Sigma\to \Delta^k$
		such that $T$ is $f$-compatible.
		In fact, $f$ can be constructed as follows: Let 
		$k = |\Hom_{\F}(\Sigma,\Delta)|$, let $f_1,\dots,f_k$
		be all the linear functions from $\Sigma$ to $\Delta$,
		and take $f(w)=(f_1(w),\dots,f_k(w))$.
	\end{enumerate}
\end{prp}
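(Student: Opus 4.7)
The plan is to mirror the proof of Proposition~\ref{PR:sep_nec_and_suff} verbatim, adjusting only for the linearity requirements. The statement is simply the linear analogue, so the underlying combinatorial logic is the same; the two substantive additions are (a) observing that coordinates of an $\F$-linear map are $\F$-linear, and (b) proving that the specific $f$ built from all linear maps $\Sigma\to\Delta$ is injective, which requires that $\Delta$ be nonzero.

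For part (i), I would write $T=(T^{(i)},(a^{(i)}_1,\dots,a^{(i)}_q),p_i)_{i\in I}$ and, for each $j\in[k]$, denote the $j$-th component of $f:\Sigma\to\Delta^k$ by $f_j:\Sigma\to\Delta$. Since $f$ is $\F$-linear, each $f_j$ is $\F$-linear. Fixing $i\in I$, $f$-compatibility supplies $b_1,\dots,b_q\in[k]$ and $g:\Delta^q\to\{0,1\}$ with
\[
T^{(i)}=g\circ(f_{b_1}\times\dots\times f_{b_q}).
\]
Setting $g_\ell:=f_{b_\ell}$ yields $\F$-linear maps $g_1,\dots,g_q:\Sigma\to\Delta$ through which $T^{(i)}$ factors, witnessing that $T$ is linearly $\Delta$-separable.

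For part (ii), I would construct $f$ as specified in the statement and first verify that $f$ is injective. Because $f$ is $\F$-linear, it suffices to show $\ker f=0$; equivalently, that for every $0\neq v\in\Sigma$ there is an $\F$-linear map $\Sigma\to\Delta$ not vanishing on $v$. Since $v\neq0$ and $\Sigma$ is finite-dimensional, one can choose a linear functional $\vphi:\Sigma\to\F$ with $\vphi(v)\neq0$; since $\Delta\neq0$, one can choose $0\neq u\in\Delta$. The map $x\mapsto \vphi(x)u$ is then an $\F$-linear map $\Sigma\to\Delta$ not vanishing on $v$, hence appears as some $f_j$, so $f(v)\neq0$ and $f$ is injective. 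Then the proof of $f$-compatibility proceeds exactly as in Proposition~\ref{PR:sep_nec_and_suff}(ii): given $i\in I$, linear $\Delta$-separability provides $\F$-linear $g_1,\dots,g_q:\Sigma\to\Delta$ and $g:\Delta^q\to\{0,1\}$ with $T^{(i)}=g\circ(g_1\times\dots\times g_q)$, and by construction there exist $b_1,\dots,b_q\in[k]$ with $g_\ell=f_{b_\ell}$, yielding $T^{(i)}(w_1,\dots,w_q)=g(f_{b_1}(w_1),\dots,f_{b_q}(w_q))$.

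There is no real obstacle here; the only place where one must be slightly careful is the injectivity argument, which uses the hypothesis $\Delta\neq0$ in an essential way (if $\Delta=0$, the only $\F$-linear map $\Sigma\to\Delta$ is zero, and the construction collapses). All other steps are mechanical translations of the proof of Proposition~\ref{PR:sep_nec_and_suff}.
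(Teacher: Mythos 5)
Your proposal is correct and follows the same route the paper indicates (the paper simply says the proof is ``completely analogous'' to Proposition~\ref{PR:sep_nec_and_suff}). Your explicit verification that $f$ is injective --- using $\Delta\neq 0$ to produce, for each $0\neq v\in\Sigma$, a linear map $x\mapsto\vphi(x)u$ with $\vphi(v)\neq 0$ and $u\neq 0$ --- is a useful detail the paper leaves implicit and is where the hypothesis $\Delta\neq 0$ is genuinely needed.
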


\begin{proof}
	This is completely analogous to the proof of
	Proposition~\ref{PR:sep_nec_and_suff}.
\end{proof}

\begin{remark}\label{RM:image}
	In Proposition~\ref{PR:sep_nec_and_suff}(ii), the image
	of $f:\Sigma \to \Delta^k$ is the generalized  long code
	$L(\Sigma,\Delta)$ of \S\ref{subsec:gen_long_code}.
	In Proposition~\ref{PR:sep_nec_and_suff_linear}(ii), the image
	of $f:\Sigma \to \Delta^k$ is  
	the generalized Hadamard code $H(\Sigma,\Delta)$ 
	of \S\ref{subsec:gen_Hadamard}.
\end{remark}

Next, we show that every code admitting a $q$-tester
also admits a $\Delta$-separable
$q$-tester with soundness that is proportional to that
of the original tester.
In particular, when given an LTC, we can always assume that its
tester is $\Delta$-separable.

\begin{thm}\label{TH:reduction_to_sep}
	Let $C\subseteq \Sigma^n$ be a code,
	let $T$ be a possibly adaptive $q$-tester for $C$ with soundness $\mu>0$,
	and let $\Delta$ be another alphabet. 
	Then $C$ admits a $\Delta$-separable tester $T'$  with soundness $\frac{\mu}{|\Sigma|^q}$.
\end{thm}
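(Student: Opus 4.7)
The plan is to refine $T$ by breaking each sub-test $T^{(i)}\colon\Sigma^q\to\{0,1\}$ into $|\Sigma|^q$ even simpler sub-tests, each of which only checks whether the queried $q$-tuple equals one specific ``bad'' tuple $s\in\Sigma^q$. The point is that a sub-test of the form ``reject iff the queried tuple equals $s$'' is trivially $\Delta$-separable, since one only needs functions $g_j\colon\Sigma\to\Delta$ that distinguish $s_j$ from everything else, and this requires only two distinct elements of $\Delta$, which always exist.

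Concretely, writing $T=(T^{(i)},(a^{(i)}_1,\dots,a^{(i)}_q),p_i)_{i\in I}$, I would build $T'$ on the index set $I'=I\times\Sigma^q$ with distribution $p'_{(i,s)}=p_i/|\Sigma|^q$, keeping the query positions $(a^{(i)}_1,\dots,a^{(i)}_q)$ for the index $(i,s)$, and defining the acceptance predicate $T'^{(i,s)}\colon\Sigma^q\to\{0,1\}$ by
\[
T'^{(i,s)}(u)=\begin{cases}1 & \text{if } T^{(i)}(s)=1,\\ [u\neq s] & \text{if } T^{(i)}(s)=0.\end{cases}
\]

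Three things then need checking. First, $\Delta$-separability of each $T'^{(i,s)}$: the constant case is trivial, and in the nontrivial case I would fix $\delta_0\neq \delta_1\in\Delta$ and set $g_j(x)=\delta_1$ if $x=s_j$ and $g_j(x)=\delta_0$ otherwise, so that $u=s$ if and only if $(g_1(u_1),\dots,g_q(u_q))=(\delta_1,\dots,\delta_1)$, and $T'^{(i,s)}$ factors through $g_1\times\cdots\times g_q\colon\Sigma^q\to\Delta^q$. Second, $T'$ accepts every word in $C$: for $w\in C$ we have $T^{(i)}(w_{a^{(i)}_1},\dots,w_{a^{(i)}_q})=1$, so whenever $T^{(i)}(s)=0$ the queried tuple differs from $s$ and $T'^{(i,s)}$ accepts.

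Finally, for soundness, a direct computation gives
\[
\Pr[T'(w)=0]=\sum_{i\in I}\frac{p_i}{|\Sigma|^q}\sum_{s\in\Sigma^q}[T^{(i)}(s)=0]\cdot[(w_{a^{(i)}_1},\dots,w_{a^{(i)}_q})=s],
\]
and the inner sum over $s$ collapses to $[T^{(i)}(w_{a^{(i)}_1},\dots,w_{a^{(i)}_q})=0]$, so $\Pr[T'(w)=0]=|\Sigma|^{-q}\cdot\Pr[T(w)=0]\geq \frac{\mu}{|\Sigma|^q}\cdot\dist(w,C)$. This yields exactly the claimed soundness.

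There is no serious obstacle here; the argument is essentially a bookkeeping refinement. The only conceptual point worth isolating is the observation that $\Delta$-separability is a very weak requirement on a single sub-test once one is willing to split an arbitrary Boolean predicate on $\Sigma^q$ into its ``atoms'', and that the loss in soundness is then exactly the number of atoms, i.e.\ $|\Sigma|^q$.
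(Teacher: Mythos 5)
Your construction of $T'$ is exactly the paper's: both refine the tester over $I\times\Sigma^q$ with the same decomposition, the same per-atom acceptance predicate, the same two-valued separating functions $g_j$, and the same soundness bookkeeping. The proposal is correct and essentially identical to the paper's proof.
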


This proof is a mild modification of a standard trick to turn $T$ into a non-adaptive
	$q$-tester. The idea is to guess   what will be  the $q$ letters
	that the adaptive tester $T$ would see. Based on this guess,
	we can   determine which
	coordinates $T$ would read, and then read letters in those positions.
	If the letters did match our guess, we return what $T$ would return, and
	if our guess was wrong, we accept the word.

\begin{proof}
	Let $I$ be the set of random seeds for $T$
	(say, $I=\{0,1\}^r$ if $T$ flips $r$ coins).
	We define a non-adaptive tester  $T'$ as follows:
	Given $w\in \Sigma^n$, choose   $u=(u_1,\dots,u_q) \in \Sigma^q$ 
	and $i\in I$ uniformly
	at random. 
	Let $a^{(i,u)}_1,\dots,a^{(i,u)}_q$ be the coordinates
	that $T$ would have read on the seed $i$
	if the letters it would query in these positions were $u_1,u_2,\dots$,
	and let $c$ be the output of $T$ in that situation.
	Now query $w_{a^{(i,u)}_1},\dots,w_{a^{(i,u)}_q}$.
	If this sequence is different from $u_1,\dots,u_q$,
	then $w$ is accepted. Otherwise, $T'$ returns $c$.

	It is straightforward
	to see that $\Prob(T'(w)=0)\geq \frac{1}{|\Sigma^q|}\Prob(T(w)=0)$,
	so $T'$
	has soundness $\frac{\mu}{|\Sigma|^q}$.
	It remains to show that $T'$
	is $\Delta$-separable.
	To that end,
	note that
	\[
	T' = \Circs{T^{(i,u)},(a^{(i,u)}_1,\dots,a^{(i,u)}_q),{\textstyle\frac{1}{|I|\cdot|\Sigma|^q}}}_{(i,u)\in I\times \Sigma^q},
	\]
	where $T^{(i,u)}:\Sigma^q\to \{0,1\}$ 
	a function that returns $1$ for every $u'\in \Sigma^q -\{u\}$
	and for $u$ returns   what $T$ would have returned
	on the random seed $i$ if it would read the letters
	$u_1,u_2,\dots$ in positions $a^{(i,u)}_1,a^{(i,u)}_2,\dots$.
	
	Fix $(i,u)\in I\times \Sigma^q$.
	We need to show that $T^{(i,u)}=g\circ (g_1\times\dots\times g_q)$
	for some $g_1,\dots,g_q:\Sigma\to \Delta$
	and $g:\Delta^q\to \{0,1\}$.
	Without loss of generality,
	we may assume that $0,1\in\Delta$.
	If $T^{(i,u)}(u)=1$,
	then $T^{(i,u)}$ is identically
	$1$ and we can take $g_1,\dots,g_q,g$
	to be the  functions which map everything to $1$.
	Otherwise,   $T^{(i,u)}(u)=0$ and
	$T^{(i,u)}(w)=1$ for all  $w\in \Sigma^q-\{u\}$.
	In this case, define $g_\ell:\Sigma\to \Delta$
	by
	\[
	g_\ell(w)=
	\left\{
	\begin{array}{ll}
	0 & w_\ell = u_\ell \\
	1 & w_\ell \neq u_\ell
	\end{array}
	\right.
	\]
	and let $g:\Delta^q\to\{0,1\}$
	be the function which maps $(0,\dots,0)$ to $0$
	and all other elements to $1$.
	It is straightforward
	to check that $T^{(i,u)}=g\circ (g_1\times\dots\times g_q)$,
	so $T'$ is indeed $\Delta$-separable.
\end{proof}

\begin{thm}\label{TH:reduction_to_sep_lin}
	Let $C\subseteq \Sigma^n$ be an $\F$-code,
	let $T$ be an $\F$-linear $q$-tester for $C$ with soundness $\mu>0$,
	and let $\Delta$ be a nonzero $\F$-vector space. 
	Then $C$ admits a linearly $\Delta$-separable tester $T'$  with soundness $ \Ceil{\frac{q \dim \Sigma}{\dim \Delta}}^{-1}\mu$.
\end{thm}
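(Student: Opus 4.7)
The plan is to generalise the proof of Theorem~\ref{TH:reduction_to_sep} while respecting linearity. The naive analogue --- guessing the value of $(w_{a_1^{(i)}},\dots,w_{a_q^{(i)}})$ uniformly from $\Sigma^q$ --- destroys $\F$-linearity, so a different decomposition strategy is needed: I will break each constraint $T^{(i)}$ into smaller $\F$-linear sub-constraints whose value spaces fit inside $\Delta$, and then sample one of them.

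Write $T=(T^{(i)},(a^{(i)}_1,\dots,a^{(i)}_q),p_i)_{i\in I}$. Because $T$ is $\F$-linear, each $T^{(i)}$ corresponds to a subspace $V_i\subseteq \Sigma^q$ and accepts $u$ iff $u\in V_i$. Setting $c_i:=\codim_\F V_i\leq q\dim\Sigma$, choose an $\F$-linear surjection $\psi_i:\Sigma^q\to \F^{c_i}$ with $\ker\psi_i=V_i$. Set $m:=\dim\Delta$ and $N_i:=\Ceil{c_i/m}$, and partition $\psi_i$ as $\psi_i=(\psi_i^{(1)},\dots,\psi_i^{(N_i)})$, where each $\psi_i^{(j)}:\Sigma^q\to\F^{m_{i,j}}$ has $m_{i,j}\le m$. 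Observe that $u\in V_i$ iff $\psi_i^{(j)}(u)=0$ for all $j$.

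Next, for each $i,j$ fix an $\F$-linear injection $\iota_{i,j}:\F^{m_{i,j}}\embeds \Delta$, and decompose $\iota_{i,j}\circ\psi_i^{(j)}:\Sigma^q\to\Delta$ (which is $\F$-linear) coordinate-wise as $\sum_{\ell=1}^q g_{i,j,\ell}(w_\ell)$, where each $g_{i,j,\ell}:\Sigma\to\Delta$ is $\F$-linear. Define $T'^{(i,j)}(w_1,\dots,w_q):=g_{i,j}(g_{i,j,1}(w_1),\dots,g_{i,j,q}(w_q))$, where $g_{i,j}:\Delta^q\to\{0,1\}$ accepts $(d_1,\dots,d_q)$ iff $d_1+\dots+d_q=0$ in $\Delta$. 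By construction, $T'^{(i,j)}$ is linearly $\Delta$-separable, and $T'^{(i,j)}(u)=1$ iff $\psi_i^{(j)}(u)=0$. The tester $T'$ will sample $(i,j)$ with probability $p_i/N_i$ (so $i\sim p$ and then $j$ uniformly in $[N_i]$), read the $q$ letters $w_{a_1^{(i)}},\dots,w_{a_q^{(i)}}$, and return $T'^{(i,j)}$ of them. This is an $\F$-linear $q$-tester, and it accepts every codeword since on $C$ each $T^{(i)}$ accepts, hence each $\psi_i^{(j)}$ vanishes.

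For the soundness: if $T^{(i)}$ rejects some $u$, then $\psi_i(u)\ne 0$, so at least one $\psi_i^{(j)}(u)\ne 0$ and the random $j$ selects a rejecting sub-check with probability $\geq 1/N_i\geq 1/N$, where $N:=\Ceil{q\dim\Sigma/m}$. Therefore
\[
\Prob(T'(w)=0)\geq \frac{1}{N}\,\Prob(T(w)=0)\geq \frac{\mu}{N}\,\dist(w,C),
\]
which is the stated bound. The only point requiring care is verifying that the sub-checks are linearly $\Delta$-separable rather than merely $\F$-separable; this is why the batches are chosen of size $\leq m=\dim\Delta$, so that each $\psi_i^{(j)}$ factors through an $\F$-linear injection into $\Delta$. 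Apart from this packaging, the argument is linear-algebraic bookkeeping, with no substantive obstacle.
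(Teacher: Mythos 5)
Your proof is correct and follows essentially the same strategy as the paper: decompose the linear constraint $V_i\subseteq\Sigma^q$ into $\approx\Ceil{q\dim\Sigma/\dim\Delta}$ linear sub-constraints each taking values in $\Delta$, sample one uniformly, and observe that each sub-constraint is a sum of linear maps $\Sigma\to\Delta$ in the $q$ coordinates, hence linearly $\Delta$-separable. The only cosmetic difference is that you route through a surjection $\Sigma^q\onto\F^{c_i}$ (with $c_i=\codim V_i$) and then inject blocks into $\Delta$, whereas the paper constructs $h_i:\Sigma^q\to\Delta^m$ directly with kernel $V_i$; both yield the same soundness bound.
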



\begin{proof}
	Write $T=(T^{(i)},(a^{(i)}_1,\dots,a^{(i)}_q),p_i)_{i\in I}$
	and put $m=\ceil{\frac{q \dim\Sigma}{\dim \Delta}}$.
	Then for every $i\in I$, there is a subspace
	$V_i\subseteq \Sigma^q$
	such that $T^{(i)}(u)=1$ if and only if $u\in V_i$ ($u\in \Sigma^q$).
	By the definition of $m$, we have $\dim \Delta^m=m\dim \Delta\geq \dim \Sigma^q$.
	Thus, there is a linear map $h_i:\Sigma^q\to \Delta^m$ with
	kernel   $V_i$.
	For $j\in [m]$, denote by $h_{ij}:\Sigma^q\to \Delta$ the $j$-th component of $h_i$.
	
	We now define an $\F$-linear $q$-tester $T'$ as follows:
	Given $w\in \Sigma^n$, choose $i\in I$ according to $p$ and $j\in [m]$
	uniformly at random, and accept $w$ if and only if $h_{ij}(w_{a_1^{(i)}},\dots,w_{a_q^{(i)}})=0$.
	It is clear that $T'$ accepts every $w\in C$.
	Moreover, since $V_i=\ker h_i=\bigcap_{j=1}^m \ker h_{ij}$,
	whenever $T^{(i)}(w_{a_1^{(i)}},\dots,w_{a_q^{(i)}})=0$,  there
	is at least one $j\in[m]$ such that $h_{ij}(w_{a_1^{(i)}},\dots,w_{a_q^{(i)}})\neq 0$.
	This means that $\Prob(T'(w)=0)\geq \frac{1}{m}\Prob(T(w)=0)$,
	so $T'$ has soundness $\frac{\mu}{m}$.
	
	It remains to show that $T'$ is linearly $\Delta$-separable.
	To that end, observe that
	\[
	T'=(T^{(i,j)},(a^{(i,j)}_1,\dots,a^{(i,j)}_q),p_{(i,j)})_{(i,j)\in I\times [m]},
	\]
	where $p_{(i,j)}:=\frac{p_i}{m}$, $a^{(i,j)}_\ell :=a^{(i)}_\ell$
	($i\in I$, $j\in [m]$, $\ell\in[q]$),
	and $T^{(i,j)}(u)$ ($u\in\Sigma^q$)
	is defined to be $1$ if $u\in\ker h_{ij}$ and $0$ otherwise.
	Fix some $i\in I$ and $j\in [m]$.
	Then there are linear maps $g_1,\dots,g_q:\Sigma\to \Delta$
	such that $h_{ij}(u)=g_1(u_1)+\dots+g_q(u_q)$ for all $u\in\Sigma^q$.
	This means that we can recover $T^{(i,j)}(u)$ from
	$g_1(u_1),\dots,g_q(u_q)$, so $T^{(i,j)}$ factors
	via $g_1\times\dots\times g_q:\Sigma^q\to \Delta^q$.
\end{proof}

\section{Alphabet Reduction for Good  LTCs}
\label{sec:alpha_red}

We finally put the results of the previous sections together to prove
our alphabet reduction results for good LTCs.

We begin with results about $\F$-codes.
Theorem~\ref{TH:main_lin} is an immediate consequence of the following theorem
applied with $(q,c)=(2,2)$ and $(q,c)=(3,1)$.

\begin{thm}\label{TH:linear_alphabet_reduction}
	Let $q\geq 2$ be an integer and let $\Delta$
	be an $\F$-vector space of dimension $d>0$.
	We further fix an  auxiliary parameter $c\in\{1,\dots,d\}$ and require
	that $q\geq 3$ if $c=1$.
	Suppose that $C\subseteq \Sigma^n$ is an $\F$-code  with relative
	distance $\delta$ and rate $r$
	admitting an $\F$-linear $q$-query  tester
	$T$ with soundness $\mu>0$.
	Then there exists an $\F$-code $C'\subseteq \Delta^{|\Sigma|^cn}$
	with relative distance $(1-|\F|^{-c})\delta$
	and rate $\frac{1}{d}\cdot \frac{\dim\Sigma}{|\Sigma|^c}\cdot r$ admitting an $\F$-linear $q$-query tester
	$T'$ with soundness
	\[
	\frac{c\mu\nu}{(q|\Sigma|^c+1)c\mu+(q \dim\Sigma+c)\nu+c\mu\nu},
	\]
	where
	$\nu$ is the largest possible soundness
	of a $q$-query tester for the
	generalized Hadamard code $H(\Sigma,\F^c)$ (see \S\ref{subsec:gen_Hadamard}).
	Moreover, $\nu\geq  |\Sigma|^{-2c}$ if $c>1$ and $\nu\geq |\Sigma|^{-3}$
	if $c=1$.
\end{thm}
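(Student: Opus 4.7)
The plan is to reduce the alphabet of $C\subseteq \Sigma^n$ down to $\F^c$ by concatenating with the generalized Hadamard code $H(\Sigma,\F^c)$ of \S\ref{subsec:gen_Hadamard}, and then to enlarge the alphabet from $\F^c$ to $\Delta$ via Proposition~\ref{PR:alphabet_increase}. All the ingredients are in place: Theorem~\ref{TH:reduction_to_sep_lin} lets us make the tester linearly separable, Proposition~\ref{PR:sep_nec_and_suff_linear} then tells us which $f$ to concatenate along, and Theorem~\ref{TH:gen_Hadamard_2_test} (together with the $3$-testability of the ordinary Hadamard code when $c=1$) supplies a low-query tester for the inner code.

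More concretely, I would first apply Theorem~\ref{TH:reduction_to_sep_lin} with the auxiliary vector space taken to be $\F^c$ to replace $T$ with a linearly $\F^c$-separable $\F$-linear $q$-tester $\tilde T$ for $C$ of soundness at least
\[
\tilde\mu \;:=\; \Ceil{q\dim\Sigma/c}^{-1}\mu \;\geq\; \frac{c\mu}{q\dim\Sigma+c}.
\]
Next I would invoke Proposition~\ref{PR:sep_nec_and_suff_linear}(ii) with $\Delta$ taken to be $\F^c$ to obtain an injective $\F$-linear map $f\colon\Sigma\to(\F^c)^k$ with $k=|\Hom_\F(\Sigma,\F^c)|=|\Sigma|^c$ such that $\tilde T$ is $f$-compatible. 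By Remark~\ref{RM:image}, the image $D:=\im f$ is exactly $H(\Sigma,\F^c)$, which has relative distance $1-|\F|^{-c}$ and rate $(\dim\Sigma)/|\Sigma|^c$.

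Now I would equip $D$ with a low-query tester. If $c\geq 2$, Theorem~\ref{TH:gen_Hadamard_2_test} gives a $2$-dependence tester of positive soundness (so $q_D=2$ and $q\geq 2$ suffices); if $c=1$, the ordinary Hadamard code admits a $3$-dependence tester of positive soundness (so $q_D=3$, matching the hypothesis $q\geq 3$). In either case, the general lower bound $n^{-q_D}$ on the soundness of a $q_D$-dependence tester (recorded after Definition~\ref{DF:dep_tester}) yields $\nu\geq|\Sigma|^{-2c}$ when $c>1$ and $\nu\geq|\Sigma|^{-3}$ when $c=1$. Because $\tilde T$ is $f$-compatible and $q_D\leq q$, Theorem~\ref{TH:comp_of_codes_w_testers} together with Remark~\ref{RM:linearity_of_testers} produces an $\F$-linear $q$-tester for the concatenated $\F$-code $E:=C\circ_f D\subseteq(\F^c)^{n|\Sigma|^c}$ of soundness
\[
\mu_E \;=\; \frac{\tilde\mu\,\nu}{(qk+1)\tilde\mu+\nu} \;=\; \frac{c\mu\nu}{(q|\Sigma|^c+1)c\mu+(q\dim\Sigma+c)\nu},
\]
after inserting $\tilde\mu=c\mu/(q\dim\Sigma+c)$ and $k=|\Sigma|^c$. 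Proposition~\ref{PR:concat_rate_dist} gives $\delta(E)\geq(1-|\F|^{-c})\delta$ and $r(E)=\frac{\dim\Sigma}{|\Sigma|^c}\,r$.

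Finally, fix an $\F$-linear embedding $\F^c\hookrightarrow\Delta$ (possible because $c\leq d$) and view $E$ as $C'\subseteq\Delta^{n|\Sigma|^c}$. The normalized Hamming distance is unchanged, while the rate is multiplied by $\log|\F^c|/\log|\Delta|=c/d$, so the rate and distance match the claimed values. Proposition~\ref{PR:alphabet_increase} together with the linear version of Remark~\ref{RM:linearity_of_testers} then converts the tester for $E$ into an $\F$-linear $q$-tester for $C'$ of soundness $\mu_E/(\mu_E+1)$; a direct simplification of this fraction recovers the expression
\[
\frac{c\mu\nu}{(q|\Sigma|^c+1)c\mu+(q\dim\Sigma+c)\nu+c\mu\nu}
\]
in the statement. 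None of the individual steps is deep; the main work is parameter bookkeeping across the three transformations (separation, concatenation, alphabet embedding) and verifying that composing the bounds from Theorems~\ref{TH:reduction_to_sep_lin} and~\ref{TH:comp_of_codes_w_testers} and Proposition~\ref{PR:alphabet_increase} produces exactly the quoted soundness formula.
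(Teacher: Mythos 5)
Your proof is correct and follows essentially the same route as the paper's: separate via Theorem~\ref{TH:reduction_to_sep_lin}, concatenate with the generalized Hadamard code $H(\Sigma,\F^c)$ using Proposition~\ref{PR:sep_nec_and_suff_linear}(ii) and Theorem~\ref{TH:comp_of_codes_w_testers}, then enlarge the alphabet to $\Delta$ via Proposition~\ref{PR:alphabet_increase}. The only cosmetic difference is that the paper fixes a $c$-dimensional subspace $\Delta'\subseteq\Delta$ at the outset rather than working with $\F^c$ abstractly and embedding at the end, which changes nothing.
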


Our requirement about $q$ implies that the theorem cannot be applied
with $1$-dimensional $\Delta$ and $q=2$.

Note that increasing the auxiliary parameter $c$ decreases the rate of
$C'$, and also the ratio between the block-lengths of $C$ and $C'$.
On the other hand, we expect that increasing $c$ would increase $\nu$, and thus the soundness
of $T'$.
In particular, we believe that $\nu$ is much larger than the naive lower bound provided in the theorem.
For example, when $q=3$ and $c=1$, we have $\nu\geq \frac{1}{6}$
by \cite[Lem.~B.2]{Meir_2009_comb_const_of_LTCs};
see also \cite[Thm.~4.8]{Mittal_2025_low_soundness_testing} which shows that $\nu$
is close to $1$ when $c=1$.

\begin{proof}
	Fix a $c$-dimensional subspace $\Delta'\subseteq \Delta$.
	By Theorem~\ref{TH:reduction_to_sep_lin}, we may assume that
	$T$ is $\Delta'$-separable
	at the cost of reducing the soundness from $\mu$ to $\mu':=\frac{c}{q\dim \Sigma+c}\mu$
	(note that $\ceil{\frac{q\dim \Sigma}{c}}^{-1}\geq
	(\frac{q\dim \Sigma}{c}+1)^{-1}=\frac{c}{q\dim \Sigma+c}$).
	Now, by Proposition~\ref{PR:sep_nec_and_suff_linear}(ii),
	there is an $\F$-linear injective function $f:\Sigma\to (\Delta')^k$
	such that $T$ is $f$-compatible. Moreover, by Remark~\ref{RM:image},
	$D:=\im(f)$ is the generalized Hadamard code $H(\Sigma,\Delta')$, and so $k=|\Sigma|^c$.
	By the definition of $\nu$, the code $D\subseteq (\Delta')^k$
	admits an $\F$-linear $q$-tester with soundness $\nu$.
	Moreover, since $q\geq 2$ when $c>1$ (resp.\ $q\geq 3$ with $c=1$),
	$\nu$ is greater than or equal
	to the soundness of the $2$-dependence (resp.\ $3$-dependence) tester
	for $H(\Sigma,\Delta')$. 
	The latter is positive by  Theorem~\ref{TH:gen_Hadamard_2_test}
	(resp.\ the discussion in \S\ref{subsec:gen_Hadamard}),
	so $\nu\geq k^{-2}=|\Sigma|^{-2c}$ (resp.\ $\nu\geq k^{-3}=|\Sigma|^{-3}$).
	
	Consider the code $C'':=C\circ_f D\subseteq (\Delta')^{kn}$.
	By Proposition~\ref{PR:concat_rate_dist},
	it has relative distance $\delta(H(\Sigma,\Delta'))\delta(C)=(1-|\Delta'|^{-1})\delta$ and
	rate $r(H(\Sigma,\Delta')) r(C)=\frac{r\dim\Sigma}{|\Sigma|^c}$, and by Theorem~\ref{TH:comp_of_codes_w_testers},
	it has an $\F$-linear $q$-tester with soundness
	\[
	\mu'':=\frac{\mu'\nu}{(qk+1)\mu'+\nu}=\frac{c\mu\nu}{(q|\Sigma|^c+1)c\mu+(q\dim \Sigma+c)\nu}.
	\] 
	Finally, we take $C'$ to be $C''$ viewed as a code
	with alphabet $\Delta$ (instead of $\Delta'$).
	This multiplies the rate
	by $\frac{\dim\Delta'}{\dim \Delta}=\frac{c}{d}$,
	has no effect on the relative distance,
	and   by Proposition~\ref{PR:alphabet_increase}, 
	$C'$ still admits an  $\F$-linear $q$-tester with soundness
	$\frac{\mu''}{\mu''+1}=\frac{c\mu\nu}{(q|\Sigma|^c+1)c\mu+(q\dim \Sigma+c)\nu+c\mu\nu}$.
\end{proof}

Theorem~\ref{TH:main_non_lin} follows by applying following theorem
with $(q,c)=(2,3)$ and $(q,c)=(3,2)$.

\begin{thm}\label{TH:alphabet_red}
	Let $q\geq 2$ be an integer and let $\Delta$
	be an alphabet with $d $ elements.
	We further fix an  auxiliary parameter $c\in\{2,\dots,d\}$ and require
	that $q\geq 3$ if $c=2$.
	Suppose that $C\subseteq \Sigma^n$ is a code  with relative
	distance $\delta$ and rate $r$
	admitting a possibly adaptive $q$-query  tester
	$T$ with soundness $\mu>0$.
	Then there exists a code $C'\subseteq \Delta^{c^{|\Sigma|}n}$
	with relative distance $(1-c^{-1})\delta$
	and rate $\frac{\log_d|\Sigma|}{c^{|\Sigma|}}\cdot r$ admitting a $q$-query tester
	$T'$ with soundness
	\[
	\frac{\mu\nu}{(qc^{|\Sigma|}+1)\mu+ |\Sigma|^q\nu+\mu\nu}	,
	\]
	where
	$\nu$ is the largest possible soundness
	of a $q$-query tester for the
	generalized long code $L(\Sigma,\{1,\dots,c\})$ (see \S\ref{subsec:gen_long_code}).
	Moreover, $\nu\geq c^{-2|\Sigma|}$ if $c>2$ and $\nu\geq 2^{-3|\Sigma|}$
	if $c=2$. 
\end{thm}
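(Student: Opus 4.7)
The plan is to mirror the proof of Theorem~\ref{TH:linear_alphabet_reduction} using the non-linear versions of each ingredient: replace $\Delta'$ by a $c$-element subset of $\Delta$, use the generalized \emph{long} code $L(\Sigma,\Delta')$ (instead of the generalized Hadamard code) as the inner code, and carefully track the soundness loss from each reduction. The overall pipeline is: (i) reduce the given tester $T$ to a $\Delta'$-separable one, (ii) extract an injective encoding $f:\Sigma\to(\Delta')^k$ via Proposition~\ref{PR:sep_nec_and_suff}(ii) whose image is $L(\Sigma,\Delta')$, (iii) concatenate $C$ with this inner code via Theorem~\ref{TH:comp_of_codes_w_testers}, and (iv) enlarge the alphabet from $\Delta'$ to $\Delta$ via Proposition~\ref{PR:alphabet_increase}.

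In more detail: Fix any $c$-element subset $\Delta'\subseteq \Delta$. First, apply Theorem~\ref{TH:reduction_to_sep} to $T$ and the alphabet $\Delta'$ to obtain a $\Delta'$-separable $q$-tester $T_1$ for $C$ of soundness $\mu_1:=\mu/|\Sigma|^q$. By Proposition~\ref{PR:sep_nec_and_suff}(ii), there is an injective $f:\Sigma\to(\Delta')^k$ with $k=|\Delta'|^{|\Sigma|}=c^{|\Sigma|}$ such that $T_1$ is $f$-compatible, and by Remark~\ref{RM:image} the image $D:=\im(f)\subseteq(\Delta')^k$ equals the generalized long code $L(\Sigma,\Delta')$. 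When $c>2$, Theorem~\ref{TH:gen_long_code_2_test} combined with Proposition~\ref{PR:univ_depen_tester} guarantees that the $2$-dependence tester of $D$ has positive soundness, and since tuples are chosen uniformly from $[k]^2$, this soundness is at least $k^{-2}=c^{-2|\Sigma|}$; viewing the $2$-tester as a $q$-tester gives $\nu\geq c^{-2|\Sigma|}$. When $c=2$ we instead invoke Proposition~\ref{PR:long_code_3_test}, which provides a $3$-tester (usable since $q\geq 3$ is assumed), and the same uniform-sampling argument yields $\nu\geq 2^{-3|\Sigma|}$. In either case $D$ admits a $q$-tester of soundness $\nu$. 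Then Theorem~\ref{TH:comp_of_codes_w_testers} applied to $C\circ_f D\subseteq(\Delta')^{kn}$ (with $q_C=q_D=q$) gives a $q$-tester of soundness
\[
\mu_2:=\frac{\mu_1\nu}{(qk+1)\mu_1+\nu}=\frac{\mu\nu}{(qc^{|\Sigma|}+1)\mu+|\Sigma|^q\nu},
\]
after substituting $\mu_1=\mu/|\Sigma|^q$ and $k=c^{|\Sigma|}$.

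Finally, enlarge the alphabet from $\Delta'$ to $\Delta$: by Proposition~\ref{PR:alphabet_increase} the resulting code $C'\subseteq\Delta^{kn}$ still admits a $q$-tester of soundness $\mu_2/(\mu_2+1)$, which simplifies exactly to the expression displayed in the theorem. The distance and rate accounting is routine: Proposition~\ref{PR:concat_rate_dist} with $\delta(D)=1-c^{-1}$ and $r(D)=\log_c|\Sigma|/c^{|\Sigma|}$ gives $\delta(C\circ_f D)\geq(1-c^{-1})\delta$ and $r(C\circ_f D)=r\cdot\log_c|\Sigma|/c^{|\Sigma|}$ over the alphabet $\Delta'$; the alphabet change from $\Delta'$ (of size $c$) to $\Delta$ (of size $d$) preserves relative distance and multiplies the rate by $\log c/\log d$, yielding the claimed rate $r\cdot\log_d|\Sigma|/c^{|\Sigma|}$. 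The main obstacle is not any single step but the bifurcation at $c=2$: there the generalized long code is not $2$-testable (Remark~\ref{RM:long_code_not_2_test}), which is exactly why the hypothesis $q\geq 3$ must be imposed, and one must be careful to use Proposition~\ref{PR:long_code_3_test} rather than Theorem~\ref{TH:gen_long_code_2_test} to produce $\nu$, together with the slightly weaker lower bound $\nu\geq 2^{-3|\Sigma|}$. All other manipulations are bookkeeping of the constants produced by the preceding lemmas.
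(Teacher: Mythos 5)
Your proposal is correct and takes essentially the same route as the paper: reduce to a $\Delta'$-separable tester via Theorem~\ref{TH:reduction_to_sep} at cost $|\Sigma|^{-q}$, use Proposition~\ref{PR:sep_nec_and_suff}(ii) with image the generalized long code $L(\Sigma,\Delta')$, bound $\nu$ via the $2$- or $3$-dependence tester according to whether $c>2$ or $c=2$, concatenate via Theorem~\ref{TH:comp_of_codes_w_testers}, and pass from $\Delta'$ to $\Delta$ via Proposition~\ref{PR:alphabet_increase}. Your arithmetic for the soundness, distance, and rate all check out against the stated formulas.
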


Our requirement about $q$ implies that the theorem cannot be applied
when $q=2$ and $|\Delta|=2$.
As in Theorem~\ref{TH:linear_alphabet_reduction},
the parameter $c$ allows a trade-off between code-theoretic properties of $C'$ 
and the   soundness of the tester $T'$,
where
again, we believe that $\nu$ is much larger than the naive bounds given in the theorem.

\begin{proof}
	The proof strategy is similar to that of proving Theorem~\ref{TH:linear_alphabet_reduction}.
	
	Let $\Delta'$ be a subset of $\Delta$ having $c$ elements.
	We use Theorem~\ref{TH:reduction_to_sep}
	to replace $T$ with a $\Delta'$-separable tester with soundness
	$\mu':=\frac{\mu}{|\Sigma|^q}$.
	By Proposition~\ref{PR:sep_nec_and_suff}(ii),
	there is an   injective function $f:\Sigma\to (\Delta')^k$
	such that $T$ is $f$-compatible, and by Remark~\ref{RM:image},
	it can be chosen so that $D:=\im(f)$ is the generalized long code $L(\Sigma,\Delta')$.
	In particular,  $k=c^{|\Sigma|}$.
	By Theorem~\ref{TH:gen_long_code_2_test} 
	(resp.\ Proposition~\ref{PR:long_code_3_test}),
	when $c>2$ (resp.\ $c=2$), the $2$-dependence (resp.\ $3$-dependence) for $D$
	has positive soundness.
	As a result, when $c>2$ (resp.\ $c=2$), $D$ has a $q$-tester
	with soundness $\nu\geq k^{-2}$ (resp.\ $\nu\geq k^{-3}$).

	The proof is finished as in Theorem~\ref{TH:linear_alphabet_reduction}
	by taking $C'$ to be the code $C\circ_f D\subseteq (\Delta')^{kn}$
	and viewing it as a code inside $\Delta^{kn}$.
\end{proof}

Next, we note that the parameters of the code $C'$ and the tester $T'$
promised by  Theorem~\ref{TH:alphabet_red}
can be improved if $C$ is assumed to be an $\F_2$-code and $T$ is $\F_2$-linear.
This is the content of the following theorem, which we state only in the case
$|\Delta|=3$ for simplicity.

\begin{thm}\label{TH:alphabet_red_semilin}
	Suppose that $C\subseteq \Sigma^n$ is an $\F_2$-code  with relative
	distance $\delta$ and rate $r$
	admitting an $\F_2$-linear $q$-query  tester
	$T$ with soundness $\mu>0$.
	Then there exists a code $C'\subseteq \{0,1,2\}^{(2|\Sigma|^2+|\Sigma|)n}$
	with relative distance $\frac{\delta}{2|\Sigma|^2+|\Sigma|}$
	and rate $\frac{\log_3|\Sigma|}{2|\Sigma|^2+|\Sigma|}\cdot r$ admitting a $q$-query tester
	$T'$ with soundness
	\[
	\frac{\mu\nu}{(2q|\Sigma|^2+q|\Sigma|+1)\mu+(q\dim \Sigma)\nu}	
	\]
	where
	$\nu\geq (2|\Sigma|^2+|\Sigma|)^{-2}$ is the maximal possible soundness 
	of a $2$-query tester for a certain code $D\subseteq\{0,1,2\}^{2|\Sigma|^2+|\Sigma|}$ specified in the proof. 
\end{thm}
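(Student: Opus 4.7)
The plan is to follow the strategy of Theorem~\ref{TH:alphabet_red}, but to replace the generalized long code $L(\Sigma,\{0,1,2\})$ (whose block length is exponential in $|\Sigma|$) with a carefully chosen inner code of block length $k:=2|\Sigma|^2+|\Sigma|$, built from the dual space of $\Sigma$ together with the auxiliary functions supplied by Lemma~\ref{LM:critical_lemma}. The $\F_2$-linearity of $T$ is what makes this possible: only $\F_2$-linear coordinates of $f$ are needed to witness compatibility, and the dual space of $\Sigma$ supplies only $|\Sigma|$ of them.

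First, I would apply Theorem~\ref{TH:reduction_to_sep_lin} with $\Delta=\F_2$ to replace $T$ by a linearly $\F_2$-separable $q$-tester $\tilde T$ for $C$ with soundness at least $\mu/(q\dim\Sigma)$. Next, letting $g_1,\dots,g_t$ (with $t=|\Sigma|$) enumerate the $\F_2$-linear maps $\Sigma\to\F_2\subseteq\{0,1,2\}$, I would form for each pair $(i,j)\in[t]^2$ the auxiliary functions $g_{i,j}$ and $g'_{i,j}:\Sigma\to\{0,1,2\}$ defined in Lemma~\ref{LM:critical_lemma}. Listing the resulting $k=t+2t^2=2|\Sigma|^2+|\Sigma|$ functions as $f_1,\dots,f_k$ with $f_j=g_j$ for $j\in[t]$, the map $f:\Sigma\to\{0,1,2\}^k$, $f(v)=(f_1(v),\dots,f_k(v))$, is injective since the first $t$ coordinates alone separate points of $\Sigma$. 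Setting $D:=f(\Sigma)$, we have $|D|=|\Sigma|$, hence $r(D)=\log_3|\Sigma|/k$, and two distinct codewords of $D$ differ in at least one of the first $t$ coordinates, giving $\delta(D)\geq 1/k$.

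To apply Theorem~\ref{TH:comp_of_codes_w_testers}, I must (a)~exhibit a $q$-query tester for $D$ with positive soundness and (b)~verify that $\tilde T$ is $f$-compatible. For (a): the code $D(\{g_i\}_{i=1}^t)$ is (equivalent to) the Hadamard code $H(\Sigma,\F_2)$, which is defined by $3$-letter constraints as discussed in \S\ref{subsec:gen_Hadamard}; Lemma~\ref{LM:critical_lemma} then yields that $D$ is defined by $2$-letter constraints, and by Proposition~\ref{PR:univ_depen_tester} together with the $n^{-q}$ estimate following Definition~\ref{DF:dep_tester}, its $2$-dependence tester has soundness $\nu\geq k^{-2}$. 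For (b): since $\tilde T$ is linearly $\F_2$-separable, each local constraint factors through a $q$-tuple of $\F_2$-linear maps $\Sigma\to\F_2$, and every such map lies among $g_1,\dots,g_t=f_1,\dots,f_t$, so $\tilde T$ is $f$-compatible.

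Applying Theorem~\ref{TH:comp_of_codes_w_testers} with $\mu_C=\mu/(q\dim\Sigma)$ and $\mu_D=\nu$ then yields a $q$-tester for $C':=C\circ_f D\subseteq\{0,1,2\}^{kn}$ with soundness
\[
\frac{(\mu/(q\dim\Sigma))\,\nu}{(qk+1)(\mu/(q\dim\Sigma))+\nu}=\frac{\mu\nu}{(qk+1)\mu+(q\dim\Sigma)\nu},
\]
matching the claim upon substituting $k=2|\Sigma|^2+|\Sigma|$. Since $\Delta=\{0,1,2\}$ is already the target alphabet, no invocation of Proposition~\ref{PR:alphabet_increase} is needed, which is why the denominator has no additional $\mu\nu$ term. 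The rate and distance bounds for $C'$ follow from Proposition~\ref{PR:concat_rate_dist} and the estimates on $r(D),\delta(D)$ above. I do not anticipate any serious obstacle beyond parameter bookkeeping; the conceptual crux — the clean separation of roles played by the $\F_2$-linear coordinates of $f$ (for compatibility with $\tilde T$) and by the auxiliary coordinates $g_{i,j},g'_{i,j}$ (for $2$-letter testability of $D$ via Lemma~\ref{LM:critical_lemma}) — is exactly what drives the polynomial, rather than exponential, dependence on $|\Sigma|$.
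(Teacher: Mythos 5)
Your proposal is correct and follows essentially the same path as the paper's proof: separate the tester over $\F_2$ via Theorem~\ref{TH:reduction_to_sep_lin}, take the inner code built from the $\F_2$-linear functionals together with the auxiliary functions $g_{i,j},g'_{i,j}$ of Lemma~\ref{LM:critical_lemma}, establish $2$-letter testability of $D$ via that lemma, and apply Theorem~\ref{TH:comp_of_codes_w_testers}. The only cosmetic difference is that the paper routes $f$-compatibility through Proposition~\ref{PR:sep_nec_and_suff_linear}(ii) followed by Lemma~\ref{LM:compatibility_extension}, whereas you inline the same argument directly (and the paper also explicitly pads the list $f_1,\dots,f_\ell$ with zero functions to reach length exactly $k$, a bookkeeping point your write-up glosses over but which does not affect the bounds).
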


We will need the following easy lemma.

\begin{lem}\label{LM:compatibility_extension}
	Let $C\subseteq \Sigma^n$
	be a code with a $q$-tester $T$.
	Let $\Delta$ be an alphabet and let $\Delta'$ be another alphabet
	containing $\Delta$.
	Let $f_1,\dots,f_k:\Sigma\to \Delta $
	and let $f'_1,\dots,f'_{k'}:\Sigma\to \Delta' $.
	Define $f:\Sigma\to \Delta^k$ by $f(a)=(f_1(a),\dots,f_k(a))$
	and define  $f':\Sigma\to (\Delta')^{k'}$ similarly.
	Suppose that for every $i\in [k]$, there is $i'\in [k']$
	such that $f'_{i'}=f_i$ as functions from $\Sigma$ to $\Delta'$.
	If $T$ is $f$-compatible, then it is $f'$-compatible.
\end{lem}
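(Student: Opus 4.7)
The plan is straightforward: unfold both definitions of compatibility and transfer a witness for $f$ to one for $f'$ index-by-index, using that $\Delta \subseteq \Delta'$ only adds ``unused'' values to the codomain.

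Fix an index $i \in I$, and let $T = (T^{(i)},(a^{(i)}_1,\dots,a^{(i)}_q),p_i)_{i\in I}$. By $f$-compatibility, there exist $b_1,\dots,b_q \in [k]$ and a function $g:\Delta^q \to \{0,1\}$ such that
\[
T^{(i)}(w_1,\dots,w_q) = g(f_{b_1}(w_1),\dots,f_{b_q}(w_q)) \qquad \forall (w_1,\dots,w_q)\in \Sigma^q.
\]
By hypothesis, for each $\ell \in [q]$ we can choose $b'_\ell \in [k']$ such that $f'_{b'_\ell} = f_{b_\ell}$ as functions $\Sigma \to \Delta'$. Since the image of $f_{b_\ell}$ is contained in $\Delta$, the tuple $(f'_{b'_1}(w_1),\dots,f'_{b'_q}(w_q))$ always lies in $\Delta^q \subseteq (\Delta')^q$.

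It remains to produce $g':(\Delta')^q \to \{0,1\}$ that witnesses $f'$-compatibility at the index $i$. Simply extend $g$ to $(\Delta')^q$ arbitrarily, for instance by setting $g'(u)=g(u)$ for $u\in \Delta^q$ and $g'(u)=1$ for $u \in (\Delta')^q \setminus \Delta^q$. Then for every $(w_1,\dots,w_q)\in \Sigma^q$,
\[
g'(f'_{b'_1}(w_1),\dots,f'_{b'_q}(w_q)) = g(f_{b_1}(w_1),\dots,f_{b_q}(w_q)) = T^{(i)}(w_1,\dots,w_q),
\]
which is exactly the required compatibility relation. Since this construction works uniformly for every $i \in I$, $T$ is $f'$-compatible.

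There is no real obstacle here; the only thing to check carefully is that the extension of $g$ is never actually queried outside $\Delta^q$, which is immediate from the inclusion $\im(f'_{b'_\ell}) = \im(f_{b_\ell}) \subseteq \Delta$.
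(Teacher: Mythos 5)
Your proof is correct and follows the same argument as the paper: unfold $f$-compatibility at each $i\in I$, replace each $f_{b_\ell}$ by the corresponding $f'_{b'_\ell}$, and extend $g$ arbitrarily to $(\Delta')^q$. The only difference is cosmetic: you spell out a particular extension of $g$ and explicitly note that the extended values are never queried, whereas the paper simply says ``extend $g$ in some way.''
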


\begin{proof}
	Write $T=(T^{(i)},(a_1^{(i)},\dots,a_q^{(i)}),p_i)_{i\in I}$.
	Let $i\in I$.
	Then there are $b_1,\dots,b_q\in [k]$ and $g:\Delta^q\to \{0,1\}$
	such that $T^{(i)}(w)=g(f_{b_1}(w_1),\dots, f_{b_q}(w_q))$ for all $w\in \Sigma^q$.
	By assumption, there are $b'_1,\dots,b'_q\in [k']$
	such that $f'_{b'_j}=f_{b_j}$ for all $j\in [q]$.
	Now, if we extend $g$ in some way to a function $g':\Delta'\to \{0,1\}$,
	we have $T^{(i)}(w)=g'(f'_{b'_1}(w_1),\dots,f'_{b'_q}(w_q))$ for all $w\in \Sigma^q$.
	This means that $T$ is $f'$-compatible.
\end{proof}

\begin{proof}[Proof of Theorem~\ref{TH:alphabet_red_semilin}]
	By Theorem~\ref{TH:reduction_to_sep_lin},
	we may replace  $T$ with a linearly $\F_2$-separable
	tester at the cost of reducing the soundness to 
	$\mu':=\frac{\mu}{q\dim\Sigma}$.
	
	Let $g_1,\dots,g_t$
	be all the linear $\F_2$-functions
	from $\Sigma$ to $\F_2$ (so $t=|\Sigma|$),
	and let $g:\Sigma\to \{0,1\}^t$ be defined by
	$g(a)=(g_1(a),\dots,g_t(a))$.
	By Proposition~\ref{PR:sep_nec_and_suff_linear}(ii),
	$T$ is $g$-compatible.
	Now define $f_1,\dots,f_\ell:\Sigma\to \{0,1,2\}$ as in Lemma~\ref{LM:critical_lemma} (with $\ell$ in place of $n$). By construction,
	$\ell\leq t+2t^2=:k$, and if $\ell<t+2t^2$ we define
	$f_{\ell+1},\dots,f_k:\Sigma\to\{0,1,2\}$ to be identically $0$.
	Now define $f:\Sigma\to\{0,1,2\}^k$ similarly to $g$.
	By Lemma~\ref{LM:compatibility_extension}, $T$
	is also $f$-compatible.
	
	Put $D=D(\{f_i\}_{i=1}^k)\subseteq \{0,1,2\}^k$ and let $C'=C\circ_f D\subseteq \{0,1,2\}^{kn}$.
	By Lemma~\ref{LM:critical_lemma}, $D\subseteq \{0,1,2\}^k$
	has a $2$-query tester with soundness $ k^{-2}=(t+2t^2)^{-2}$, so $\nu\geq (t+2t^2)^{-2}$.
	Furthermore, by Theorem~\ref{TH:comp_of_codes_w_testers},
	$C'\subseteq \{0,1,2\}^{kn}$ has a $2$-query tester with soundness
	\[
	\frac{\mu'\nu}{(qk+1)\mu'+\nu}=\frac{\mu\nu}{(2q|\Sigma|^2+q|\Sigma|+1)\mu+(q\dim \Sigma)\nu}.
	\]
	To finish, note that by Proposition~\ref{PR:concat_rate_dist},
	$\delta(C')\geq \delta(D)\delta(C)\geq \frac{\delta}{k}$
	and $r(C')= r(D)r(C)=\frac{\log_3| \Sigma|}{k}\cdot r$.
\end{proof}

\begin{remark}\label{RM:randomness_comp}
	By tracing the proofs of Theorems~\ref{TH:linear_alphabet_reduction}, \ref{TH:alphabet_red},
	\ref{TH:alphabet_red_semilin},
	one sees that if the tester $T$ of $C$ has randomness complexity $R$,
	then the promised tester $T'$ for $C'$ has  randomness complexity $\max\{R,\log_2 n\}+O(1)$,
	where the $O(1)$-factor depends on $\mu$, $|\Sigma|$, $|\Delta|$.
\end{remark}

\bibliographystyle{alpha}
\bibliography{MyBib_24_03}

\end{document}